\title{The Modal Logic of Abstraction Refinement}
\author{Jakob Piribauer, Vinzent Zschuppe}
\institute{Technische Universit\"at Dresden, Germany}
\tikzset{
->,
>={Stealth[length=2mm]},
node distance=2cm,
every state/.style={shape = circle,inner sep = 0pt,outer sep = 0pt,minimum size = 28pt,draw},
every loop/.style={min distance=6mm,looseness=8},
initial text=$ $
}
\pgfplotsset{width=10cm,compat=1.9}
    \newcolumntype{L}{>{\raggedright\arraybackslash}X}
    \newcolumntype{R}{>{\raggedleft\arraybackslash}X}
\newcommand{\rawdiaplus}{%
  \begin{tikzpicture}
    \useasboundingbox (-0.7ex, -0.9ex) rectangle (0.7ex, 0.9ex);
    \node (w) at (-0.7ex,0) {};
    \node (e) at (+0.7ex,0) {};
    \node (s) at (0,-0.9ex) {};
    \node (n) at (0,+0.9ex) {};
    \draw (n.center) -- (e.center) -- (s.center) -- (w.center) -- (n.center);
    \draw (n.center) -- (s.center);
    \draw (e.center) -- (w.center);
  \end{tikzpicture}}
\newsavebox{\diamondplusbox}
\savebox{\diamondplusbox}{\rawdiaplus}
\newcommand{\rawdiaminus}{%
  \begin{tikzpicture}
    \useasboundingbox (-0.7ex, -0.9ex) rectangle (0.7ex, 0.9ex);
    \node (w) at (-0.7ex,0) {};
    \node (e) at (+0.7ex,0) {};
    \node (s) at (0,-0.9ex) {};
    \node (n) at (0,+0.9ex) {};
    \draw (n.center) -- (e.center) -- (s.center) -- (w.center) -- (n.center);
    \draw (e.center) -- (w.center);
  \end{tikzpicture}}
\newsavebox{\diamondminusbox}
\savebox{\diamondminusbox}{\rawdiaminus}
\newcommand{\lbox}{\mathord{\Box}}
\newcommand{\ldia}{\mathord{\Diamond}}
\newcommand{\lnext}{\ensuremath{\mathord{\mathsf{X}}}}
\newcommand{\lglobally}{\ensuremath{\mathord{\mathsf{G}}}}
\newcommand{\lfinally}{\ensuremath{\mathord{\mathsf{F}}}}
\newcommand{\luntil}{\ensuremath{\mathbin{\mathsf{U}}}}
\newcommand{\neXt}{\lnext}
\newcommand{\globally}{\lglobally}
\newcommand{\finally}{\lfinally}
\newcommand{\until}{\luntil}
\newcommand{\imp}{\rightarrow}
\newcommand{\Sat}{\Vdash}
\newcommand{\nSat}{\nVdash}
\newcommand{\sat}{\vDash}
\newcommand{\MLARfin}{\mathsf{MLAR}^{\mathit{fin}}}
\newcommand{\MLARall}{\mathsf{MLAR}^{\mathit{all}}}
\newcommand{\MLAR}{\mathsf{MLAR}}
\newcommand{\button}{\beta}
\newcommand{\Button}{\mathbf{B}}
\newcommand{\buttonSet}{\Phi}
\newcommand{\switch}{\sigma}
\newcommand{\switchB}{{\switch^\Button}}
\newcommand{\switchSet}{\Psi}
\newcommand{\weakButton}{\lambda}
\newcommand{\weakButtonAlt}{\delta}
\newcommand{\GF}{G}
\newcommand{\GFW}{\cC}
\newcommand{\GFR}{\hookrightarrow}
\newcommand{\GFw}{c}
\newcommand{\GFv}{d}
\newcommand{\GFu}{e}
\newcommand{\abstracts}{\rightsquigarrow}
\newcommand{\powerset}[1]{\mathcal{P}(#1)}
\newcommand{\cA}{\mathcal{A}}
\newcommand{\cC}{\mathcal{C}}
\newcommand{\cF}{\mathcal{F}}
\newcommand{\cL}{\mathcal{L}}
\newcommand{\cP}{\mathcal{P}}
\newcommand{\cR}{\mathcal{R}}
\newcommand{\cS}{\mathcal{S}}
\newcommand{\cT}{\mathcal{T}}
\newcommand{\AP}{\mathsf{AP}}
\newcommand{\Paths}{\mathit{Paths}}
\newcommand{\CiteAppendix}[1]{}
\begin{document}

\maketitle              

\begin{abstract}
 Iterative abstraction refinement techniques are one of the most prominent paradigms for 
 the analysis and verification of systems with large or infinite state spaces.
 This paper investigates the changes of truth values of system properties expressible in computation tree logic (CTL) when abstractions of transition systems are refined.
 To this end, the paper utilizes modal logic by defining \emph{alethic modalities} expressing possibility and necessity on top of CTL:
 The modal operator $\lozenge $  is interpreted as ``there is a refinement, in which ...'' and $\Box$ is interpreted as ``in all refinements, ...''. 
 Upper and lower bounds for the resulting \emph{modal logics of abstraction refinement}  
  are provided for three scenarios:
 1) when considering all finite abstractions of a transition system, 2) when considering all abstractions of a transition system, and 3) when considering the class of all transition systems.
Furthermore, to prove these results,  generic techniques to obtain upper bounds of modal logics using novel types of so-called control statements are developed.
\end{abstract}

\noindent\textbf{Acknowledgments.}
This work was partly funded by  the DFG Grant 389792660 as part of
TRR 248 (Foundations of Perspicuous Software Systems) and
by the BMBF (Federal Ministry of Education and Research)
in DAAD project 57616814 (SECAI, School of Embedded and
Composite AI) as part of the program Konrad Zuse Schools
of Excellence in Artificial Intelligence.

\vspace{6pt}
\noindent\textbf{Related version.}
This is the extended version of a paper accepted for publication at FoSSaCS 2026.

\section{Introduction}

Verification techniques like model checking take a mathematical model of a system as well as a formal specification expressing the intended behavior and automatically 
verify whether the specification is met by the system model. A key challenge  in practical applications is the potentially huge size of the state space of system models (see, e.g., \cite{DBLP:conf/laser/ClarkeKNZ11}). 
One of the most important approaches to tackle this problem is the use of abstractions.

\vspace{2pt}
\noindent\textbf{Abstractions.}
In a nutshell, the idea of an abstraction is to group \emph{concrete} states of a system together into \emph{abstract} states.
For every transition between concrete states, the abstraction contains a transition between the corresponding abstract states (this is called an \emph{existential abstraction} in \cite{DBLP:reference/mc/DamsG18}).
The more states are grouped together the more details of the original system are lost. A refinement of an abstraction is obtained by breaking up some of the abstract states into finer abstract states.
For an illustration, consider Figure \ref{fig:example}. The transition system $\cS$ increases or decreases $x$ by $1$ in each step. A possible abstraction $\cT_1$ and a refinement  $\cT_2$ of $\cT_1$ grouping together  states in a finer manner are depicted.

		  \begin{figure*}[t]
		  \begin{subfigure}[b]{0.8\textwidth}
\centering
    \hspace{20mm}\resizebox{.8\textwidth}{!}{%
      \begin{tikzpicture}[scale=1,auto,node distance=8mm,>=latex]
        \tikzstyle{round}=[thick,draw=black,circle]

        \node[ draw=black] (start) {$x=0$};
        \node[ left=6mm of start, draw=black] (l1) {$x=-1$};
         \node[ left=6mm of l1, draw=black] (l2) {$x=-2$};
          \node[ left=6mm of l2] (l3) {$\dots$};
          \node[ left=3mm of l3] (l4) {$\cS\colon$};
         
           \node[ right=6mm of start, draw=black] (r1) {$x=1$};
                      \node[ right=6mm of r1, draw=black] (r2) {$x=2$};

         \node[ right=6mm of r2] (r3) {$\dots$};
         
         \node[below=3mm of start] (init) {};
        
           \draw[color=black , very thick] (start) edge [bend right]  (r1);
               \draw[color=black , very thick] (start) edge [bend right]  (l1);
                       \draw[color=black , very thick] (l1) edge [bend right]  (l2);
                               \draw[color=black , very thick] (r1) edge [bend right]  (r2);
                                \draw[color=black , very thick] (r1) edge [bend right]  (start);
               \draw[color=black , very thick] (l1) edge [bend right]  (start);
                       \draw[color=black , very thick] (l2) edge [bend right]  (l1);
                               \draw[color=black , very thick] (r2) edge [bend right]  (r1);
                                  \draw[color=black , very thick] (l2) edge [bend right]  (l3);
                               \draw[color=black , very thick] (r2) edge [bend right]  (r3);
                                \draw[color=black , very thick] (l3) edge [bend right]  (l2);
                               \draw[color=black , very thick] (r3) edge [bend right]  (r2);
                               
                                \draw[color=black , very thick] (init) edge  (start);

      \end{tikzpicture}
    }
  \label{fig:conc}
  \end{subfigure}
		  \begin{subfigure}[b]{0.35\textwidth}
\centering
    \resizebox{.9\textwidth}{!}{%
      \begin{tikzpicture}[scale=1,auto,node distance=8mm,>=latex]
        \tikzstyle{round}=[thick,draw=black,circle]

        \node[draw=black] (start) {$x=0$};
        \node[left=6mm of start, draw=black] (l1) {$x<0$};
         \node[ left=3mm of l1, minimum size=12mm] (l2) {$\cT_1\colon$};
         
           \node[ right=6mm of start, draw=black] (r1) {$x>0$};
         
         \node[below=3mm of start] (init) {};
        
           \draw[color=black , very thick] (start) edge [bend right]  (r1);
               \draw[color=black , very thick] (start) edge [bend right]  (l1);
                       \draw[color=black , very thick] (l1) edge [loop below]  (l1);
                               \draw[color=black , very thick] (r1) edge [loop below]  (r1);
                                \draw[color=black , very thick] (r1) edge [bend right]  (start);
               \draw[color=black , very thick] (l1) edge [bend right]  (start);
                               
                                \draw[color=black , very thick] (init) edge  (start);

      \end{tikzpicture}
    }
  %
  \label{fig:T1}
  \end{subfigure}
  	  \begin{subfigure}[b]{0.6\textwidth}
\centering
    \resizebox{.9\textwidth}{!}{%
      \begin{tikzpicture}[scale=1,auto,node distance=8mm,>=latex]
        \tikzstyle{round}=[thick,draw=black,circle]

        \node[ draw=black] (start) {$x=0$};
        \node[ left=6mm of start, draw=black] (l1) {$x=-1$};
         \node[ left=6mm of l1, draw=black] (l2) {$x<-1$};
          \node[ left=3mm of l2,minimum size=16mm] (l3) {$\cT_2\colon$};
         
           \node[ right=6mm of start, draw=black] (r1) {$x=1$};
                      \node[ right=6mm of r1, draw=black] (r2) {$x>1$};

         
         \node[below=3mm of start] (init) {};
        
           \draw[color=black , very thick] (start) edge [bend right]  (r1);
               \draw[color=black , very thick] (start) edge [bend right]  (l1);
                       \draw[color=black , very thick] (l1) edge [bend right]  (l2);
                               \draw[color=black , very thick] (r1) edge [bend right]  (r2);
                                \draw[color=black , very thick] (r1) edge [bend right]  (start);
               \draw[color=black , very thick] (l1) edge [bend right]  (start);
                       \draw[color=black , very thick] (l2) edge [bend right]  (l1);
                               \draw[color=black , very thick] (r2) edge [bend right]  (r1);
                                \draw[color=black , very thick] (l2) edge [loop below]  (l2);
                                \draw[color=black , very thick] (r2) edge [loop below]  (r2);
                               
                                \draw[color=black , very thick] (init) edge  (start);

      \end{tikzpicture}
    }
  %
  \label{fig:T2}
  \end{subfigure}
\caption{A transition system $\cS$ and abstractions $\cT_1$ and $\cT_2$ where $\cT_2$ refines $\cT_1$.}
        \vspace{-12pt}
\label{fig:example}
\end{figure*}
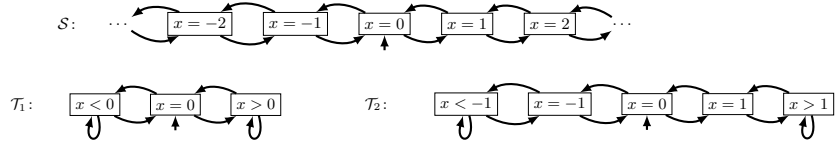

The set of executions in an abstraction  forms an over-approximation of the set of executions in the concrete system.
So, if all executions in the abstraction satisfy a  linear-time property $\varphi$, then  all executions in the underlying system satisfy $\varphi$; satisfaction of linear-time properties is \emph{preserved} under abstraction refinement.
If there are executions in an abstraction that do not satisfy $\varphi$, however, no immediate  conclusions about the underlying system are possible.

In contrast, the situation for
 branching-time logics such as \emph{computation tree logic} (CTL),  is different:
CTL uses existential and universal quantification over executions in a nested manner. Due to the existential quantification,
there is no general  preservation result for the satisfaction of CTL formulas $\Phi$ under refinement.
Consider for example the CTL formula $\Phi = \exists \neXt \exists \neXt \exists \neXt (x=0)$ stating that there is a sequence of three transitions such that afterwards the value of $x$ is $0$. This formula is true at the initial state of abstraction $\cT_1$ in Figure \ref{fig:example}. By refining the abstraction to $\cT_2$, we see that the satisfaction of $\Phi$ is not preserved.
Only for the universally quantified fragment ACTL of CTL,  satisfaction in an abstraction implies satisfaction in the underlying system in general \cite{DBLP:conf/cav/ClarkeGJLV00,DBLP:journals/jacm/ClarkeGJLV03}.

\vspace{2pt}
\noindent\textbf{Extending CTL with alethic modalities.}
To address  whether an abstraction permits conclusions about the underlying system, we propose extending CTL with \emph{alethic modalities} expressing possibility and necessity. For a CTL formula~$\Phi$, we let the modal operator $\lbox \Phi$ indicate that $\Phi$ is necessarily true, meaning it holds in all refinements of the current abstraction. Conversely, $\ldia\Phi$ expresses that $\Phi$ is possibly true, signifying that it holds in at least one refinement.
More formally, we use transition systems as system models and consider the relation  $\abstracts$, where $\cT_1\abstracts \cT_2$ means $\cT_2$ is a refinement of $\cT_1$, on the class of transition systems. 
In this way, we obtain a directed graph, called a \emph{Kripke frame}, with transition systems as \emph{worlds}.  The Kripke frame can be extended to a Kripke model by a valuation $V$ assigning to atomic propositions $p$ a set of worlds at which $p$ holds. Modal formulas are then  evaluated at a world: $\lbox \varphi$ holds at a  world if  $\varphi$ holds at all successors of the current node; $\lozenge \varphi$ holds if $\varphi$ holds at some successor.

As we aim to investigate how the satisfaction of branching-time properties evolves under abstraction refinement, 
we restrict the valuations    to range over sets of transition systems sharing a CTL-expressible property. We choose CTL as arguably the simplest prominent branching-time logic.  With this restriction, we obtain a \emph{general} Kripke frame.
The modal formulas valid on the general Kripke frame are those that hold at any world under any admissible  valuation.
These are the general principles according to which truth values of CTL-properties change under abstraction refinement.
 Informally,   propositional variables can now be understood as ``placeholders'' for CTL formulas.

\begin{example}
\label{ex:general_frame}
Consider the transition system $\cS$ depicted in \Cref{fig:general_kripke_frame}. The sets next to the states indicate the labels. We require abstractions to only group together states with the same label. An abstraction of $\cS$ hence might collapse some of the states $a_0,a_1,a_2$ as they are labelled with $a$.  If $a_0$ and $a_1$, or $a_0$ and $a_2$, are collapsed to one state, we obtain (up to isomorphism) the abstraction $\cT_1$.
Collapsing $a_1$ and $a_2$ leads to the abstraction $\cT_2$. Finally, collapsing all three states results in $\cT_3$. So, the worlds of the general Kripke frame  $\cA_\cS$ ($=\cF_{\cS}$)
consisting of all (finite) abstractions of $\cS$ are 
$\cS,\cT_1,\cT_2,\cT_3$.

Further, we observe that $\cS$ and $\cT_2$ are not distinguishable by CTL-formulas. Hence,  any valuation $V$ admissible on this general Kripke frame
satisfies the following: For each atomic proposition $p$, the set $V(p)$ contains either both $\cS$ and $\cT_2$ or neither $\cS$ nor $\cT_2$.
For example, a valuation might set $V(p)$ to the set of all transition systems that satisfy the CTL formula $\Phi=\exists \neXt b$ (i.e., whose initial states satisfy $\Phi$).
This results in $V(p) = \{\cT_1,\cT_3\}$ in the example.

\begin{figure*}[t]
\begin{center}
    \resizebox{.98\textwidth}{!}{%
        \begin{tikzpicture}[scale=1,auto,node distance=4mm,>=latex]
            \tikzstyle{rect}=[draw=white,rectangle]
            \node[initial left,label={[label distance=-3]90:{$\{a\}$}}]           (s) at ( 0, 1)  {$a_0$};
            \node[label={[label distance=-3]90:{$\{a\}$}}]                   (a1) at (2, 1.5)  {$a_1$};
            \node[label={[label distance=-3]90:{$\{a\}$}}]                   (b1) at (2, .5) {$a_2$};
            \node[label={[label distance=-3]90:{$\{b\}$}}]                   (c1) at (4, 1)      {$b$};
                        \draw
            (s)     edge[right]                                node{} (a1)
            (s)     edge[right]                                node{} (b1)
            (a1)    edge[right]                                node{} (c1)
            (b1)    edge[right]                                node{} (c1)
            (c1) edge[loop below] (c1)
                       ;  
                  
                       \draw [draw=black,fill = blue, fill opacity=0.1] (-1,2.5) rectangle (4.7,0);
                       \node (la) at (-.5,2) {$\cS$};


                        \node[initial left,label={[label distance=-3]90:{$\{a\}$}}]           (sp) at ( -7, -.5)  {$a_{02}'$};
            \node[label={[label distance=-3]90:{$\{a\}$}}]                   (a1p) at (-5, 0)  {$a_1'$};
            \node[label={[label distance=-3]90:{$\{b\}$}}]                   (c1p) at (-3, -.5)      {$b'$};
                        \draw
            (sp)     edge[right]                                node{} (a1p)
            (sp)     edge[right]                                node{} (c1p)
            (a1p)    edge[right]                                node{} (c1p)
            (c1p) edge[loop below] (c1p)
            (sp) edge[loop below] (sp)
                       ;  
                       
                       \draw [draw=black,fill = yellow, fill opacity=0.1] (-8,1) rectangle (-2.3,-1.5);
                       \node (la) at (-7.5,.5) {$\cT_1$};


                        \node[initial left,label={[label distance=-3]90:{$\{a\}$}}]           (sq) at ( 7, -.5)  {$a_0''$};
            \node[label={[label distance=-3]90:{$\{a\}$}}]                   (a1q) at (9, -.5)  {$a_{12}''$};
            \node[label={[label distance=-3]90:{$\{b\}$}}]                   (c1q) at (11, -.5)      {$b''$};
                        \draw
            (sq)     edge[right]                                node{} (a1q)
                        (a1q)    edge[right]                                node{} (c1q)
            (c1q) edge[loop below] (c1q)
                       ;  
                       
                                           \draw [draw=black,fill = blue, fill opacity=0.1] (6,1) rectangle (11.7,-1.5);
                       \node (la) at (6.5,.5) {$\cT_2$};


                                \node[initial left,label={[label distance=-3]90:{$\{a\}$}}]           (sr) at ( 1, -2)  {$a_{012}'''$};
            \node[label={[label distance=-3]90:{$\{b\}$}}]                   (c1r) at (3, -2)      {$b'''$};
                        \draw
            (sr)     edge[right]                                node{} (c1r)
                                   (c1r) edge[loop below] (c1r)
                                    (sr) edge[loop below] (sr)
                       ;     
                       
                               \draw [draw=black,fill = red, fill opacity=0.1] (-1,-.5) rectangle (4.7,-3);
                       \node (la) at (-.5,-1) {$\cT_3$};

                        \path[very thick, ->] 
                        (-1,-1.75) edge[line join=round,
decorate, decoration={
    zigzag,
    segment length=10,
    amplitude=1,post=lineto,
    post length=2pt
}] (-2.3,-1);
                                    \path[very thick, ->] 
                        (4.7,-1.75) edge[line join=round,
decorate, decoration={
    zigzag,
    segment length=10,
    amplitude=1,post=lineto,
    post length=2pt
}] (6,-1)
                        
                        ;
                                                            \path[very thick, ->] 

                        (-2.3,0) edge[line join=round,
decorate, decoration={
    zigzag,
    segment length=10,
    amplitude=1,post=lineto,
    post length=2pt
}] (-1,.7);
                                    \path[very thick, ->] 
                        (6,0) edge[line join=round,
decorate, decoration={
    zigzag,
    segment length=10,
    amplitude=1,post=lineto,
    post length=2pt
}] (4.7,.7)
                        
                        ;

        \end{tikzpicture}
    }
    \vspace{-12pt}
    \end{center}
    \caption{Example of the general frame $\cA_\cS$ ($=\cF_\cS$) resulting from transition system $\cS$ and---up to isomorphism---all its (finite) abstractions. The
    accessibility relation is  the transitive reflexive closure of the indicated refinement arrows $\abstracts$ between the worlds. 
    The transition systems $\cS$ and $\cT_2$ are not distinguishable by CTL as indicated by the filling of the rectangles.}
    \vspace{-12pt}
    \label{fig:general_kripke_frame}
\end{figure*}
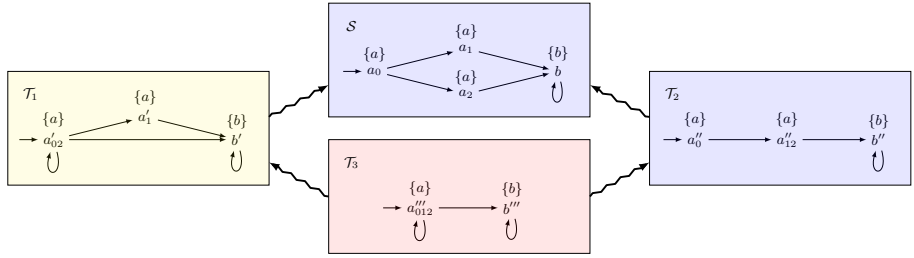

Now, e.g., the modal formula  $\Box \lozenge p \to  \lozenge \Box p$ is valid on the general frame in \Cref{fig:general_kripke_frame}:
No matter which CTL-state formula $\Phi$ corresponds to the valuation of $p$, either $\Phi$ is not true in (the initial state of) $\cS$ and so $p$ does not hold at the world $\cS$ making $\Box \lozenge p$ false everywhere, or $\cS$ satisfies $\Phi$ and so $p$ holds at the world $\cS$ and so  $\lozenge \Box p$ is true at every world of the frame.
\end{example}
\vspace{-6pt}

In this paper, we focus on determining the modal tautologies that are always true no matter which CTL-expressible properties are plugged in for the atomic propositions.
We call the resulting modal logics
\emph{modal logics of abstraction refinement (MLARs)}---which depend on the precise class of transition systems under consideration. %
 We  consider three cases:
\begin{enumerate}
\item$\MLARfin_{\cT}$ on the class $\cF_\cT$ of all finite abstractions of a transition system $\cT$,
\item $\MLARall_{\cT}$ on the class $\cA_\cT$ of all  abstractions of a transition system $\cT$, and 
\item $\MLAR$ on the class  of all transition systems.
\end{enumerate}
\begin{example}
The formula $\mathsf{(T)} = p\to \lozenge p$ belongs to all three MLARs.
The validity translates to ``For any transition system $\cT$ and any CTL-state formula $\Phi$: if $\cT\vDash \Phi$, then $\cT$ has a refinement $\cS$ with $\cS\vDash \Phi$.''
This is true  because $\cT\abstracts \cT$, i.e., $\cT$ is a refinement of itself, for any $\cT$.
The formula $p\to \Box p$ does not belong to any of the MLARs:
In \Cref{fig:example}, let the valuation of $p$ be the set of transition systems satisfying 
$\Phi = \exists \neXt \exists \neXt \exists \neXt (x=0)$. Then, $p$ holds at $\cT_1$, but $\Box p$ does not hold at $\cT_1$ as $\cT_2$ is a refinement of $\cT_1$ not satisfying $\Phi$.
\end{example}

\vspace{2pt}
\noindent\textbf{Contributions.}
We provide lower and upper bounds for the MLARs in all three cases.
The modal logics important for our main result are the well known logics  $\mathsf{S4.1}$, $\mathsf{S4.2}$ and $\mathsf{S4.2.1}$ as well as
the logic $\mathsf{S4FPF}$ denoting the modal logic of \emph{finite partial function posets}, which will be introduced in   Section \ref{sec:FPF}.
The axioms of  $\mathsf{S4.1}$, $\mathsf{S4.2}$, and $\mathsf{S4.2.1}$ are presented in
 Table \ref{tbl:axioms}.
 Our main results, also summarized in Table \ref{tbl:overview}, are as follows:
\begin{enumerate}
\item
For any $\cT$, we have $\mathsf{S4.2}\subseteq \MLARfin_{\cT}$. There is a $\cT$ with  $\mathsf{S4.2}= \MLARfin_{\cT}$.
\item
For any $\cT$, we have $\mathsf{S4.2.1}\subseteq \MLARall_{\cT}$. There is a $\cT$ with  $\mathsf{S4.2.1}= \MLARall_{\cT}$.
\item 
We have $\mathsf{S4.1}\subseteq \MLAR \subseteq \mathsf{S4.2.1}\cap \mathsf{S4FPF}$ where $\mathsf{S4FPF}$ denotes the modal logic of \emph{finite partial function posets} introduced in Section \ref{sec:FPF}.
\end{enumerate}
To prove the upper bounds, we make use of so-called \emph{control statements}, introduced in \cite{hamkins2008modal}. Control statements in our context are CTL-formulas 
that follow certain truth value patterns. E.g., a \emph{pure weak  button}\footnote{Weakness refers to the fact that the button might stay false forever, while pureness expresses that it stays true as soon as it is true for the first time.\vspace{-40pt}} is a CTL-formula that stays true in all refinements once it is true and a \emph{switch} can always be made true and be made false by further refinement.
While we  rely on the results of \cite{hamkins2008modal} to prove the upper bound $\mathsf{S4.2}$ using the existence of pure  buttons and switches, we introduce new types of control statements, namely:
\begin{enumerate}
\item \emph{restricted switches}   to prove the upper bound $\mathsf{S4.2.1}$ (\Cref{thm:S4.2.1}),
\item \emph{decisions} to prove the upper bound $\mathsf{S4FPF}$ (\Cref{thm:S4FPF}).
\end{enumerate}
 Restricted switches can be made true and false only as long as some pure button restricting the switches is not true yet. Decisions on the other hand are pairs of mutually exclusive pure weak buttons. So, by choosing refinements, one can \emph{decide} which of the pure weak buttons in the decision to make true forcing the other pure weak button to be false also in all subsequent refinements.
We show general results on how to prove these upper bounds using these new types of control statements. These results are applicable also for other classes of structures, relations between these structures, and languages describing properties of these structures and hence are of independent interest.

\begin{table}[t]
\begin{tabular}[h!]{ p{0.06\textwidth} p{0.35\textwidth} p{0.06\textwidth} p{0.38\textwidth} }
\hline
   $ \mathsf{(T)} $ & $=p \imp \ldia p$ &$ \mathsf{(4)} $ & $=\ldia \ldia p \imp \ldia p$\\
   $ \mathsf{(.2)}$ & $=\ldia \lbox p \imp \lbox \ldia p$ & $ \mathsf{(.1)}$ & $=\lbox \ldia p \imp \ldia \lbox p$ \\
   \hline
   $\mathsf{S4}$ & $=\mathsf{K+(T)+(4)}$ & $\mathsf{S4.2}$ & $=\mathsf{K+(T)+(4)+(.2)}$ \\ 
    $\mathsf{S4.1}$ & $=\mathsf{K+(T)+(4)+(.1)}$ & $\mathsf{S4.2.1}$ & $=\mathsf{K+(T)+(4)+(.2)+(.1)}$ \\
   \hline
    \end{tabular}
    \vspace{6pt}
\caption{Important  axioms and normal modal logics. For formulas $\alpha_1,\dots, \alpha_n$, we denote the smallest normal modal logic
containing $\alpha_1,\dots, \alpha_n$ by $\mathsf{K}+\alpha_1+\dots+\alpha_n$.}
\vspace{-18pt}
\label{tbl:axioms}
\end{table}

\noindent\textbf{Related work.}
Techniques on how to refine abstractions in case the current abstraction is too coarse
have been developed  for decades 
(see \cite{DBLP:reference/mc/DamsG18} for an overview).
In particular, seminal work on counter-example guided abstraction refinement (CEGAR) \cite{DBLP:conf/cav/ClarkeGJLV00,DBLP:journals/jacm/ClarkeGJLV03}, where abstractions are refined along executions $\pi$ violating a desirable property in order to potentially remove these counter-example executions from the abstraction, 
has inspired an ever-growing line of research 
(see, e.g., 
\cite{DBLP:conf/fmcad/ChauhanCKSVW02,DBLP:conf/cav/ClarkeGKS02,DBLP:conf/tacas/McMillanA03,DBLP:conf/cav/JainIGSW06}). 
Typically these approaches deal with linear-time properties of systems expressible in the universally quantified fragment ACTL of CTL.

\begin{table}[t]
\begin{tabularx}{\textwidth}{ l|l| l| L}
 class of transition systems & logic& lower bound & upper bound \\
\hline
$\cF_{\cT}$: all finite abstractions  &$\MLARfin_\cT$ & $\mathsf{S4.2}$ (Prop.~\ref{prop:lower_fin})& $\exists\cT$ with $\MLARfin_\cT = \mathsf{S4.2}$ \\[-3pt]
of  transition system $\cT$  & &&(Thm. \ref{thm:upper_fin}) 
\\
\hline
$\cA_{\cT}$: all abstractions of &   $\MLARall_\cT$ & $\mathsf{S4.2.1}$ (Prop.~\ref{prop:lower_all})&    $\exists\cT$ with $\MLARall_\cT = \mathsf{S4.2.1}$ 
\\[-3pt]
  transition system $\cT$ & & &(Thm. \ref{thm:upper_all})  \\
\hline
all transition systems &$\MLAR$ & $\mathsf{S4.1}$ (Prop.~\ref{prop:lower}) & $\subseteq \mathsf{S4.2.1}\cap \mathsf{S4FPF }$ 
(Thm.~\ref{thm:upper},~\ref{thm:upper_fpf})
\end{tabularx}
\caption{Overview of the   bounds on the modal logics of abstraction refinement.}
\label{tbl:overview}
\vspace{-18pt}
\end{table}

In work by Bozzelli et al. \cite{DBLP:journals/iandc/BozzelliDFHP14,DBLP:journals/tcs/BozzelliDP15},
the idea  to add a modality expressing truth in some/all refinements to a logic describing properties of Kripke structures is also employed. In this work, multi-agent Kripke structures modeling the epistemic state of agents are used. 
The notion of refinement, however, differs from our notion. Bozzelli et al. use the ``reverse'' of simulation as the notion of refinement meaning that a system has to simulate the systems refining it. This means that there is no restriction forbidding to remove transitions in the refinement. Further, there is---in contrast to our work---no restriction stating that states have to have successors and so there always is a ``greatest refinement'' without any transitions.
This also implies that their refinement relation is always directed, a.k.a, confluent. In contrast,  our refinement relation is not directed on the class of all transition systems. Consequently, the axiom $\mathsf{(.2)}$ is not part of $\MLAR$ in our case as shown by the fact that $\mathsf{(.2)}$ is not in $\mathsf{S4FPF}$. Only  restricted to the class of abstractions of a fixed transition system, our refinement relation is directed.

Furthermore, Bozelli et al. investigate the full logic obtained from adding the refinement modalities to a multi-modal logic, while we use CTL-definability to restrict the admissible valuations and investigate the resulting uni-modal logic only using the refinement modality. This leads to the fact that the logic Bozelli et al. is not a normal modal logic (beginning of section 5 in \cite{DBLP:journals/iandc/BozzelliDFHP14}), while we show that on the classes of finite or arbitrary abstractions of transition systems our modal logic is the normal modal logic $\mathsf{S4.2}$ or $\mathsf{S4.2.1}$, respectively.

Further, 
there are several  approaches aiming at preservation results for bran\-ching-time logics, which are orthogonal to our work.
Circumstances under which the preservation of satisfaction of  non-universal formulas can be guaranteed are investigated in
\cite{DBLP:journals/toplas/DamsGG97}.
Furthermore, 
 three-valued logics   evaluating formulas in an abstraction with truth values \emph{true}, \emph{false}, and \emph{don't know}, indicating what can be concluded about the underlying system,  have been studied (see, e.g., \cite{DBLP:conf/vmcai/GrumbergLLS05}).
In the same spirit, 
\emph{Kripke modal transition systems} (KMTS)   as abstractions  simultaneously maintain an over- and an under-approximation of the set of possible executions 
of a system by using \emph{may}- and \emph{must}-transitions (see, e.g., \cite{DBLP:conf/lics/LarsenT88,DBLP:conf/avmfss/Larsen89,DBLP:conf/esop/HuthJS01}). 

Conceptually, abstraction refinement shares similarities with
 the \emph{generalized model-checking problem} asking  whether there is a concretization  of a transition system, in which not all atomic propositions are specified, that satisfies a property \cite{Bruns2000,DBLP:conf/cav/GodefroidJ02,Godefroid2005,DBLP:journals/sttt/GodefroidP11}. 
Here, a concretization adds more information on the atomic propositions.
 There is, however, no direct relation to abstraction refinement.

The idea to use modal logic to describe the principles according to which the truth values of properties of structures can change under a model construction
was introduced in the context of set-theoretic forcing, a construction to extend models of set theory,  by Hamkins and L\"owe \cite{hamkins2008modal}. 
Subsequently, a series of work investigated the  modal logic of forcing and   of further relations between set-theoretic models  \cite{hamkins2015structural,DBLP:journals/jsyml/Yaar21,hamkins2013moving,piribauer2017modal,inamdar2016modal,hamkins2022modal}.
These ideas have been transferred to other mathematical areas such as graph theory \cite{hamkins2024modal} and group theory \cite{berger2023modal}.

\section{Preliminaries}

In the sequel, we introduce our notation. For  details on transition systems, abstractions, and temporal logics, see \cite{BK08}.
For  details on modal logic, see \cite{blackburn}.

\noindent
\textbf{Transition systems.} A \emph{transition system} is a tuple $\cT = (S,\rightarrow,I , \AP, L)$ where $S$ is a non-empty set of states, $\mathord{\rightarrow}\subseteq S\times S$ is a binary transition relation, $I \subseteq S$ is a set of initial states, 
$\AP$ is a set of atomic propositions, and $L\colon S \to 2^{\AP}$ is a  labeling function.
We require that there are no terminal states, i.e., for each $s\in S$, there is a $t\in S$ with $s\rightarrow t$.
A \emph{path} in a transition system is a sequence $s_0\,s_1\,s_2\,\dots \in S^\omega$ with  $s_i\rightarrow s_{i+1}$ for all $i \in \mathbb{N}$. We denote the set of paths starting in a state $s$ by 
$\Paths_{\cT}(s)$ or $\Paths(s)$ if $\cT$ is clear from context.
The \emph{trace} of a path $\pi = s_0\,s_1\,s_2\,\dots $ is the sequence $L(\pi) = L(s_0) \,L(s_1)\,L(s_2)\, \dots$.

\noindent
\textbf{Computation tree logic.} \emph{Computation tree logic (CTL)} is a branching-time logic whose syntax contains state and path formulas.
\emph{CTL state formulas}  over the set of atomic propositions $\AP$ are formed according to the  grammar
$
\Phi ::= \top \mid a \mid \Phi \land \Phi \mid \neg \Phi \mid \exists \varphi \mid \forall \varphi
$
where $a\in \AP$ is an atomic proposition and $\varphi$ is a \emph{CTL path formula} formed according to the  grammar
$
\varphi ::= \neXt \Phi \mid \Phi \until \Phi
$
where $\Phi$ represents CTL state formulas and $\neXt$ and $\until$ are the next-step and the until operator, respectively.
In a transition systems $\cT = (S,\rightarrow,I , \AP, L)$, the semantics of CTL state formulas over $\AP$ is given by the following recursive definition of the satisfaction relation for states $s\in S$:
\vspace{3pt}

\begin{tabular}[h!]{ p{0.12\textwidth} p{0.19\textwidth} p{0.12\textwidth} p{0.4\textwidth} }
$s\vDash \top$ &  always,   &   $s\vDash \Phi \land \Psi $& iff $s\vDash \Phi$  and  $s\vDash \Psi$,\\
$s\vDash a$ & iff $a\in L(s)$, & $s\vDash \exists \varphi $&  iff  $\pi \vDash \varphi$ for some $\pi \in \Paths(s)$,\\
$s\vDash \neg \Phi $ &  iff  not $s\vDash \Phi$, & $s\vDash \forall \varphi $&  iff  $\pi \vDash \varphi$ for all $\pi \in \Paths(s)$,
\end{tabular}
\vspace{2pt}

\noindent
with the satisfaction of path formulas $\varphi$ on paths $\pi = s_0\,s_1\,s_2\, \dots$    given by
\vspace{3pt}

\begin{tabular}[h!]{ p{0.2\textwidth} p{0.7\textwidth} }
$\pi \vDash \neXt \Phi $&  iff $s_1\vDash \Phi$, \\
$\pi \vDash \Phi \until \Psi $&  iff there is a $j\geq 0$ with $s_j \vDash \Psi$ and $s_i\vDash \Phi$ for all $i<j$.
\end{tabular}

\noindent
We use the usual Boolean abbreviations as well as the temporal abbreviations \emph{eventually} $\finally$ and \emph{always} $\globally$ defined as $\finally \Phi = \top \until \Phi$  as well as  $\exists \globally \Phi  = \neg \forall \finally \neg \Phi$ and 
$\forall \globally \Phi  = \neg \exists \finally \neg \Phi$. 
Further, we write $\forall \lnext^i \Phi$ as an abbreviation for $\forall \lnext \dots \forall \lnext \Phi$ where $\forall \lnext$ is repeated $i$ times.
We say that a transition system  $\cT$ satisfies a CTL-state formula $\Phi$ if all initial states of $\cT$ satisfy $\Phi$.

\noindent
\textbf{Abstractions and refinements.}
    Let $\mathcal{T}_i = (S_i, \rightarrow_i, I_i, \AP, L_i)$, $i=1,2$ be transition systems. We call $\mathcal{T}_1$ an \emph{abstraction} of $\mathcal{T}_2$ and $\mathcal{T}_2$ a \emph{refinement} of $\mathcal{T}_1$ if there is a surjective so-called \emph{abstraction function} $f: S_2 \rightarrow S_1$ s.t. the following conditions hold:
    \begin{enumerate}
     \item For all $s \in S_2$, $L_1(f(s))=L_2(s)$.
        \item $\rightarrow_1$ is the smallest relation satisfying that if $s \rightarrow_2 s^{\prime}$ then $f(s) \rightarrow_1 f(s^{\prime})$ for all $s,s^\prime \in S_2$.
        \item $I_1 = \{f(s) \mid s \in I_2\}.$
    \end{enumerate}
    We write  $\cT_1\abstracts \cT_2$ to denote that $\cT_2$ is a refinement of $\cT_1$. In this case, the set of traces in $\cT_1$ is a superset of the set of traces in $\cT_2$.
    The  function $f$ induces an equivalence relation on $S_2$ relating states  mapped to the same state in $S_1$.

    If $\mathcal{T}_1$ and $\mathcal{T}_2$ are  abstractions of some  transition system $\mathcal{T}_0$ with abstraction functions $f_{01}$ and $f_{02}$, we call $(\mathcal{T}_1,f_{01})$ an \emph{equivalence class preserving refinement} of $(\mathcal{T}_2,f_{02})$ if there is an abstraction function $f_{12}$ with $f_{02} = f_{12} \circ f_{01}$. We  write $(\mathcal{T}_2,f_{02})\abstracts^\equiv(\mathcal{T}_1,f_{01})$ in this case. Equivalence-class preserving  means that the  refinement and the corresponding abstraction function can be obtained by further splitting 
    up the equivalence classes in $\cT_2$ induced by $f_{02}$.

\noindent
\textbf{Modal logic.}
Formulas of \emph{modal logic} over propositional variables in a countably infinite set $\Pi$ are given by the grammar
$
\varphi ::= \top \mid p \mid \varphi \land \varphi \mid \neg \varphi \mid \lozenge \varphi
$
where $p\in \Pi$ is a propositional variable. We use the usual Boolean abbreviations as well as the abbreviation $\Box \varphi = \neg \lozenge \neg \varphi$.

A \emph{normal modal logic} is a set $\Lambda$ of modal formulas that contains all propositional tautologies as well as the K-axiom $\Box(p\to q) \to (\Box p \to \Box q)$
and that is closed under modus ponens (if $\varphi \in \Lambda$ and $\varphi \to \psi \in \Lambda$, then $\psi\in \Lambda$), uniform substitution (if $\varphi\in \Lambda$, then also
$\varphi[p/\chi]\in \Lambda$ where $\varphi[p/\chi]$ is obtained from $\varphi$ by replacing every occurrence of $p\in \Pi$ in $\varphi$ with the modal formula $\chi$), 
and generalization (if $\varphi \in \Lambda$, then $\Box \varphi \in \Lambda$).

A \emph{Kripke frame} is a pair $F=(W,R)$ where
$W$ is a non-empty set and 
$R\subseteq W\times W$ is a binary relation on $W$.
We call $W$ the \emph{domain} or \emph{universe} of the frame $F$. 
The elements of $W$ are called \emph{states} or \emph{worlds}. The relation $R$ is called the \emph{accessibility relation}. If we have $(w,v)\in R$ for $w,v\in W$, 
we also write $Rwv$ and say that $w$ \emph{can access} or \emph{sees} $v$, and that $v$ is a successor of $w$.
We call a world $r$ that can see every world of $F$ a \emph{root} of $F$.
A \emph{Kripke model} is a tuple $M=(W,R,V)$ where $(W,R)$ is a frame and $V\colon \Pi \to 2^W$ is a valuation.
We recursively define what it means for a modal formula $\varphi$ to be \emph{true} or \emph{satisfied} at a world $w\in W$, which we denote by $M,w\Vdash \varphi$:
\vspace{6pt}
\begin{tabular}[h!]{ p{0.14\textwidth} p{0.18\textwidth} p{0.18\textwidth} p{0.49\textwidth} }
$M,w\Vdash \top $&  always, &  $M,w\Vdash \psi \land \theta $&  iff  $M,w\Vdash \psi$ and $M,w\Vdash \theta$, \\
$M,w\Vdash p $& iff $ w\in V(p)$, & $M,w\Vdash \lozenge\psi $&  iff there is $v$ with $Rwv$ and $M,v\Vdash \psi$, \\
$M,w\Vdash \neg \psi $&  iff  $ M,w\not\Vdash \psi$.&&
\end{tabular}

\noindent
A modal formula $\varphi$ is \emph{valid at a state $w$} of a frame $F=(W,R)$, written $F,w\Vdash \varphi$, 
if we have $(F,V),w\Vdash \varphi$ for all valuations $V\colon \AP \to 2^W$ on $F$.
The formula $\varphi$ is \emph{valid} on $F$, written $F\Vdash \varphi$ if it is valid at all states $w\in W$.
A set of formulas $\Gamma$ is valid on a frame $F$, written $F\Vdash \Gamma$, if all $\varphi\in \Gamma$ are valid on $F$.
A (set of) formula(s) $\varphi$ is valid on a class of frames $\mathcal{C}$ if $\varphi$ is valid on all frames $F\in \mathcal{C}$.
Finally, we call the set of formulas that is valid on a class of frames $\mathcal{C}$ the \emph{logic} of $\mathcal{C}$, denoted by 
$\Lambda_{\mathcal{C}}$, which always is a normal modal logic.

\noindent
\textbf{General Kripke frames.}
A \emph{general Kripke frame} is a tuple $G=(W,R,\cA)$ where $(W,R)$ is a Kripke frame and $\cA\subseteq 2^W$ is a collection of \emph{admissible valuations}
that is closed under finite unions and complements.
A modal formula $\varphi$ is valid on $G$ at a world $w\in W$, written $G,w\Vdash \varphi$, if $M,w\Vdash \varphi$ for all Kripke models $M=(W,R,V)$ with $V(p)\in \cA$ for all $p\in \Pi$. It is valid on $G$, written $G\Vdash \varphi$ if it is valid at all $w\in W$. We denote the set of formulas valid at $w$ in $G$ by $\Lambda_{G}(w)$ and the set of formulas that are valid on $G$ by $\Lambda_G$.
Note that $\Lambda_G\supseteq \Lambda_{(W,R)}$ where $ \Lambda_{(W,R)}$ is the set of formulas valid on the frame $(W,R)$ because validity on a general frame is weaker than on a Kripke frame.

\section{Modal logic of abstraction refinement (MLAR)}

The idea behind  MLARs is to interpret $\ldia$ and $\lbox$ evaluated at a transition system $\cT$ as ``there is a refinement of $\cT$, in which ...'' and ``in all refinements of $\cT$, ...'', respectively. For atomic propositions, we allow valuations that correspond to CTL-expressible properties. For a class of transition systems $\GFW$, 
we obtain a general frame in this way: The states are the elements of $\GFW$, the accessibility relation is the refinement relation $\abstracts$ and the admissible valuations are defined by CTL-formulas.
We call the set of modal formulas valid on such a  general frame a MLAR.
In the sequel, we  define three types of MLARs on different classes of transition systems and afterwards provide  lower bounds on these logics.

\subsection{Defining the modal logic}

First, we will describe a general set-up for the modal logic of a relation $\GFR$ on a class of structures $\GFW$ with respect to a \emph{language} $\cL$ expressing properties of these structures.
By language, we mean a formalism containing formulas $\phi$ that express properties $\cP=\{\GFw\in \GFW\mid c\vDash \phi\} \subseteq \GFW$. 
We require that the properties expressible in $\cL$ are closed under complement and finite union, i.e., that $\cL$ can express negations and disjunctions.

\begin{definition}
    \label{def:G}
Let $\GFW$ be a class of structures, $\GFR$ be a reflexive  transitive binary relation on $\GFW$, $\mathcal{L}$ be a language and $\GF_{\GFW,\GFR,\cL} = (\GFW,\GFR,\mathrm{Admiss}(\mathcal{L}))$ be the general Kripke frame where
\[
\mathrm{Admiss}(\mathcal{L}) = \{\cP \subseteq \GFW \mid \text{there is a form. } \phi \text{ in } \mathcal{L} \text{ s.t. for all } \GFw \in \GFW: \GFw \sat \phi \text{ iff } \GFw \in \cP\}.
\]
The \emph{modal logic $\Lambda_{\GFW,\GFR,\cL}$ of the relation $\GFR$ w.r.t. language $\cL$ on the class  $\GFW$} is the modal logic of the general frame $\GF_{\GFW,\GFR,\cL}$, i.e., 
$
\Lambda_{{\GFW,\GFR,\cL}} = \{ \phi \mid \GF_{\GFW,\GFR,\cL} \Sat \phi \}$.
For any $\GFw \in \GFW$,  $\Lambda_{\GFW,\GFR,\cL}(\GFw)$ is the set of all valid formulas on $\GF_{\GFW,\GFR,\cL}$ at $\GFw$, i.e.,
$\Lambda_{{\GFW,\GFR,\cL}}(\GFw) = \{ \phi \mid \GF_{\GFW,\GFR,\cL},\GFw \Sat \phi \}$.
\end{definition}

\begin{example}
For an example instantiation of this definition consider the modal logic of Abelian groups studied in \cite{berger2023modal}:
The class $\GFW$ is the class of all Abelian groups. The relation $A \GFR B$ expresses that $A$ is  a subgroup of $B$. The language $\mathcal{L}$
is first-order logic. 
A formula in the resulting modal logic is a formula that holds for any Abelian group when $\lozenge$ is interpreted as ``there is a super-group in which'' and atomic propositions are replaced by any first-order sentences in the language of groups.
 In \cite{berger2023modal}, it is shown that the resulting logic is $\mathsf{S4.2}$.
 \end{example}

Now, we  can formally define our   \emph{modal logics of abstraction refinement}:
\begin{enumerate}
\item For a transition system $\cT$, the logic $\MLARfin_{\cT}$ is the modal logic of the refinement relation $\abstracts$ w.r.t. the language CTL on the class $\cF_\cT$ of all finite abstractions of $\cT$, i.e., 
$\MLARfin_{\cT} = \Lambda_{\cF_\cT, \abstracts, \mathrm{CTL}}$.
\item For a transition system $\cT$, $\MLARall_{\cT}$ is the modal logic of $\abstracts$ w.r.t.  CTL on the class $\cA_\cT$ of all  abstractions of $\cT$, i.e., 
$\MLARall_{\cT}= \Lambda_{\cA_\cT, \abstracts, \mathrm{CTL}}$
\item 
The logic $\MLAR$ is the modal logic of $\abstracts$ w.r.t.  CTL on the class $\mathfrak{A}$ of all transition systems, i.e., 
$\MLAR= \Lambda_{\mathfrak{A}, \abstracts, \mathrm{CTL}}$
\end{enumerate}

Note that the choice of the language determines the admissible valuations and hence influences the MLARs. A more expressive logic leads to a smaller MLAR.
Recall that we choose CTL as arguably the simplest prominent branching-time logic.
For an illustration of this definition, we refer back to \Cref{ex:general_frame}.

\begin{remark}
In the first two cases in the list above, it is also reasonable to consider equivalence-class preserving refinements instead. To do that for $\MLARfin_\cT$ for example, we would consider the class $\cF'_{\cT}$ of 
pairs $(\cS,f)$ of finite abstractions of $\cT$ together with a corresponding abstraction function $f$. Further, we would consider the relation $\abstracts^\equiv$ on this class. 
In fact, all arguments in the sequel also work for this view. While we will not spell this out in detail, we will comment on this at the critical places in the proofs.
Consequently, the bounds we provide also apply to $\Lambda_{\cF'_\cT, \abstracts^\equiv, \mathrm{CTL}}$ and for the analogously defined 
$\Lambda_{\cA'_\cT, \abstracts^\equiv, \mathrm{CTL}}$.
\end{remark}

\subsection{Lower bounds}

The frame $(\cC,\abstracts)$ is always reflexive and transitive as transition systems are refinements of themselves and abstraction functions can be composed to show transitivity.
The modal logic $\mathsf{S4}$ is valid on all reflexive and transitive frames (see \cite{blackburn}) and hence also on all general reflexive and transitive frames. For any class of transition systems $\cC$, we hence have $\mathsf{S4}\subseteq \Lambda_{\cC,\abstracts,\mathrm{CTL}}$.
For other axioms, the validity depends on the class of transition systems we consider. 
More precisely, the structure of the relation $\abstracts$ on $\cC$ allows us to conclude stronger lower bounds.

For our first lower bound of $\mathsf{S4.2}$ for $\MLARfin_\cT$ for any $\cT$, we use that $\mathsf{S4.2}$ is valid on  reflexive, transitive, directed frames (see \cite{blackburn}). A frame $(W,R)$ is directed
if for all $w,v,u\in W$ with $Rwv$ and $Rwu$, there is a $z\in W$ with $Rvz$ and $Ruz$. Showing directedness of $\abstracts$ on $\cF_\cT$ is straightforward (proofs    in Appendix~\ref{app:lower}).
\begin{restatable}{proposition}{lowerfin}
\label{prop:lower_fin}
For any transition system $\cT$, we have $\mathsf{S4.2}\subseteq \MLARfin_\cT$.
\end{restatable}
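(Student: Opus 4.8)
The plan is to exploit the characterization, cited from \cite{blackburn}, that $\mathsf{S4.2}$ is exactly the modal logic of all reflexive, transitive, and directed Kripke frames, and to reduce the claim to a structural property of the refinement relation. The excerpt already records that $(\cF_\cT,\abstracts)$ is reflexive and transitive and that validity on a general frame is weaker than on its underlying Kripke frame, so $\Lambda_{(\cF_\cT,\abstracts)} \subseteq \Lambda_{\cF_\cT,\abstracts,\mathrm{CTL}} = \MLARfin_\cT$. Hence it suffices to show that $(\cF_\cT,\abstracts)$ is directed; then $\mathsf{S4.2} \subseteq \Lambda_{(\cF_\cT,\abstracts)} \subseteq \MLARfin_\cT$ follows at once.

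The core step is the construction of a common refinement, and I would in fact prove the stronger statement that \emph{any} two worlds of $\cF_\cT$ (not just two sharing an $\abstracts$-predecessor) admit a common refinement inside $\cF_\cT$; directedness is then an immediate special case. Given finite abstractions $\cT_1,\cT_2 \in \cF_\cT$ of $\cT = (S,\rightarrow,I,\AP,L)$ with abstraction functions $f_1\colon S \to S_1$ and $f_2\colon S \to S_2$, I would form the \emph{product abstraction}: let $f_3\colon S \to S_1\times S_2$, $f_3(s) = (f_1(s),f_2(s))$, put $S_3 = f_3(S)$, and let $\cT_3$ be the existential abstraction of $\cT$ induced by $f_3$ (labels inherited from $\cT$, transitions the smallest relation with $s\rightarrow s'$ implying $f_3(s)\rightarrow_3 f_3(s')$, and $I_3 = \{f_3(s)\mid s\in I\}$). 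Since $\cT_1,\cT_2$ are finite, $S_3$ has at most $|S_1|\cdot|S_2|$ elements, so $\cT_3$ is again a finite abstraction of $\cT$, i.e.\ $\cT_3\in\cF_\cT$; the absence of terminal states is inherited from $\cT$ through $f_3$.

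It then remains to verify $\cT_1\abstracts\cT_3$ and $\cT_2\abstracts\cT_3$, for which I would use the coordinate projections $g_1(x,y)=x$ and $g_2(x,y)=y$ as abstraction functions. The key identity is $f_i = g_i\circ f_3$, from which label-compatibility, surjectivity, and the initial-state condition fall out of the corresponding properties of $f_1,f_2$. The main (and essentially only) obstacle is confirming the existential-transition condition, namely that $\rightarrow_1$ coincides with the smallest relation induced by $g_1$ on $\rightarrow_3$ (and symmetrically for $g_2$); this holds because both $\rightarrow_1$ and $\rightarrow_3$ are existential abstractions along $f_1 = g_1\circ f_3$ and $f_3$, respectively, so projecting $\rightarrow_3$ reproduces exactly the edges generated by $\rightarrow$ under $f_1$. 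Finally, I would note that the same product construction is compatible with the abstraction functions to $\cT$, so the argument transfers verbatim to the equivalence-class-preserving variant $\Lambda_{\cF'_\cT,\abstracts^\equiv,\mathrm{CTL}}$ mentioned in the Remark.
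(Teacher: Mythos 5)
Your proposal is correct and takes essentially the same route as the paper: it reduces the claim to directedness of the underlying Kripke frame $(\cF_\cT,\abstracts)$ (using that $\mathsf{S4.2}$ is valid on reflexive, transitive, directed frames and that general-frame validity is weaker than Kripke-frame validity), and then establishes directedness via the identical product construction $f(t)=(f_1(t),f_2(t))$ on a subset of $S_1\times S_2$ with the coordinate projections serving as abstraction functions. The only differences are cosmetic: you make explicit a few points the paper leaves implicit (the bound $|S_1|\cdot|S_2|$, absence of terminal states, and the minimality check for the projected transition relation via $f_i=g_i\circ f$).
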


\noindent
For $\MLARall_\cT$, the lower bound $\mathsf{S4.2.1}$ follows as $\cA_\cT$ contains the  element $\cT$ that is a refinement of all transition systems in $\cA_\cT$. So, directedness and  the validity of ($\mathsf{.2}$) is immediate. The validity of ($\mathsf{.1}$)$=\lbox \ldia p \imp \ldia \lbox p$ on directed frames is equivalent to the existence of a greatest element, which is $\cT$ in this case.

\begin{restatable}{proposition}{lowerall}
\label{prop:lower_all}
For any transition system $\cT$, we have $\mathsf{S4.2.1}\subseteq \MLARall_\cT$.
\end{restatable}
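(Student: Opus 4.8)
The plan is to reduce the inclusion to a soundness check on the underlying (plain) Kripke frame. Since $\MLARall_\cT=\Lambda_{\cA_\cT,\abstracts,\mathrm{CTL}}$ is the logic of the general frame $\GF_{\cA_\cT,\abstracts,\mathrm{CTL}}$ and, as noted in the preliminaries, is a normal modal logic, while $\mathsf{S4.2.1}$ is by definition the \emph{smallest} normal modal logic containing $\mathsf{(T)}$, $\mathsf{(4)}$, $\mathsf{(.2)}$ and $\mathsf{(.1)}$, it suffices to show that these four axioms lie in $\MLARall_\cT$; closure under modus ponens, uniform substitution and generalization then yields all of $\mathsf{S4.2.1}$. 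Moreover, because validity on a general frame is weaker than on its underlying Kripke frame ($\Lambda_G\supseteq\Lambda_{(W,R)}$), I only need each axiom valid on the plain frame $(\cA_\cT,\abstracts)$; this lets me ignore the admissible valuations entirely and argue purely about frame structure. As $\mathsf{(T)}$ and $\mathsf{(4)}$ already follow from reflexivity and transitivity (i.e. from $\mathsf{S4}\subseteq\MLARall_\cT$), the remaining work is $\mathsf{(.2)}$ and $\mathsf{(.1)}$.

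The key structural fact I would establish first is that $\cT$ is the greatest element of $(\cA_\cT,\abstracts)$ and, crucially, a \emph{final} point. Concretely, $\cT\in\cA_\cT$ via the identity abstraction function, and for every $\cS\in\cA_\cT$ the fact that $\cS$ is an abstraction of $\cT$ means $\cS\abstracts\cT$, so every world sees $\cT$. To see that $\cT$ sees only itself, I would identify $\cA_\cT$ (up to isomorphism) with the set of label-respecting partitions of the state space $S_\cT$: by the definition of abstraction, each $\cS\in\cA_\cT$ is exactly the quotient of $\cT$ by the equivalence classes of its abstraction function, the induced transition relation is forced by condition~2, and no terminal states arise since $\cT$ has none. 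Under this identification $\abstracts$ is ordinary partition refinement, $\cT$ is the finest (singleton) partition, and partition refinement is a genuine partial order. Hence the only $\cS\in\cA_\cT$ with $\cT\abstracts\cS$ is $\cT$ itself, i.e. $\cT$ is final.

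With these facts in hand, $\mathsf{(.2)}$ follows from directedness: for any $w,v,u$ with $w\abstracts v$ and $w\abstracts u$, the witness $z=\cT$ satisfies $v\abstracts\cT$ and $u\abstracts\cT$. For $\mathsf{(.1)}=\lbox\ldia p\to\ldia\lbox p$ I would give the standard semantic argument using finality: fix a world $w$ and any valuation and assume $w\Vdash\lbox\ldia p$; since $w\abstracts\cT$, we obtain $\cT\Vdash\ldia p$, so some successor of $\cT$ satisfies $p$, but the only successor of $\cT$ is $\cT$ itself, whence $\cT\Vdash p$ and therefore $\cT\Vdash\lbox p$; thus $w\Vdash\ldia\lbox p$. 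Since $w$ was arbitrary, $\mathsf{(.1)}$ is valid on $(\cA_\cT,\abstracts)$, which completes the verification of all four axioms.

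The step I expect to be the main obstacle is justifying that $\cT$ is a final point rather than merely a greatest/top element. A greatest cluster is not enough: $\mathsf{(.1)}$ already fails on a two-point reflexive cluster at the top (taking $p$ true at exactly one of the two points makes $\lbox\ldia p$ hold while $\ldia\lbox p$ fails), so the argument genuinely needs the top to be a single reflexive world seeing nothing else. This is exactly what antisymmetry of refinement provides, and the cleanest way I know to secure antisymmetry uniformly — including for infinite $S_\cT$, where surjective self-maps need not be bijections — is the partition-lattice identification above; the same identification also makes the equivalence-class-preserving variant from the Remark transparent, since there one simply works with partitions of $S_\cT$ finer than the one induced by the fixed abstraction function.
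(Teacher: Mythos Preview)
Your argument is correct and follows the paper's route: use $\cT$ as a greatest element of $(\cA_\cT,\abstracts)$ to obtain directedness for $\mathsf{(.2)}$ and to verify $\mathsf{(.1)}$ semantically. The paper's proof simply asserts that $\mathsf{(.1)}$ is valid on any reflexive transitive frame with a greatest element $m$ (arguing $w\Vdash\lbox\ldia p$ iff $m\in V(p)$, hence $m\Vdash\lbox p$), whereas you correctly isolate that this last step needs $m$ to be \emph{final}, not merely the top of a cluster, and you justify finality of $\cT$ via the partition-lattice identification---a point the paper leaves implicit.
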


 The axiom $\mathsf{(.1)}$ is  in $\MLAR$ as we can always go to a refinement for which further refinements do not affect the truth of CTL-formulas anymore by making everything except for one path starting from each initial state unreachable. If $p$ corresponds to a CTL-property $\Phi$ s.t. $\lbox\ldia p$ holds at some transition system~$\cT$, then it has to hold at these ``maximal'' refinements and hence $\ldia\lbox p$ is true, too.

\begin{restatable}{proposition}{proplower}
\label{prop:lower}
We have $\mathsf{S4.1}\subseteq \MLAR$.
\end{restatable}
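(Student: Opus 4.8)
\textbf{Proof proposal for Proposition~\ref{prop:lower}.}

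The plan is to show $\mathsf{S4.1} \subseteq \MLAR$ by establishing validity of the single axiom $\mathsf{(.1)} = \lbox\ldia p \imp \ldia\lbox p$ on the general frame $\GF_{\mathfrak{A}, \abstracts, \mathrm{CTL}}$, since $\mathsf{S4} \subseteq \MLAR$ is already known from reflexivity and transitivity. A standard frame-theoretic fact is that $\mathsf{(.1)}$ corresponds to the existence, from every world, of a reachable world that sees only itself (a ``dead end'' or final cluster). Since $\abstracts$ is reflexive and transitive but not antisymmetric here, the right notion is a world $\cS$ accessible from $\cT$ such that every refinement of $\cS$ agrees with $\cS$ on all CTL-expressible properties, i.e., a world at which the admissible valuations become constant along all successors. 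The key construction is therefore: given any transition system $\cT$, produce a refinement $\cS$ with $\cT \abstracts \cS$ such that for every CTL-state formula $\Phi$, either $\cS \vDash \Phi$ and $\cS' \vDash \Phi$ for all $\cS'$ with $\cS \abstracts \cS'$, or $\cS \nvDash \Phi$ and $\cS' \nvDash \Phi$ for all such $\cS'$.

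First I would carry out the construction hinted at in the excerpt: from each initial state of $\cT$, retain exactly one path and make everything else unreachable, so that $\cS$ has a ``linear'' reachable part. Concretely, I would pick a single successor function along reachable states so that the reachable fragment from each initial state becomes a single infinite path (a lasso, up to the abstraction structure). The crucial claim is that for such an $\cS$, no further refinement can change the truth value of any CTL-state formula at an initial state, because with only one available path from each reachable state the existential and universal path quantifiers coincide, and splitting abstract states further cannot create branching that was not already present nor remove the unique path. I would verify this claim by induction on CTL-state formula structure: the atomic and Boolean cases are immediate from the labeling condition on abstraction functions, and for $\exists\varphi$ and $\forall\varphi$ the single-path property collapses both quantifiers to evaluation on the one path, whose trace is preserved under refinement since $\abstracts$ only shrinks the set of traces and here there is nothing left to shrink.

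With this ``CTL-stable'' refinement $\cS$ in hand, the validity of $\mathsf{(.1)}$ follows by the usual semantic argument: suppose $\lbox\ldia p$ holds at $\cT$ under some admissible valuation $V(p) = \{\cX \mid \cX \vDash \Phi\}$. Since $\cT \abstracts \cS$, we have $\cS \vDash \ldia p$, so some refinement of $\cS$ satisfies $\Phi$; but by CTL-stability of $\cS$, $\cS$ itself already satisfies $\Phi$ and so do all its refinements, whence $\cS \vDash \lbox p$. As $\cT \abstracts \cS$, this gives $\cT \vDash \ldia\lbox p$. I expect the main obstacle to be the precise verification that the ``keep one path'' operation yields a genuine refinement in the sense of Definition of abstraction functions (respecting the labeling and the condition that $\rightarrow_1$ is the \emph{smallest} relation induced by $f$), together with the inductive argument that CTL truth values are frozen at $\cS$; care is needed because making states unreachable must still leave a legitimate transition system with no terminal states and a valid abstraction function back to $\cT$, which forces the retained path to be infinite and compatible with the existing transition structure.
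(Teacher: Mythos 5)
Your overall strategy matches the paper's: produce, from any $\cT$, a refinement $\cS$ whose CTL-theory is frozen under all further refinement, then run the standard semantic argument for $\mathsf{(.1)}$ (your final paragraph is exactly the paper's argument, and your reformulation of the ``final cluster'' condition in terms of admissible valuations becoming constant above $\cS$ is correct). The gap is in the construction of $\cS$ itself, precisely at the point you flag as ``the main obstacle'' and then leave unresolved. Your concrete recipe---``pick a single successor function along reachable states'' so that the reachable fragment becomes one path, making everything else unreachable---amounts to deleting transitions of $\cT$ in place. No matter which abstraction function you then try to use, such a pruned system is \emph{not} a refinement of $\cT$: condition 2 of the definition requires that $\rightarrow_{\cT}$ be the \emph{smallest} relation induced by the images of the refinement's transitions, i.e., every transition of $\cT$ must have a preimage in the refinement, whereas pruning leaves the discarded transitions of $\cT$ with none (and deleting states outright would also break surjectivity of the abstraction function). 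So the object you construct does not lie in the frame at all, and the stability and modal arguments that follow have nothing to stand on.

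The paper's fix is duplication rather than in-place pruning: $\cS$ is the \emph{disjoint union} of (i) for each initial state $t$ of $\cT$, a fresh infinite chain that is a copy of a chosen path $\pi_t$ of $\cT$, with its first state declared initial, and (ii) a full copy of $\cT$ in which \emph{no} state is initial. The abstraction function sends each copied state to the state it copies. The non-initial full copy restores surjectivity and guarantees that every transition of $\cT$ has a preimage, so the induced relation is exactly $\rightarrow_{\cT}$; the initial-state condition holds because the first states of the chains map onto $I$; and since CTL satisfaction of a transition system is evaluated only at initial states, the unreachable full copy is invisible to CTL, leaving a linear reachable part. Your stability argument (quantifier collapse on single-trace systems, plus the fact that refinement only shrinks trace sets while termination-freeness prevents shrinking a singleton) then goes through essentially verbatim for this $\cS$, and with it the remainder of your proof. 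The missing idea is exactly this: keep the whole of $\cT$ around as an unreachable, non-initial copy instead of trying to carve the path out of $\cT$ itself.
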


\section{Upper bounds}

In this section, we prove upper bounds for the MLARs. 
An overview of the results can be found in Table \ref{tbl:overview}.
The key technical vehicle for the proofs
are so-called \emph{control statements}. Given a general frame $\GF_{\GFW,\GFR,\cL} = (\GFW,\GFR,\mathrm{Admiss}(\mathcal{L}))$ as in Def. \ref{def:G},
control statements are formulas in the language $\cL$ that exhibit certain  patterns of truth values under the relation $\GFR$.
We summarize the types of control statements introduced in \cite{hamkins2008modal,hamkins2015structural} in Section \ref{sec:control}.
These control statements are sufficient to prove an upper bound of $\mathsf{S4.2}$ in Section \ref{sec:S4.2}.
For the upper bounds $\mathsf{S4.2.1}$ and $\mathsf{S4FPF}$, we develop new control statements in Section \ref{sec:S4.2.1}.

\subsection{Control statements }
\label{sec:control}

The definitions and results given in this subsection are adapted from \cite{hamkins2008modal} and \cite{hamkins2015structural}. 
We define three different types of control statements, namely \emph{pure buttons}, a more general version of them called \emph{pure weak buttons}, and \emph{switches}. Intuitively, a pure button can always be made true by moving to another structure via the relation $\GFR$ and stays true once it is true. In comparison, a pure weak button might eventually have to stay false forever, but also
has to stay true once it is true. Lastly, a switch can always be made true and be made false via $\GFR$.

\vspace{2pt}
\noindent\textbf{Pure buttons and switches.}
Formally, let $\GF$ be the general Kripke frame wrt. some class of structures $\GFW$, relation $\GFR$, and language $\mathcal{L}$ as defined in Def. \ref{def:G} and let $\GFw \in \GFW$.
A sentence $\button \in \mathcal{L}$ is a \emph{pure button} in $\GF$ at $\GFw$ if, for any valuation $V$ with $V(b) = \{\GFv \in \GFW \mid \GFv \sat \button\}$,
we have $(\GF,V),\GFw \Sat \lbox (b \imp \lbox b) $  and $(\GF,V),\GFw \Sat \lbox \ldia  b$.
For an illustration, consider  \Cref{sub@fig:pure_button} and view the root  as the world $c$: From every world reachable from $c$, a world at which the pure button is true is reachable and worlds where the pure button is true can only reach such worlds.
A pure button $\button$ is \emph{pushed} at $\GFu \in \GFW$ with $\GFw \GFR \GFu$ if $(\GF,V),\GFu \Sat b$ and \emph{unpushed} otherwise.

A sentence $\weakButton \in \mathcal{L}$ is a \emph{pure weak button} in $\GF$ at $\GFw$ if, for any valuation $V$ with  $V(b) = \{\GFv \in \GFW \mid \GFv \sat \weakButton\}$,
we have $(\GF,V),\GFw \Sat \lbox (b \imp \lbox b)  $ and $ (\GF,V),\GFw \Sat \ldia b$. So, it is not necessary that the pure weak button can always be made true.
The truth value pattern  is illustrated in  \Cref{sub@fig:pure_weak_button}. A pure weak button $\weakButton$ is \emph{pushed} at $\GFu \in \GFW$ with $\GFw \GFR \GFu$ if $(\GF,V),\GFu \Sat b$ and \emph{unpushed} otherwise.
A pure weak button $\weakButton$ is \emph{intact} at $\GFu \in \GFW$ with $\GFw \GFR \GFu$ if $(\GF,V),\GFu \Sat \ldia b$ and \emph{broken} otherwise.

A sentence $\switch \in \mathcal{L}$ is a \emph{switch} in $\GF$ at $\GFw$ if, for any valuation $V$ with $V(s) = \{\GFv \in \GFW \mid \GFv \sat \switch\}$,
we have $(\GF,V),\GFw \Sat \lbox (\ldia s \land \ldia \lnot s)$.
A switch $\switch$ is \emph{on} at $\GFu \in \GFW$ with $\GFw \GFR \GFu$ if $(\GF,V),\GFu \Sat s$ and \emph{off} otherwise.
\Cref{sub@fig:switch} shows the truth value pattern: Every world can access  worlds at which the switch is on and  off, respectively.

\begin{figure*}[t]
    \begin{subfigure}[b]{0.3\textwidth}
        \centering
        \resizebox{.6\textwidth}{!}{%
            \begin{tikzpicture}[scale=1,auto,node distance=4mm,>=latex]
                \tikzstyle{rect}=[draw=white,rectangle]
                \node[circle, draw=black, fill = black!10!white] (w0) at (0,0) {};
                \node[circle, draw=black, fill = black!80!white] (w1) at (1,.7) {};
                \node[circle, draw=black, fill = black!10!white] (w2) at (-.5,1.4) {};
                \node[circle, draw=black, fill = black!10!white] (w3) at (0,.7) {};
                \node[circle, draw=black, fill = black!80!white] (w4) at (1,1.4) {};
                \node[circle, draw=black, fill = black!80!white] (w5) at (-.5,2.1) {};
                \node[circle, draw=black, fill = black!10!white] (w6) at (.3,1.4) {};
                \node[circle, draw=black, fill = black!80!white] (w7) at (1.5,2.1) {};
                 \node[circle, draw=black, fill = black!80!white] (w8) at (.5,2.1) {};
                \draw
                (w0)     edge[right]                node{} (w1)
                (w0)     edge[above]                node{} (w3)
                (w1)     edge[above]                node{} (w4)
                (w3)     edge[left]     node{} (w2)
                (w3)     edge[right]                node{} (w4)
                (w3)     edge[above]                node{} (w6)
                (w4)     edge[above]    node{} (w7)
                (w6) edge (w8)
                (w6) edge (w5)
                (w2) edge (w5)
                                            ;
            \end{tikzpicture}
        }
        \caption{Pure button (\raisebox{-.5ex}{\scalebox{2}{$\bullet$}}${}=\mathit{true}$).}
        \label{fig:pure_button}
    \end{subfigure}\hfill
    \begin{subfigure}[b]{0.3\textwidth}
        \centering
        \resizebox{.6\textwidth}{!}{%
            \begin{tikzpicture}[scale=1,auto,node distance=4mm,>=latex]
                \tikzstyle{rect}=[draw=white,rectangle]
                \node[circle, draw=black, fill = black!10!white] (w0) at (0,0) {};
                \node[circle, draw=black, fill = black!80!white] (w1) at (1,.7) {};
                \node[circle, draw=black, fill = black!10!white] (w2) at (-.5,1.4) {};
                \node[circle, draw=black, fill = black!10!white] (w3) at (0,.7) {};
                \node[circle, draw=black, fill = black!80!white] (w4) at (1,1.4) {};
                \node[circle, draw=black, fill = black!80!white] (w5) at (-.5,2.1) {};
                \node[circle, draw=black, fill = black!10!white] (w6) at (.3,1.4) {};
                \node[circle, draw=black, fill = black!80!white] (w7) at (1.5,2.1) {};
                 \node[circle, draw=black, fill = black!10!white] (w8) at (.5,2.1) {};
                \draw
                (w0)     edge[right]                node{} (w1)
                (w0)     edge[above]                node{} (w3)
                (w1)     edge[above]                node{} (w4)
                (w3)     edge[left]     node{} (w2)
                (w3)     edge[right]                node{} (w4)
                (w3)     edge[above]                node{} (w6)
                (w4)     edge[above]    node{} (w7)
                (w6) edge (w8)
                (w6) edge (w5)
                (w2) edge (w5)
                                            ;
            \end{tikzpicture}
        }
        \caption{Pure weak button (\raisebox{-.5ex}{\scalebox{2}{$\bullet$}}${}=\mathit{true}$).}
        \label{fig:pure_weak_button}
    \end{subfigure}\hfill
    \begin{subfigure}[b]{0.3\textwidth}
        \centering
        \vspace{-18pt}
       \resizebox{.6\textwidth}{!}{%
            \begin{tikzpicture}[scale=1,auto,node distance=4mm,>=latex]
                \tikzstyle{rect}=[draw=white,rectangle]
                \node[circle, draw=black, fill = black!10!white] (w0) at (0,0) {};
                \node[circle, draw=black, fill = black!80!white] (w1) at (1,.7) {};
                \node[circle, draw=black, fill = black!80!white] (w2) at (-.5,1.4) {};
                \node[circle, draw=black, fill = black!10!white] (w3) at (0,.7) {};
                \node[circle, draw=black, fill = black!80!white] (w4) at (1,1.4) {};
                \node[circle, draw=black, fill = black!80!white] (w5) at (-.5,2.1) {};
                \node[circle, draw=black, fill = black!10!white] (w6) at (.3,1.4) {};
                \node[circle, draw=black, fill = black!80!white] (w7) at (1.5,2.1) {};
                 \node[circle, draw=black, fill = black!10!white] (w8) at (.5,2.1) {};
                  \node (d1) at (0,2.8) {\vdots};
                   \node (d2) at (1,2.8) {\vdots};
                \draw
                (w0)     edge[right]                node{} (w1)
                (w0)     edge[above]                node{} (w3)
                (w1)     edge[above]                node{} (w4)
                (w3)     edge[left]     node{} (w2)
                (w3)     edge[right]                node{} (w4)
                (w3)     edge[above]                node{} (w6)
                (w4)     edge[above]    node{} (w7)
                (w6) edge (w8)
                (w6) edge (w5)
                (w2) edge (w5)
                (w2) edge (w8)
                (w4) edge (w8)
                (w8) edge (d2)
                (w8) edge (d1)
                (w5) edge (d1)
                (w7) edge (d2)
                (w1) edge (w6)
                                            ;
            \end{tikzpicture}
        }
                \caption{Switch (\raisebox{-.5ex}{\scalebox{2}{$\bullet$}}${}=\mathit{true}$).}
        \label{fig:switch}
    \end{subfigure}
    \vspace{-8pt}
    \caption{Illustration of the truth value patterns for different control statements. In all cases, the relation between worlds is  meant to be reflexive and transitive.}
            \vspace{-12pt}
    \label{fig:basic_control_statements}
\end{figure*}
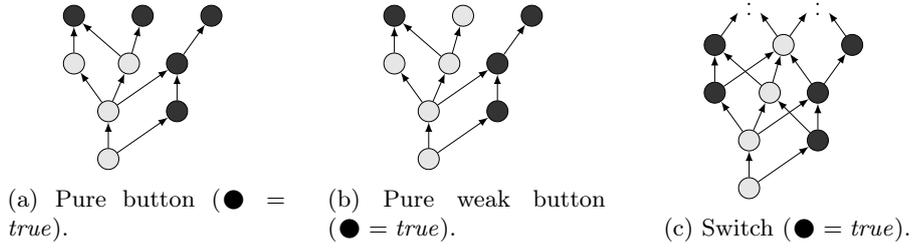

\vspace{2pt}
\noindent\textbf{Independence.}
Different control statements might interfere with one another. For instance, to push a pure button it might be necessary to also push another pure button; however, to prove upper bounds, it is often required to be able to operate control statements  \emph{independently}. Formally, we define this as follows:

Let $n,m \in \mathbb{N}$. Let $\buttonSet = \{\button_i \mid 0 \le i \le n-1\}$ be a set of unpushed pure buttons in $\GF$ at $\GFw$ and let $\switchSet = \{\switch_j \mid 0 \le j \le m-1\}$ be a set of switches in $\GF$ at $\GFw$. We call $\buttonSet \cup \switchSet$ \emph{independent} if for every $I_0 \subseteq I_1 \subseteq \{0, 1, \ldots, n-1\}$ and $J_0, J_1 \subseteq \{0, 1, \ldots, m-1\}$,
and any $V$ with $V(b_i) = \{\GFv \in \GFW \mid \GFv \sat \button_i\}$ and $V(s_j) = \{\GFv \in \GFW \mid \GFv \sat \switch_j\}$,
\begin{align*}
(\GF,V),\GFw \Sat \lbox \big[ &\big( 
    \textstyle\bigwedge\nolimits_{i \in I_0} b_i \land 
    \bigwedge\nolimits_{i \notin I_0} \lnot b_i \land 
    \bigwedge\nolimits_{j \in J_0} s_j \land 
    \bigwedge\nolimits_{j \notin J_0} \lnot s_j 
\big) \\ \imp 
\ldia &\big( 
    \textstyle\bigwedge\nolimits_{i \in I_1} b_i \land 
    \bigwedge\nolimits_{i \notin I_1} \lnot  b_i \land 
    \bigwedge\nolimits_{j \in J_1} s_j \land 
    \bigwedge\nolimits_{j \notin J_1} \lnot s_j 
\big) \big]    .
\end{align*}
This formula says that no matter which set of pure buttons $I_0$ and switches $J_0$ is currently true, there is an accessible world at which an arbitrary other combination of pure buttons $I_1\supseteq I_0$ and switches $J_1$ is true. So, it is possible to change the truth values of any control statement without influencing  the truth values of the others  (besides the fact that pushed pure buttons have to remain true).
Independent families of control statements are a powerful tool to prove upper bounds on the modal logics of model constructions as shown in \cite{hamkins2008modal}:

\begin{theorem}[{\cite[Section 2]{hamkins2008modal}}]
    \label{thm:S4.2}
    Let $\GF$ be the general Kripke frame w.r.t. some $\GFW$, $\GFR$ and $\mathcal{L}$ as in Def. \ref{def:G} and let $\GFw \in \GFW$. If for any $n,m \in \mathbb{N}$, there is a set of unpushed pure buttons $\buttonSet = \{\button_i \mid 0 \le i \le n-1\}$ in $\GF$ at $\GFw$ and a set of switches $\switchSet = \{\switch_j \mid 0 \le j \le m-1\}$ in $\GF$ at $\GFw$ s.t. $\buttonSet \cup \switchSet$ is independent, then $\Lambda_\GF(\GFw) \subseteq \mathsf{S4.2}$.
\end{theorem}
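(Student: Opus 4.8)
The plan is to argue by contraposition: I will show that every modal formula $\phi \notin \mathsf{S4.2}$ fails at $\GFw$ under some admissible valuation, so that $\phi \notin \Lambda_\GF(\GFw)$. The engine is the finite frame property of $\mathsf{S4.2}$: this logic is sound and complete with respect to the class of finite, reflexive, transitive, \emph{directed} frames (finite directed preorders). Hence, if $\phi \notin \mathsf{S4.2}$ there is such a finite frame $F$, a valuation on it, and a world refuting $\phi$; since a directed finite preorder has a top cluster seen from every world, we may assume the refuting world is a root.

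\textbf{Reduction to canonical button--switch frames.}
For $n,m \in \Nat$, I would introduce the canonical frame $F_{n,m}$ whose worlds are pairs $(B,\sigma)$ with $B \subseteq \{0,\dots,n-1\}$ and $\sigma \subseteq \{0,\dots,m-1\}$, ordered by $(B,\sigma) \mathrel{R} (B',\sigma')$ iff $B \subseteq B'$. Each $F_{n,m}$ is reflexive, transitive and directed (its top cluster consists of all worlds with $B = \{0,\dots,n-1\}$), hence an $\mathsf{S4.2}$-frame; the worlds $(\emptyset,\sigma)$ are its roots, and worlds sharing the same $B$ form a cluster. The combinatorial heart of the argument is the \emph{universality lemma}: every finite rooted $\mathsf{S4.2}$-frame is a p-morphic (bounded-morphic) image of some $F_{n,m}$. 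Intuitively, the Boolean skeleton $2^{\{0,\dots,n-1\}}$ p-morphically covers the finite directed cluster-skeleton of $F$ once $n$ is large, while the $2^m$-element clusters cover the clusters of $F$ once $m$ is large. By p-morphism invariance of modal truth, pulling the refuting valuation back along this map refutes $\phi$ on some $F_{n,m}$, and by symmetry of the switch coordinate we may place the refutation at an arbitrary prescribed root $(\emptyset,\sigma_0)$.

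\textbf{Simulating $F_{n,m}$ inside $\GF$.}
Fix such $n,m$ and take, by hypothesis, an independent family $\buttonSet \cup \switchSet$ of $n$ unpushed pure buttons $\button_0,\dots,\button_{n-1}$ and $m$ switches $\switch_0,\dots,\switch_{m-1}$ in $\GF$ at $\GFw$. Let $V$ satisfy $V(b_i) = \{\GFv \mid \GFv \sat \button_i\}$ and $V(s_j) = \{\GFv \mid \GFv \sat \switch_j\}$. Define the \emph{state map} $\mu$ on the worlds $\GFR$-accessible from $\GFw$ by $\mu(\GFu) = (B_{\GFu}, \sigma_{\GFu})$, where $B_{\GFu} = \{ i \mid \GFu \in V(b_i)\}$ records the pushed buttons and $\sigma_{\GFu} = \{ j \mid \GFu \in V(s_j)\}$ the on-switches. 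I claim $\mu$ is a surjective p-morphism onto $F_{n,m}$ with $\mu(\GFw) = (\emptyset, \sigma_{\GFw})$, the empty set arising because the buttons are unpushed at $\GFw$. The forth condition $\GFu \GFR \GFu' \Rightarrow B_{\GFu} \subseteq B_{\GFu'}$ is exactly purity: $\lbox(b_i \imp \lbox b_i)$ makes pushed buttons stay pushed, while switches are ignored by $R$. The back condition and surjectivity are precisely what independence delivers: from any reachable state $(B,\sigma)$ and any target with $B \subseteq B'$, the independence formula (with $I_0 = B$, $I_1 = B'$, $J_0 = \sigma$, $J_1 = \sigma'$) yields an accessible world realizing $(B',\sigma')$. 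Finally I would pull the refuting valuation $V_F$ back to $\GF$ by $V(p) = \mu^{-1}(V_F(p))$; this set is a finite Boolean combination of the $V(b_i)$ and $V(s_j)$, hence admissible since $\mathcal{L}$ is closed under finite unions and complements. P-morphism invariance then gives $(\GF,V),\GFw \Sat \psi$ iff $F_{n,m},\mu(\GFw)\Sat \psi$ for every modal $\psi$, so $\phi$ fails at $\GFw$ and $\phi \notin \Lambda_\GF(\GFw)$.

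\textbf{Main obstacle.}
The routine parts are the forth condition and the admissibility of the pulled-back valuation. The delicate points are twofold. First, verifying that $\mu$ is a genuine p-morphism: the back condition rests entirely on the independence hypothesis, and the identification of the same-$B$ worlds as one cluster uses that switches can be toggled freely while buttons are kept fixed, i.e. the $I_1 = I_0$ instance of independence. Second, the purely modal-theoretic universality lemma that every finite $\mathsf{S4.2}$-frame is a p-morphic image of some $F_{n,m}$, together with the bookkeeping that seats the refutation at the specific root $(\emptyset,\sigma_{\GFw})$ forced by the actual switch values at $\GFw$. I expect the p-morphism verification via independence to be the crux; the universality lemma is standard modal frame theory and can be invoked from the source.
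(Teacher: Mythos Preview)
Your proposal is correct and is essentially the same argument as the paper's, just phrased in the language of p-morphisms rather than the paper's device of $F$-labelings (Lemma~\ref{lem:labeling}). Your canonical frame $F_{n,m}$ is precisely what the paper calls a finite pre-Boolean algebra with all clusters padded to size $2^m$; your state map $\mu$ is the p-morphism whose fibers are exactly the truth-sets of the labeling formulas $\phi_{(I,J)} = \bigwedge_{i\in I}\button_i \wedge \bigwedge_{i\notin I}\neg\button_i \wedge \bigwedge_{j\in J}\switch_j \wedge \bigwedge_{j\notin J}\neg\switch_j$ that the paper would write down (compare the analogous formulas in the proof of Theorem~\ref{thm:S4.2.1}); and your ``universality lemma'' plus pullback of the valuation is exactly the content of Lemma~\ref{lem:labeling} combined with the completeness of $\mathsf{S4.2}$ for finite pre-Boolean algebras. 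The only cosmetic difference is that the $F$-labeling formulation bakes the admissibility of the pulled-back valuation into the definition (each $\phi_w$ is an $\mathcal{L}$-sentence), whereas you recover it at the end by observing that $\mu^{-1}(V_F(p))$ is a finite Boolean combination of the button and switch extensions.
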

To prove such theorems the notion of \emph{$F$-labeling} is developed in \cite{hamkins2008modal,hamkins2015structural}.
Let $F = (W,\preccurlyeq)$ be a finite Kripke frame with a root $w_0 \in W$. An {$F$-labeling} for $\GFw$ in $\GF$ assigns a sentence $\phi_{w}$ in the language $\mathcal{L}$ to each node $w \in W$ such that
\begin{enumerate}
    \item For all $\GFv \in \GFW$ with $\GFw \GFR \GFv$, there is exactly one $w \in W$ such that $\GFv \sat \phi_{w}$
    \item For all $\GFv \in \GFW$ with $\GFw \GFR \GFv$, if $\GFv \sat \phi_{w}$, then there is an $\GFu \in \GFW$ with $\GFv \GFR \GFu$ and $\GFu \sat \phi_{v}$ if and only if $w \preccurlyeq v$.
    \item $\GFw \sat \phi_{w_0}$.
\end{enumerate}

In other words, an $F$-labeling assigns sentences $\phi_{w}$  to the worlds $w$ of $F$ such that grouping together the structures in $\GFW$ that make the respective formulas $\phi_w$ true results exactly in the structure of the frame $F$.

\begin{lemma}[{\cite[Lem. 9]{hamkins2015structural}}]
    \label{lem:labeling}
Let $F = (W,\preccurlyeq)$ be a finite Kripke frame with a root $w_0 \in W$ and $w \mapsto \phi_{w}$ be an $F$-labeling for $\GFw$ in $\GF$. For any formula $\phi$, if $F, w_0 \nSat \phi$ then $\phi \notin \Lambda_\GF(\GFw)$.
\end{lemma}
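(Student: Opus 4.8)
The plan is to refute $\phi$ on $\GF$ at $\GFw$ by pulling back, along the $F$-labeling, a valuation on $F$ that already refutes $\phi$ at $w_0$. Since $F, w_0 \nSat \phi$, I would fix a valuation $V_F \colon \Pi \to 2^W$ with $(F,V_F), w_0 \nSat \phi$, and then define a valuation $V$ on $\GF$ by setting, for each $p \in \Pi$,
\[
V(p) = \{\GFv \in \GFW \mid \GFv \sat \textstyle\bigvee_{w \in V_F(p)} \phi_w\}.
\]
Because $W$ is finite and $\cL$ is closed under finite disjunction, $\bigvee_{w \in V_F(p)} \phi_w$ is a formula of $\cL$, so $V(p) \in \mathrm{Admiss}(\cL)$ and $V$ is an admissible valuation; this is exactly why the truth lemma below yields a refutation witnessing $\phi \notin \Lambda_\GF(\GFw)$.

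The heart of the argument is a truth lemma: for every modal formula $\psi$ and every $\GFv \in \GFW$ with $\GFw \GFR \GFv$, writing $w$ for the unique node of $W$ with $\GFv \sat \phi_w$ (unique by condition 1 of the labeling, which also shows $V$ is well-behaved on the cone above $\GFw$), I would prove by induction on $\psi$ that
\[
(\GF, V), \GFv \Sat \psi \quad\iff\quad (F, V_F), w \Sat \psi.
\]
The propositional case uses condition 1: since $\GFv$ satisfies exactly one $\phi_u$, namely $\phi_w$, we have $\GFv \in V(p)$ iff $w \in V_F(p)$, which matches $(F,V_F), w \Sat p$. The Boolean cases are immediate from the induction hypothesis, the pairing of $\GFv$ with $w$ being preserved.

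The modal case $\psi = \ldia \chi$ is where condition 2 of the labeling does the work, and is the step I expect to be the main obstacle. For the direction from $F$ to $\GF$: if $(F,V_F), w \Sat \ldia \chi$, then some $v$ with $w \preccurlyeq v$ satisfies $\chi$; since $w \preccurlyeq v$, condition 2 provides $\GFu$ with $\GFv \GFR \GFu$ and $\GFu \sat \phi_v$, and transitivity gives $\GFw \GFR \GFu$, so the induction hypothesis at $\GFu$ transfers $(F,V_F), v \Sat \chi$ to $(\GF,V), \GFu \Sat \chi$, whence $(\GF,V), \GFv \Sat \ldia \chi$. Conversely, if $(\GF,V), \GFv \Sat \ldia \chi$ via some $\GFu$ with $\GFv \GFR \GFu$, let $v$ be the unique node with $\GFu \sat \phi_v$; the induction hypothesis gives $(F,V_F), v \Sat \chi$, and, since the witness $\GFu$ exists, condition 2 forces $w \preccurlyeq v$, so $(F,V_F), w \Sat \ldia \chi$. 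The care needed is to keep the biconditional of condition 2 pointing the right way in each direction and to invoke transitivity so that every visited world remains above $\GFw$ and is therefore labeled.

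Finally, I would instantiate the truth lemma at $\GFv = \GFw$: by condition 3 the relevant node is $w_0$, so $(\GF,V), \GFw \Sat \phi$ iff $(F,V_F), w_0 \Sat \phi$. Since the right-hand side fails by choice of $V_F$, we obtain $(\GF,V), \GFw \nSat \phi$ with $V$ admissible, which is precisely $\phi \notin \Lambda_\GF(\GFw)$.
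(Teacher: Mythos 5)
Your proposal is correct and follows essentially the same route as the paper: both pull back a refuting valuation from $F$ to $\GF$ via $V(p) = \{\GFv \in \GFW \mid \GFv \sat \bigvee_{w \in V_F(p)} \phi_w\}$ and conclude $(\GF,V),\GFw \nSat \phi$. The only difference is that the paper leaves the truth-lemma induction implicit (deferring to \cite[Lem.~9]{hamkins2015structural}), whereas you spell it out, correctly using condition 1 for the propositional case, condition 2 (in both directions of its biconditional, with transitivity of $\GFR$) for the $\ldia$ case, and condition 3 to instantiate at $\GFw$ and $w_0$.
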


\begin{proof}
Let $V$ be a valuation on $F$ such that $(F,V),w_0 \nSat \phi$. Consider the valuation $V'$ on $\GF$ with $V'(p) = \{\GFv \in \GFW: \GFv \sat \bigvee_{w \in V(p)} \phi_{w} \}$. Then, $(\GF, V'),\GFw \nSat \phi$. \qed
\end{proof}

\begin{figure*}[t]
\vspace{-12pt}
    \begin{subfigure}[b]{0.3\textwidth}
        \centering
        \resizebox{.6\textwidth}{!}{%
            \begin{tikzpicture}[scale=1,auto,node distance=4mm,>=latex]
                \tikzstyle{rect}=[draw=white,rectangle]
                \node[circle, draw=black, fill = black!10!white] (w0) at (0,0) {};
                \node[circle, draw=black, fill = black!10!white] (w1) at (.7,0) {};
                \node[circle, draw=black, fill = black!10!white] (w2) at (-.5,.7) {};
                \node[circle, draw=black, fill = black!10!white] (w3) at (-1.2,.7) {};
                \node[circle, draw=black, fill = black!10!white] (w4) at (.5,.7) {};
                \node[circle, draw=black, fill = black!10!white] (w5) at (1.2,.7) {};
                \node[circle, draw=black, fill = black!10!white] (w6) at (0,1.4) {};
                \node[circle, draw=black, fill = black!10!white] (w7) at (.7,1.4) {};
                \draw
                (w0)     edge[bend right]                node{} (w1)
                (w1)     edge[bend right]                node{} (w0)
                (w2)     edge[bend right]                node{} (w3)
                (w3)     edge[bend right]                node{} (w2)
                (w4)     edge[bend right]                node{} (w5)
                (w5)     edge[bend right]                node{} (w4)
                (w6)     edge[bend right]                node{} (w7)
                (w7)     edge[bend right]                node{} (w6)
                (w0)     edge (w2)
                (w0)     edge (w4)
                (w2) edge (w6)
                (w4) edge (w6)
                                
                                            ;
            \end{tikzpicture}
        }
        \caption{Pre-Boolean algebra.}
        \label{fig:pre-boolean}
    \end{subfigure}\hfill
    \begin{subfigure}[b]{0.3\textwidth}
        \centering
         \resizebox{.6\textwidth}{!}{%
            \begin{tikzpicture}[scale=1,auto,node distance=4mm,>=latex]
                \tikzstyle{rect}=[draw=white,rectangle]
                \node[circle, draw=black, fill = black!10!white] (w0) at (0,0) {};
                \node[circle, draw=black, fill = black!10!white] (w1) at (.7,0) {};
                \node[circle, draw=black, fill = black!10!white] (w2) at (-.5,.7) {};
                \node[circle, draw=black, fill = black!10!white] (w3) at (-1.2,.7) {};
                \node[circle, draw=black, fill = black!10!white] (w4) at (.5,.7) {};
                \node[circle, draw=black, fill = black!10!white] (w5) at (1.2,.7) {};
                \node[circle, draw=black, fill = black!10!white] (w6) at (0,1.4) {};
                \node[circle, draw=black, fill = black!10!white] (w7) at (.7,1.4) {};
                 \node[circle, draw=black, fill = black!10!white] (w8) at (0,2.1) {};
                \draw
                (w0)     edge[bend right]                node{} (w1)
                (w1)     edge[bend right]                node{} (w0)
                (w2)     edge[bend right]                node{} (w3)
                (w3)     edge[bend right]                node{} (w2)
                (w4)     edge[bend right]                node{} (w5)
                (w5)     edge[bend right]                node{} (w4)
                (w6)     edge[bend right]                node{} (w7)
                (w7)     edge[bend right]                node{} (w6)
                (w0)     edge (w2)
                (w0)     edge (w4)
                (w2) edge (w6)
                (w4) edge (w6)
                                (w6) edge (w8)
                                            ;
            \end{tikzpicture}
        }
        \caption{Inverted lollipop.}
        \label{fig:lollipop}
    \end{subfigure}\hfill
    \begin{subfigure}[b]{0.3\textwidth}
        \centering
        \vspace{-18pt}
       \resizebox{.9\textwidth}{!}{%
            \begin{tikzpicture}[scale=1,auto,node distance=4mm,>=latex]
                \tikzstyle{rect}=[draw=white,rectangle]
                \node   (w0) at (0,0) {\scriptsize$(?,?)$};
                \node   (w2) at (-.5,1) {\scriptsize$(?,1)$};
                 \node   (w1) at (-1.5,1) {\scriptsize$(0,?)$};
                   \node   (w3) at (.5,1) {\scriptsize$(?,0)$};
                            \node   (w4) at (1.5,1) {\scriptsize$(1,?)$};
                             \node   (w6) at (-.5,2) {\scriptsize$(0,0)$};
                 \node   (w5) at (-1.5,2) {\scriptsize$(0,1)$};
                   \node   (w7) at (.5,2) {\scriptsize$(1,1)$};
                            \node   (w8) at (1.5,2) {\scriptsize$(1,0)$};
              
                              \draw
                                                (w0) edge (w1)      
                                                (w0) edge (w2)       
                                                (w0) edge (w3)       
                                                (w0) edge (w4)       
                                                (w1) edge (w5)       
                                                (w1) edge (w6)       
                                                (w2) edge (w5)       
                                                (w2) edge (w7)       
                                                (w3) edge (w6)       
                                                (w3) edge (w8)       
                                                (w4) edge (w7)       
                                                (w4) edge (w8)           
                                                          ;
            \end{tikzpicture}
        }
                \caption{Partial function poset.}
        \label{fig:FPF}
    \end{subfigure}
                \vspace{-10pt}
    \caption{Illustration of different types of transitive and reflexive frames.}
    \label{fig:structures}
\end{figure*}

\Cref{thm:S4.2} can be shown using $F$-labelings by showing that sufficiently many independent pure buttons and switches over a structure $c$ allow to provide an $F$-labeling
for any finite pre-Boolean algebra $F$. 
A finite Boolean algebra is a partial order isomorphic to a powerset $2^X$ ordered by inclusion $\subseteq$ for some finite set $X$.
A finite pre-Boolean algebra, in turn, is a partial pre-order,  which is obtained by replacing the nodes of a finite Boolean algebra by  clusters of states that can see each other as illustrated in \Cref{fig:pre-boolean}.
Together with the soundness and completeness result stating that the set of modal formulas valid on all finite pre-Boolean algebras is precisely $\mathsf{S4.2}$, this shows \Cref{thm:S4.2} (see \cite{hamkins2008modal,hamkins2015structural}).

\subsection{Upper bounds}

To prove the upper bounds for the modal logics of abstraction refinements,
we will start by utilizing Thm. \ref{thm:S4.2} to prove   $\MLARfin_\cT\subseteq\mathsf{S4.2}$ for some $\cT$.
Afterwards, we prove new results  to obtain a technique to prove  upper bounds of  $\mathsf{S4.2.1}$ and $\mathsf{S4FPF}$ using new control statements. 
The  soundness and completeness results these proofs rely on and the respective control statements are shown in Table~\ref{tbl:completeness}.

   \begin{table}[t]
    \begin{tabular}[h!]{ p{0.08\textwidth} p{0.45\textwidth} p{0.45\textwidth}}
    logic & sound and complete with respect to & control statements for upper bounds \\\hline
     $\mathsf{S4.2}$ & finite pre-Boolean algebras \cite{hamkins2008modal}&   pure buttons, switches \cite{hamkins2008modal} \\
    $\mathsf{S4.2.1}$ & inverted lollipops \cite{inamdar2016modal}&    pure buttons, restr. switches (Thm. \ref{thm:S4.2.1})  \\
    $\mathsf{S4FPF}$ & finite partial function posets (by def.)&  decisions (\Cref{thm:S4FPF})\\
\end{tabular}
\caption{Soundness and completeness results and  control statements.}
\vspace{-12pt}
\label{tbl:completeness}
\end{table}

\noindent\textbf{Upper bound $\mathsf{S4.2}$.}
\label{sec:S4.2}
Using \Cref{thm:S4.2}, we  prove:
\begin{restatable}{theorem}{upperfin}
\label{thm:upper_fin}
    There is a transition system $\cT$ with $\MLARfin_{\cT} \subseteq \mathsf{S4.2}$.
\end{restatable}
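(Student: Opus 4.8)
The plan is to invoke \Cref{thm:S4.2}: it suffices to produce a transition system $\cT$ together with a single world $\GFw \in \cF_\cT$ such that, for every $n,m \in \mathbb{N}$, there is an independent family consisting of $n$ unpushed pure buttons and $m$ switches in $\GF_{\cF_\cT,\abstracts,\mathrm{CTL}}$ at $\GFw$. Indeed, \Cref{thm:S4.2} then yields $\Lambda_{\GF}(\GFw) \subseteq \mathsf{S4.2}$, and since $\MLARfin_\cT = \Lambda_\GF = \bigcap_{\GFv \in \cF_\cT} \Lambda_\GF(\GFv) \subseteq \Lambda_\GF(\GFw)$, the desired upper bound $\MLARfin_\cT \subseteq \mathsf{S4.2}$ follows. (Together with \Cref{prop:lower_fin} this even gives equality, but only the inclusion is claimed here.)

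The first design decision is that $\cT$ must be \emph{infinite}. If $\cT$ were finite, then $\cF_\cT$ would be finite and would contain the $\abstracts$-maximal element $\cT$ itself; at that world $\ldia s \land \ldia \lnot s$ fails for every CTL formula $s$, so a switch at $\GFw$ (whose defining $\lbox$-formula would have to hold at the reachable maximum) could not exist. I would therefore take $\cT$ to be a fixed infinite transition system assembled from countably many pairwise independent \emph{gadgets}, reachable from the initial state and placed at distinct distances so that they are addressable by the corresponding $\forall \lnext^i$ prefix over a finite label alphabet. This keeps the coarsest abstraction finite, so I may take $\GFw$ to be that coarsest abstraction, which is a root of the frame and at which every button is to be unpushed.

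For the buttons I would exploit that satisfaction of ACTL formulas is preserved under refinement \cite{DBLP:conf/cav/ClarkeGJLV00,DBLP:journals/jacm/ClarkeGJLV03}: an ACTL formula true in an abstraction stays true in every refinement, which is exactly the monotonicity half $\lbox(b \imp \lbox b)$ of a pure button. Each button gadget would carry a universal property of the form $\forall \lnext^i \Phi$ that is violated by a spurious path in the coarse abstraction but becomes true once the gadget's states are split, and that can be pushed from any abstraction by performing that split; this yields the density condition $\lbox \ldia b$ and keeps the button unpushed at the root. For the switches I would use an infinite refinable \emph{chain} gadget $g_0 \to g_1 \to g_2 \to \cdots$ in which a finite abstraction collapses a tail into a single looping abstract state; the CTL formula would read off the parity of the current splitting frontier, so that splitting one more state flips its truth value. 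Because the chain is infinite, from every abstraction one can always split further to obtain the opposite parity, giving both $\ldia s$ and $\ldia \lnot s$ at every reachable world, i.e.\ a genuine switch.

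Independence would follow from the region-disjointness of the gadgets: the CTL formula witnessing each control statement speaks only about its own region, so refining one gadget leaves the truth values in the others untouched, and an arbitrary target combination $I_1 \supseteq I_0$ of buttons and $J_1$ of switches is realized by composing the per-gadget refinements. The main obstacle is the construction of the switches: since refinement is a monotone, one-directional operation on the poset of abstractions, a switch can never be ``toggled back'' at a fixed world but must instead be realized by moving to incomparable finer abstractions. Making this precise---in particular producing a clean parity-flipping CTL formula on the chain gadget that interferes with neither the buttons nor the other switches, while verifying the full independence condition---is the delicate part; the buttons, by contrast, are essentially immediate from ACTL preservation.
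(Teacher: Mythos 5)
Your high-level skeleton matches the paper's: reduce to \Cref{thm:S4.2}, note that control statements at a single world $\GFw$ suffice because $\MLARfin_{\cT}=\bigcap_{\cS\in\cF_{\cT}}\MLARfin_{\cT}(\cS)$, observe that $\cT$ must be infinite, obtain button monotonicity from universal path quantification, and realize switch-toggling by moving to incomparable refinements rather than by undoing splits. The genuine gap is the switch gadget, exactly the part you flag as delicate, and it cannot be repaired in the form you propose. First, ``parity of the splitting frontier'' is not CTL-expressible: even-step reachability over unbounded distances is a standard example of a property needing the modal $\mu$-calculus. Second, and more fatally, the abstractions of a chain $g_0\to g_1\to g_2\to\cdots$ that differ only in where the collapsed looping tail begins are pairwise \emph{bisimilar} (with one repeated label the total relation is a bisimulation; with an alternating label pair the same-label relation is), hence they satisfy exactly the same CTL formulas, so \emph{no} CTL formula whatsoever---parity-based or otherwise---can distinguish them, and no admissible valuation separates these worlds. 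This is why the paper's switch gadget is not a chain but a branching structure: for each $j$ it uses infinitely many groups, each containing infinitely many length-$j$ paths to $f$ joined by connector states labelled $x$, and sets $\switch_j=\exists\lnext(\phi_j\land\forall\lnext(x\imp\forall\lnext\lnot\phi_j))$ with $\phi_j=\forall\lnext(b\imp\forall\lnext^{j-2}(b\land\forall\lnext f))$, i.e., ``some group contains \emph{exactly one} isolated path''; the switch is turned on by isolating a path in a fresh group and turned off by isolating a \emph{second} path in every group that currently has exactly one. This respects the one-directionality you correctly identified, but it needs width (many $f$-terminated paths of fixed length), not depth.

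A second gap is your independence argument. ``Region-disjointness'' is not available with a finite label alphabet and infinitely many gadgets: an abstraction function is only required to preserve labels, so states of distinct gadgets sharing a label may be merged, and collapsing an infinite spine creates shortcuts (abstract distances only shrink), which destroys the $\forall\lnext^{i}$ distance-addressing on which both your buttons and your independence claim rely. The paper uses label-disjointness only to separate the button part $\cT^{\button}$ from the switch part $\cT^{\switch}$; independence among the buttons themselves, which all share the label $a$, comes instead from having infinitely many copies of each path length: in any finite abstraction some copy passes only through infinite equivalence classes, so it can be split off without changing the status of any other control statement. Relatedly, your buttons cannot be purely universal formulas $\forall\lnext^{i}\Phi$: evaluated at the shared (merged) initial state they quantify over all gadgets' states at once, which is precisely why the paper's buttons carry an outer existential, $\button_i=\exists\lnext(\forall\lnext^{i-1}(a\land\forall\lnext f))$, with only the inner part universal---so ACTL preservation alone does not make the buttons ``essentially immediate.''
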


\begin{proof}[Proof sketch]
We provide a transition system $\cT$ and an infinite independent set of pure buttons and switches in $G_{\cF_\cT, \abstracts, \mathrm{CTL}}$ at any $\cS$. By Thm. \ref{thm:S4.2}, this implies $\MLARfin_{\cT}(\cS) \subseteq \mathsf{S4.2}$ for all $\cS \in \cF_\cT$. A sketch of $\cT$ is depicted in Fig. \ref{fig:S4.2_main} where we, notably, have different parts $\cT^\button$, $\cT^\switch$ for the different types of control statements. So, an abstraction of $\cT$ can be viewed as independently abstracting $\cT^\beta$ and $\cT^\sigma$, which  guarantees that we do not have a dependence between a pure button and a switch. We will give full details on $\cT^\button$ and the pure buttons, but only sketch the more involved switches (for the full proof, see App.~\ref{app:upperfin}).

\vspace{6pt}
\noindent
\textbf{Pure buttons.} 
We define $\cT^\button = (S, \rightarrow, \{s\}, \{s, f, a\}, L)$ sketched  in Fig. \ref{fig:buttons_S4.2_main}: 
\begin{itemize}
\item
The state space is $S = \{s,f\} \cup \{(i,h,k)\in \mathbb{N}^3 \mid i\geq 2, i\geq h \geq 1\}$. 
\item
The relation $\rightarrow$ is given by 
\begin{itemize}
\item
$s\rightarrow (i,1,k)$ for all $i\geq 2$ and  all $k$,
\item
$(i,h,k)\rightarrow (i,h+1,k)$ for all $i\geq 2$, all $h<i$, and all $k$,
\item
$(i,i,k) \rightarrow f$ for all $i\geq 2$ and all $k$, and 
$f\rightarrow f$.
\end{itemize}
\item
The labeling over atomic propositions $\{s,f,a\}$ is given by $L(s)=\{s\}$, $L(f)=\{f\}$, and $L((i,h,k)) = \{a\}$ for all $i,h,k$.
\end{itemize}
\vspace{-6pt}

\noindent
Given a finite abstraction $\cS$ of $\cT^\beta$ with abstraction function $f$, we denote the set of states  mapped to the same state as  $(i,h,k)$ by $[(i,h,k)]$.
Overloading the notation, we  identify the state $f((i,h,k))$ with this equivalence class $[(i,h,k)]$.

The idea behind the $i$th pure button is roughly to say that some path along $i$ states labelled with $a$ from $s$ to $f$ has been ``isolated'' and not merged with paths of other length.
For any $i \ge 2$, we define the CTL-sentence
\[\button_i = \exists \lnext (\forall \lnext^{i-1} (a \land \forall \lnext f)).\]
In a finite abstraction $\cS$ of $\cT^\beta$, $\button_i$ holds iff there exists $k \geq 1$ such that $a^if^\omega$ is the only trace of paths starting in $[(i,1,k)]$.
 Once the only trace of paths starting in $[(i,1,k)]$ is $a^if^\omega$, this cannot be changed in a finer abstraction of $\cT^\beta$.

Further, whenever $\button_i$ is unpushed at an abstract transition system $\mathcal{S}_0$ then there is an accessible abstraction $\mathcal{S}_1$ of $\cT^\beta$ where $\button_i$ is pushed. In a finite abstraction there is always a path $[(i,1,k)],\ldots,[(i,i,k)]$ for some $k \geq 1$ that only visits infinite equivalence classes. Otherwise we would have infinitely many finite equivalence classes which leads to a contradiction. Thus, it is possible to safely (without pushing other pure buttons) split off the path $(i,1,k),\ldots,(i,i,k)$ by going to a refinement $\mathcal{S}_1$ that is even equivalence class preserving. 
Note that at any finite transition system $\cS$ only finitely many $\button_i$ can be pushed. So, for any $n \in \mathbb{N}$, there is a $I \subset \mathbb{N}$ with $|I| = n$ such that for any $i \in I$, $\button_i$ is unpushed.

\vspace{6pt}
\noindent
\textbf{Switches.}
In $\cT^\sigma$  we add an additional ``dimension'': we not only have infinitely many copies of paths of equal length, but put them into infinitely many ``groups'' of infinite size. Precisely such a group is depicted in \autoref{fig:switches_S4.2_main} for length $3$. Paths within a group are connected via special states allowing them to ``see'' all other paths in that group but themselves. Then, a switch $\sigma_j$ is on whenever there is a group containing paths of length $j$ where exactly one path is isolated. Again, because of the infinitely many copies, we can ensure that a switch can always be turned on and that we also have independence.
\qed
		  \end{proof}

		  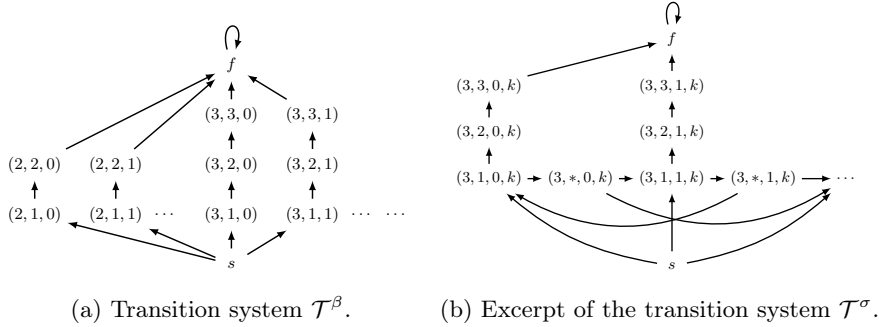
\begin{figure*}[t]
		    \begin{subfigure}[b]{0.45\textwidth}
\centering
    \resizebox{1\textwidth}{!}{%
      \begin{tikzpicture}[scale=1,auto,node distance=3.5mm,>=latex]
        \tikzstyle{rect}=[draw=white,rectangle]

        \node[minimum size=6mm] (a210) {$(2,1,0)$};
         \node[minimum size=6mm,above= of a210] (a220) {$(2,2,0)$};
         
          \node[minimum size=6mm,right=3mm of a210] (a211) {$(2,1,1)$};
         \node[minimum size=6mm,above= of a211] (a221) {$(2,2,1)$};

         \node[minimum size=6mm,right= 0mm of a211] (a212) {$\dots$};

                                \draw[color=black ,  thick] (a210) ->  (a220);
			 \draw[color=black ,  thick] (a211) ->  (a221);

			    \node[minimum size=6mm,right= 3mm of a212] (a310) {$(3,1,0)$};
         \node[minimum size=6mm,above= of a310] (a320) {$(3,2,0)$};
         \node[minimum size=6mm,above= of a320] (a330) {$(3,3,0)$};
         
          \node[minimum size=6mm,right=3mm of a310] (a311) {$(3,1,1)$};
         \node[minimum size=6mm,above= of a311] (a321) {$(3,2,1)$};
         \node[minimum size=6mm,above= of a321] (a331) {$(3,3,1)$};

         \node[minimum size=6mm,right=0mm of a311] (a312) {$\dots$};
          \node[minimum size=6mm,right=0mm of a312] (a313) {$\dots$};
         

           \node[minimum size=6mm,below= of a310] (s) {$s$};
           
             \node[minimum size=6mm,above= of a330] (f) {$f$};
          
             \draw[color=black ,  thick] (a220) ->  (f);
              \draw[color=black ,  thick] (a221) ->  (f);
              
                  \draw[color=black ,  thick] (a330) ->  (f);
              \draw[color=black ,  thick] (a331) ->  (f);

              \draw[color=black ,  thick,->] (f) edge [loop above]   (f);

            \draw[color=black ,  thick] (s) ->  (a310);
              \draw[color=black ,  thick] (s) ->  (a311);

            \draw[color=black ,  thick] (s) ->  (a210);
              \draw[color=black ,  thick] (s) ->  (a211);

                                \draw[color=black ,  thick] (a310) ->  (a320);
                                  \draw[color=black ,  thick] (a320) ->  (a330);
			 \draw[color=black ,  thick] (a311) ->  (a321);
                                  \draw[color=black ,  thick] (a321) ->  (a331);

      \end{tikzpicture}
    }
\caption{Transition system  $\cT^\beta$.}
  \label{fig:buttons_S4.2_main}
  \end{subfigure}
      \begin{subfigure}[b]{0.5\textwidth}
\centering
    \resizebox{.9\textwidth}{!}{%
      \begin{tikzpicture}[scale=1,auto,node distance=3.5mm,>=latex]
        \tikzstyle{rect}=[draw=white,rectangle]

        \node[minimum size=6mm] (a310k) {$(3,1,0,k)$};
         \node[minimum size=6mm,above= of a310k] (a320k) {$(3,2,0,k)$};
                  \node[minimum size=6mm,above= of a320k] (a330k) {$(3,3,0,k)$};

                   \node[minimum size=6mm,right=3mm of a310k] (a3x0k) {$(3,\ast,0,k)$};

                          \node[minimum size=6mm,right=3mm of a3x0k] (a311k) {$(3,1,1,k)$};
         \node[minimum size=6mm,above= of a311k] (a321k) {$(3,2,1,k)$};
                  \node[minimum size=6mm,above= of a321k] (a331k) {$(3,3,1,k)$};
                  
                    \node[minimum size=6mm,right=3mm of a311k] (a3x1k) {$(3,\ast,1,k)$};

         \node[minimum size=6mm,right= 6mm of a3x1k] (a4) {$\dots$};

                    \node[minimum size=6mm,below= 12mm of a311k] (s) {$s$};
            \node[minimum size=6mm,above=  of a331k] (f) {$f$};
            
                          \draw[color=black ,  thick,->] (f) edge [loop above]   (f);

             \draw[color=black ,  thick] (s) edge [bend left=15]  (a310k);
              \draw[color=black ,  thick] (s) edge [bend right=15]  (a4);
             \draw[color=black ,  thick] (s) ->  (a311k);
             \draw[color=black ,  thick] (a310k) ->  (a320k);
             \draw[color=black ,  thick] (a320k) ->  (a330k);
             \draw[color=black ,  thick] (a311k) ->  (a321k);
             \draw[color=black ,  thick] (a321k) ->  (a331k);
                          \draw[color=black ,  thick] (a331k) ->  (f);
             \draw[color=black ,  thick] (a330k) ->  (f);
                          \draw[color=black ,  thick] (a310k) ->  (a3x0k);
                                                    \draw[color=black ,  thick] (a311k) ->  (a3x1k);
                                                  \draw[color=black ,  thick] (a3x0k) ->  (a311k);
                                      \draw[color=black ,  thick] (a3x1k) ->  (a4);
                                      \draw[color=black ,  thick] (a3x1k) edge [bend left]  (a310k);
                                       \draw[color=black ,  thick] (a3x0k) edge [bend right]  (a4);

      \end{tikzpicture}
    }
\caption{Excerpt of the transition system  $\cT^\sigma$.}
  \label{fig:switches_S4.2_main}
  \end{subfigure}
  \caption{The components of the transition system $\cT$ in the proof of \Cref{thm:upper_fin} sharing the initial state $s$.}
          \vspace{-12pt}
\label{fig:S4.2_main}
\end{figure*}

\begin{example}
By providing an upper bound for $\MLARfin_{\cT} $, we show that all formulas $\varphi$ not in $ \mathsf{S4.2}$ are falsifiable in the sense that there is a transition system $\cT$, a finite abstraction $\cS$, and CTL formulas $P$ for each atomic proposition $p$ of $\varphi$ such that $\varphi$ does not hold at $\cS$ in $\cF_\cT$ when making each $p$ true at all transition systems in $\cF_\cT$ satisfying the corresponding $P$.
The proof provides the necessary transition system and abstractions:
Consider, e.g., the formula $(\mathsf{.1})=\lbox \ldia p \imp \ldia \lbox p\not \in \mathsf{S4.2}$. It can be falsified on a pre-Boolean algebra $F$ with just one cluster of two states. A single switch is then sufficient to obtain an $F$-labelling.
Now, letting $\cS$ be any finite abstraction of the transition system $\cT$ constructed in the proof of \Cref{thm:upper_fin} and making $p$ true at all transition systems that satisfy one of the switches $\sigma$, we have that $\lbox \ldia p$ holds at $\cS$, but $\ldia \lbox p$ does not hold   by the definition of a switch.
\end{example}

\noindent\textbf{Upper bound $\mathsf{S4.2.1}$.}
\label{sec:S4.2.1}
The modal logic $\mathsf{S4.2.1}$ is the set of formulas valid on all \emph{inverted lollipops} as shown in \cite{inamdar2016modal}.
An inverted lollipop is a pre-Boolean algebra equipped with an additional single top element that is larger than any element in the pre-Boolean algebra as illustrated in \Cref{fig:lollipop}.
We introduce new control statements, called \emph{$\Button$-restricted switches} for a pure button $\Button$, to make use of this characterization result to prove the upper bound $\mathsf{S4.2.1}$ for $\MLARall_\cT$ for some $\cT$ and for~$\MLAR$. Intuitively, $\Button$-restricted switches differ from regular switches by allowing them to break similarly to how pure weak buttons break by pushing $\Button$.

Formally, let $\GF$ be the general Kripke frame with respect to some $\GFW$, $\GFR$ and $\mathcal{L}$ as defined in \autoref{def:G} and let $\GFw \in \GFW$.
Let $\Button$ be an unpushed pure button in $\GF$ at $\GFw$. A sentence $\switchB \in \mathcal{L}$ is a \emph{$\Button$-restricted switch} in $\GF$ at $\GFw$ if,
for any valuation $V$ with $V(s) = \{\GFv \in \GFW \mid \GFv \sat \switchB\}$ and $V(B) = \{\GFv \in \GFW \mid \GFv \sat \Button\}$,
we have $(\GF,V),\GFw \Sat \lbox (\lnot B \imp (\ldia (s \land \lnot B) \land \ldia (\lnot s \land \lnot B)))$.
A $\Button$-restricted switch $\switchB$ is \emph{on} at $\GFu \in \GFW$ with $\GFw \GFR \GFu$ if $(\GF,V),\GFu \Sat s$ and \emph{off} otherwise.
Intuitively, it must be possible to turn a $\Button$-restricted switch on and off as long as $\Button$ is not true yet. As soon as $\Button$ is true, the truth value of the restricted switch does not matter. This pattern is illustrated in  \Cref{sub@fig:restricted_switch} where the restricted switch  is true at worlds filled in black   and squares represent worlds where the restricting pure button is true.
Further, a $\Button$-restricted switch $\switchB$ is \emph{intact} at $\GFu \in \GFW$ with $\GFw \GFR \GFu$ if $(\GF,V),\GFu \Sat \lnot B$ and \emph{broken} otherwise.

We again need a notion of independence between different $\Button$-restricted switches; also in combination with pure buttons. As $\Button$-restricted switches are only meaningful as long as they are intact we define independence only until $\Button$ is pushed.

Let $n,m \in \mathbb{N}$. Let $\buttonSet = \{\button_i \mid 0 \le i \le n-1\}\cup \{\Button \}$ be a set of unpushed pure buttons in $\GF$ at $\GFw$ and let $\switchSet = \{\switchB_j \mid 0 \le j \le m-1\}$ be a set of $\Button$-restricted switches in $\GF$ at $\GFw$. We call $\buttonSet \cup \switchSet$ \emph{independent until $\Button$} if for any $I_0 \subseteq I_1 \subseteq \{0, 1, \ldots, n-1\}$ and $J_0, J_1 \subseteq \{0, 1, \ldots, m-1\}$,
and any $V$ with $V(b_i) = \{\GFv \in \GFW \mid \GFv \sat \button_i\}$, $V(s_j) = \{\GFv \in \GFW \mid \GFv \sat \switchB_j\}$ and $V(B) = \{\GFv \in \GFW \mid \GFv \sat \Button\}$,
\begin{align*}
(\GF,V),\GFw \Sat \lbox \big[ &\big( 
   \textstyle \bigwedge\nolimits_{i \in I_0} b_i \land 
    \bigwedge\nolimits_{i \notin I_0} \lnot b_i \land 
    \bigwedge\nolimits_{j \in J_0} s_j \land 
    \bigwedge\nolimits_{j \notin J_0} \lnot s_j \land
    \lnot B
\big) \\ \imp 
\ldia &\big( 
  \textstyle  \bigwedge\nolimits_{i \in I_1} b_i \land 
    \bigwedge\nolimits_{i \notin I_1} \lnot  b_i \land 
    \bigwedge\nolimits_{j \in J_1} s_j \land 
    \bigwedge\nolimits_{j \notin J_1} \lnot s_j  \land
    \lnot B
\big) \big]    .
\end{align*}
So, we require that it is possible to change truth values of switches and push the pure buttons independently (without pushing $\Button$) only as long as $\Button$ is not true. Once $\Button$ is true, the truth values of the control statements are irrelevant.

\begin{theorem}
\label{thm:S4.2.1}
Let $\GF$ be the general frame w.r.t. some $\GFW$, $\GFR$ and $\mathcal{L}$ as in Def.~\ref{def:G} and let $\GFw \in \GFW$. If for any $n,m \in \mathbb{N}$, there are sets of unpushed pure buttons $\buttonSet = \{\button_i \mid 0 \le i \le n-1\} \cup \{\Button\}$ in $\GF$ at $\GFw$ and  of $\Button$-restricted switches $\switchSet = \{\switchB_j \mid 0 \le j \le m-1\}$ in $\GF$ at $\GFw$ s.t. $\buttonSet \cup \switchSet$ is independent until $\Button$, then $\Lambda_\GF(\GFw) \subseteq \mathsf{S4.2.1}$.
\end{theorem}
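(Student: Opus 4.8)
The plan is to mirror the proof of \Cref{thm:S4.2} via the $F$-labeling technology of \Cref{lem:labeling}, replacing finite pre-Boolean algebras by finite inverted lollipops. I rely on the result of \cite{inamdar2016modal} that $\mathsf{S4.2.1}$ is sound and complete with respect to finite inverted lollipops, so it suffices to show that every modal formula $\phi \notin \mathsf{S4.2.1}$ satisfies $\phi \notin \Lambda_\GF(\GFw)$.

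Fix $\phi \notin \mathsf{S4.2.1}$. By completeness, $\phi$ fails on some finite inverted lollipop; passing to the generated subframe rooted at a refuting world—again a finite inverted lollipop, since the upward closure of a node in a pre-Boolean algebra together with the single top is again an inverted lollipop—I may assume there is a rooted inverted lollipop $F = (W,\preccurlyeq)$ with root $w_0$ and $F, w_0 \nSat \phi$. Decompose $F$ into its pre-Boolean body $P$, with underlying Boolean lattice over $\{0,\dots,n-1\}$, together with the single top element $t$ lying above all of $P$. Choosing $n,m$ large enough to $F$-label $P$ by pure buttons and switches in the manner of \Cref{thm:S4.2}, the hypothesis supplies an independent-until-$\Button$ family consisting of the restricting pure button $\Button$, unpushed pure buttons $\button_0,\dots,\button_{n-1}$, and $\Button$-restricted switches $\switchB_0,\dots,\switchB_{m-1}$ at $\GFw$.

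I then construct an $F$-labeling for $\GFw$. To the top I assign $\phi_t = \Button$, and to each node of $P$ identified with a pair $(I,J)$ of pushed buttons $I \subseteq \{0,\dots,n-1\}$ and on-switches $J \subseteq \{0,\dots,m-1\}$ I assign
\[
\phi_{(I,J)} = \lnot \Button \land \textstyle\bigwedge_{i \in I} \button_i \land \bigwedge_{i \notin I} \lnot \button_i \land \bigwedge_{j \in J} \switchB_j \land \bigwedge_{j \notin J} \lnot \switchB_j,
\]
i.e. the $\mathsf{S4.2}$-labeling of $P$ relativized to $\lnot\Button$. I verify the three defining conditions. Condition~1 holds because the $\GFv$ with $\Button$ true are exactly those mapped to $t$, while the remaining $\lnot\Button$-structures are partitioned by the definite truth values of the $\button_i$ and $\switchB_j$. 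Condition~3 holds since at $\GFw$ all of $\button_0,\dots,\button_{n-1},\Button$ are unpushed, so $\GFw$ realizes a bottom cluster $(\emptyset, J^\ast)$, which is a root of $F$. The heart is Condition~2, matching $\GFR$-reachability with $\preccurlyeq$: within $P$, label $(I_1,J_1)$ is reachable from label $(I_0,J_0)$ iff $I_0 \subseteq I_1$—the ``if'' direction is exactly independence until $\Button$, and the ``only if'' direction uses that pushed pure buttons stay pushed, $\lbox(\button_i \imp \lbox \button_i)$; the top $t$ is reachable from every node of $P$ because $\Button$ is a pure button, $\lbox \ldia \Button$; and from $t$ only $t$ is reachable because $\Button$, once true, stays true, $\lbox(\Button \imp \lbox \Button)$, so no $\lnot\Button$-label is reachable from the top. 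By \Cref{lem:labeling}, $F, w_0 \nSat \phi$ yields $\phi \notin \Lambda_\GF(\GFw)$.

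The main obstacle is the correct treatment of the interface between the pre-Boolean body and the top of the lollipop: one must check that the $\button_i$ and $\switchB_j$ genuinely behave like the independent pure buttons and switches of the $\mathsf{S4.2}$ construction precisely on the $\lnot\Button$-reachable part, while the restricting button $\Button$ simultaneously acts as a one-way gate to a single absorbing top world. Verifying that these two behaviors are compatible—that pushing $\Button$ is neither forced prematurely while navigating $P$ (ensured by the $\lnot\Button$ conjuncts in the independence formula) nor reversible (ensured by $\Button$ being a pure button)—is the crux, and is exactly what the notion ``independent until $\Button$'' was designed to supply.
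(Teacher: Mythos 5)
Your proof is correct and takes essentially the same route as the paper's: both invoke completeness of $\mathsf{S4.2.1}$ with respect to finite inverted lollipops, label the pre-Boolean body by button/switch truth-value patterns conjoined with $\lnot \Button$ and the top by $\Button$, verify the labeling conditions via independence until $\Button$ and the persistence of pure buttons, and conclude with \Cref{lem:labeling}. Your explicit inclusion of the negative conjuncts $\bigwedge_{i \notin I} \lnot \button_i \land \bigwedge_{j \notin J} \lnot \switchB_j$ (which the paper's displayed labeling omits but implicitly needs for uniqueness in condition~1) and your generated-subframe justification that the refuting world can be taken as a root are slightly more careful renderings of steps the paper handles by its ``without loss of generality'' remark.
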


\begin{figure*}[t]
    \begin{subfigure}[b]{0.51\textwidth}
        \centering
       \resizebox{.33\textwidth}{!}{%
            \begin{tikzpicture}[scale=1,auto,node distance=4mm,>=latex]
                \tikzstyle{rect}=[draw=black,rectangle,minimum size=9pt]
                \node[circle, draw=black, fill = black!80!white] (w2) at (-.5,1.4) {};
                \node[circle, draw=black, fill = black!10!white] (w3) at (0,.7) {};
                \node[circle, draw=black, fill = black!80!white] (w4) at (1,1.4) {};
                \node[circle, draw=black, fill = black!80!white] (w5) at (-.5,2.1) {};
                \node[circle, draw=black, fill = black!10!white] (w6) at (.3,1.4) {};
                \node[circle, draw=black, fill = black!80!white] (w7) at (1.5,2.1) {};
                 \node[circle, draw=black, fill = black!10!white] (w8) at (.5,2.1) {};
                  \node[rect, draw=black, fill = black!10!white] (w9) at (.5,3.2) {};
                                    \node[rect, draw=black, fill = black!80!white] (w10) at (-.5,3.2) {};
                                    \node[rect, draw=black, fill = black!80!white] (w11) at (1.5,3.2) {};
                                    \node[rect, draw=black, fill = black!80!white] (w12) at (1,3.5) {};

                  \node (d1) at (0,2.8) {\vdots};
                   \node (d2) at (1,2.8) {\vdots};
                \draw
                (w3)     edge[left]     node{} (w2)
                (w3)     edge[right]                node{} (w4)
                (w3)     edge[above]                node{} (w6)
                (w4)     edge[above]    node{} (w7)
                (w6) edge (w8)
                (w6) edge (w5)
                (w2) edge (w5)
                (w2) edge (w8)
                (w4) edge (w8)
                (w8) edge (d2)
                (w8) edge (d1)
                (w5) edge (d1)
                (w7) edge (d2)
                (d1) edge (w9)
                (d1) edge (w10)
                (d2) edge (w9)
                (d2) edge (w11)
                (w9) edge (w12)
                (w11) edge (w12)
                                            ;
            \end{tikzpicture}
        }
                \caption{Restricted switch: true at worlds depicted as  \raisebox{-.5ex}{\scalebox{2}{$\bullet$}},{\large\textcolor{black}{\raisebox{-.2ex}{$\blacksquare$}}} and  restricted by a pure button true at worlds depicted as {{\large\textcolor{black}{\raisebox{-.2ex}{$\blacksquare$}}},{\large\textcolor{black!20!white}{\raisebox{-.2ex}{$\blacksquare$}}}}.}
        \label{fig:restricted_switch}
    \end{subfigure}\hfill
    \begin{subfigure}[b]{0.47\textwidth}
        \centering
        \resizebox{.4\textwidth}{!}{%
            \begin{tikzpicture}[scale=1,auto,node distance=4mm,>=latex]
                \tikzstyle{rect}=[draw=white,rectangle,minimum size=9pt]
                \node[circle, draw=black, fill = black!10!white] (w0) at (0,0) {};
                \node[circle, draw=black, fill = black!80!white] (w1) at (1,.7) {};
                \node[rect, draw=black, fill = red!80!white] (w2) at (-.5,1.4) {};
                \node[circle, draw=black, fill = black!10!white] (w3) at (0,.7) {};
                \node[circle, draw=black, fill = black!80!white] (w4) at (1,1.4) {};
                \node[rect, draw=black, fill = red!80!white] (w5) at (-.5,2.1) {};
                \node[rect, draw=black, fill = red!80!white]  (w6) at (.3,1.4) {};
                \node[circle, draw=black, fill = black!80!white] (w7) at (1.5,2.1) {};
                 \node[rect, draw=black, fill = red!80!white] (w8) at (.5,2.1) {};
                \draw
                (w0)     edge[right]                node{} (w1)
                (w0)     edge[above]                node{} (w3)
                (w1)     edge[above]                node{} (w4)
                (w3)     edge[left]     node{} (w2)
                (w3)     edge[right]                node{} (w4)
                (w3)     edge[above]                node{} (w6)
                (w4)     edge[above]    node{} (w7)
                (w6) edge (w8)
                (w6) edge (w5)
                (w2) edge (w5)
                                            ;
            \end{tikzpicture}
        }
        \caption{Decision consisting of two pure weak buttons true at worlds depicted as \raisebox{-.4ex}{\scalebox{2}{$\bullet$}} and  {\large\textcolor{red}{\raisebox{-.2ex}{$\blacksquare$}}}, respectively.}
        \label{fig:decision}
    \end{subfigure}
      \vspace{-4pt}
    \caption{Illustration of the truth value patterns for the new control statements. In both cases, the relation between worlds is the reflexive transitive closure of the relation indicated by the arrows.}
            \vspace{-12pt}
    \label{fig:new_control_statements}
\end{figure*}
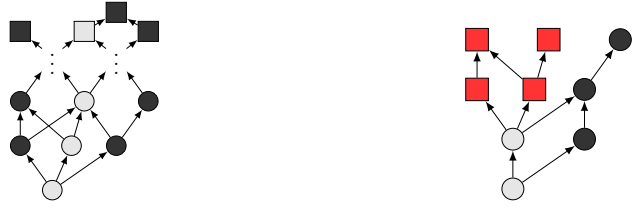

\begin{proof}
Without loss of generality, we can assume all $\Button$-restricted switches are off at $\GFw$.
Since $\mathsf{S4.2.1}$ is characterized by the class of finite inverted lollipops \cite{inamdar2016modal}, if a formula $\phi \notin \mathsf{S4.2.1}$, then there is an inverted lollipop $L = (W,\preccurlyeq)$, a valuation $V$ on $L$ and a initial state $w_0 \in W$ such that $(L,V),w_0 \nSat \phi$. $L$ has an underlying pre-Boolean algebra containing $n$ atom clusters for some $n$. Furthermore, by copying states, we can assume that each cluster consists of $2^m$ states for some $m$.
Thus, we can write $W = \powerset{n} \times \powerset{m} \cup \{\star\}$ where $\star$ is greater than any other element, i.e., for all $(I,J) \in \powerset{n} \times \powerset{m}$, $(I,J) \preccurlyeq \star$, and for all $(I,J),(I',J') \in \powerset{n} \times \powerset{m}$, $(I,J) \preccurlyeq (I',J')$ if and only if $I \subseteq I'$.
We define formulas $\phi_{w} \in \mathcal{L}$ for every $I \subseteq \{0, 1, \ldots, n-1\}$ and $J \subseteq \{0, 1, \ldots, m-1\}$
\[\phi_{(I,J)} = 
( \textstyle\bigwedge\nolimits_{i \in I} \button_i ) \land 
(\textstyle \bigwedge\nolimits_{j \in J} \switch_j ) \land \lnot \Button \quad \text{ and } \quad
\phi_{\star} = \Button.\]

\noindent We check that $w \mapsto \phi_{w}$ satisfies the conditions of an $L$-labeling for $\GFw$ in $\GF$:
\begin{enumerate}
    \item Clearly, every $\GFv \in \GFW$ satisfies exactly one of these formulas.
    \item Since the pure buttons and $\Button$-restricted switches are independent at $\GFw$, for every $I, I' \subseteq \{0, 1, \ldots, n-1\}$, $J,J' \subseteq \{0, 1, \ldots, m-1\}$ and $\GFv \in \GFW$, if $\GFv \sat \phi_{(I,J)}$ then there is an $\GFu \in \GFW$ with $\GFv \GFR \GFu$ and $\GFu \sat \phi_{(I',J')}$ iff $I \subseteq I'$. Further, by pushing $\Button$ there is always an extension where $\phi_{\star}$ holds and as $\Button$ is a pure button every further extension also satisfies $\phi_{\star}$.
    \item As all pure buttons are unpushed and the $\Button$-restricted switches are all off at $\GFw$., $\GFw \sat \phi_{(\emptyset,\emptyset)}$ 
\end{enumerate}
By \autoref{lem:labeling}, since $(L,V),w_0 \nSat \phi$, we have $\phi \notin \Lambda_\GF(\GFw)$. So, for any formula $\phi \notin \mathsf{S4.2.1}$, we have $\phi \notin \Lambda_\GF(\GFw)$ and thus, $\Lambda_\GF(\GFw) \subseteq \mathsf{S4.2.1}$.
\qed
\end{proof}

\begin{corollary}
\label{cor:S4.2.1}
Let $\GF$ be the general  frame w.r.t. some $\GFW$, $\GFR$ and $\mathcal{L}$ as in Def. \ref{def:G}. If for any $n,m \in \mathbb{N}$, there are a state $\GFw \in \GFW$ and  sets of unpushed pure buttons $\buttonSet = \{\button_i \mid 0 \le i \le n-1\} \cup \{\Button\}$ and of $\Button$-restricted switches $\switchSet = \{\switchB_j \mid 0 \le j \le m-1\}$ in $\GF$ at $\GFw$ s.t. $\buttonSet \cup \switchSet$ is independent until $\Button$, then $\Lambda_\GF \subseteq \mathsf{S4.2.1}$.
\end{corollary}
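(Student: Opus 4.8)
The plan is to reduce the corollary to the labeling argument already carried out for \Cref{thm:S4.2.1}, exploiting the observation that that argument never needs a single world carrying control statements of every size: it consumes only the control statements of one fixed dimension $(n,m)$, namely the size of the inverted lollipop that refutes a given formula. Concretely, to prove $\Lambda_\GF \subseteq \mathsf{S4.2.1}$ it suffices to show that every $\phi \notin \mathsf{S4.2.1}$ is refuted somewhere on $\GF$, i.e. that there are a world $\GFw$ and an admissible valuation at which $\phi$ is false; for then $\phi$ is not valid on $\GF$, so $\phi \notin \Lambda_\GF$. Accordingly I would fix an arbitrary $\phi \notin \mathsf{S4.2.1}$ and work toward falsifying it.

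First I would invoke completeness of $\mathsf{S4.2.1}$ with respect to finite inverted lollipops \cite{inamdar2016modal} to obtain an inverted lollipop $L = (W,\preccurlyeq)$, a valuation, and a root $w_0$ with $\phi$ false at $w_0$. Exactly as in the theorem's proof, I would read off the two parameters determined by $\phi$: the number $n$ of atom clusters in the underlying pre-Boolean algebra and, after duplicating states so that every cluster has $2^m$ elements, the exponent $m$, giving $W = \powerset{n}\times\powerset{m}\cup\{\star\}$ with $\star$ above every other world.

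The key step is that these $n$ and $m$ are now concrete numbers fixed by $\phi$, so I can instantiate the corollary's hypothesis at exactly this pair $(n,m)$: it supplies a world $\GFw \in \GFW$ carrying unpushed pure buttons $\{\button_i \mid 0 \le i \le n-1\}\cup\{\Button\}$ and $\Button$-restricted switches $\{\switchB_j \mid 0\le j \le m-1\}$ that are independent until $\Button$. As in \Cref{thm:S4.2.1}, I would first move, if necessary, to a refinement of $\GFw$ at which all switches are off while all buttons (including $\Button$) remain unpushed; this is possible by independence until $\Button$, taking $I_0 = I_1 = \emptyset$ and $J_1 = \emptyset$, and I would rename the resulting world $\GFw$. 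The control-statement and independence properties, being of the form $\lbox(\cdots)$, are inherited by this refinement. I would then reuse the same $L$-labeling, $\phi_{(I,J)} = (\bigwedge_{i\in I}\button_i)\land(\bigwedge_{j\in J}\switchB_j)\land\lnot\Button$ and $\phi_\star = \Button$, and verify the three labeling conditions verbatim: every $\GFR$-successor of $\GFw$ satisfies exactly one $\phi_w$; the reachability pattern governed by $I\subseteq I'$ follows from independence until $\Button$ together with $\Button$ being a pure button; and $\GFw \sat \phi_{(\emptyset,\emptyset)}$.

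Applying \Cref{lem:labeling} then yields $\phi \notin \Lambda_\GF(\GFw)$, hence $\phi \notin \Lambda_\GF$, and since $\phi \notin \mathsf{S4.2.1}$ was arbitrary this gives $\Lambda_\GF \subseteq \mathsf{S4.2.1}$. I expect the only genuine obstacle to be the bookkeeping that makes the localization legitimate, namely spelling out that the labeling verification in \Cref{thm:S4.2.1} depends on $\GFw$ solely through the size-$(n,m)$ family of control statements present there. Once that is made explicit, allowing a different witness world for each $(n,m)$ causes no harm: the conclusion $\Lambda_\GF \subseteq \mathsf{S4.2.1}$ is a statement about refuting each formula individually, and each refutation consumes only finitely many control statements living at a single world.
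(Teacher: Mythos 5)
Your proposal is correct and takes essentially the same approach as the paper: the paper's own proof of \Cref{cor:S4.2.1} is exactly the observation that the labeling argument of \Cref{thm:S4.2.1} consumes only the size-$(n,m)$ family of control statements at a single world, where $(n,m)$ is determined by the inverted lollipop refuting the given formula, so a different witness world may be chosen per formula. Your write-up merely spells out the bookkeeping that the paper compresses into ``by the same argument''---the normalization to all switches off and the fact that the $\lbox$-shaped hypotheses are inherited along $\GFR$ by transitivity---both of which you justify correctly.
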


\begin{proof}
    By the same argument, for any $n,m \in \mathbb{N}$, there exists a state $\GFw \in \GFW$ such that $w \mapsto \phi_{w}$ is a $L$-labeling for $\GFw$ in $\GF$. By \autoref{lem:labeling}, since $(L,V),w_0 \nSat \phi$, we have $\phi \notin \Lambda_\GF(\GFw)$ and thus, $\phi \notin \Lambda_\GF$. So, for any formula $\phi \notin \mathsf{S4.2.1}$, we have $\phi \notin \Lambda_\GF$ and thus, $\Lambda_\GF \subseteq \mathsf{S4.2.1}$.
    \qed
\end{proof}

\noindent
Using  Thm. \ref{thm:S4.2.1}, we now prove the upper bound $\mathsf{S4.2.1}$ for $\MLARall_{\cT}$ for some $\cT$.

\begin{restatable}{theorem}{upperboundMLARall}
\label{thm:upper_all}\label{thm:upperMLARall}
    There is a transition system $\cT$ with $\MLARall_{\cT} \subseteq \mathsf{S4.2.1}$.
\end{restatable}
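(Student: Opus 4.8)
The plan is to reduce to \Cref{thm:S4.2.1}: it suffices to construct a transition system $\cT$ and to exhibit, at a single conveniently chosen world $w_0 \in \cA_\cT$ and for every $n,m \in \Nat$, unpushed pure buttons $\button_0,\dots,\button_{n-1}$ together with one further unpushed pure button $\Button$, and $\Button$-restricted switches $\switchB_0,\dots,\switchB_{m-1}$, such that the whole family is independent until $\Button$. A single world $w_0$ is enough, since for the general frame $\GF = \GF_{\cA_\cT,\abstracts,\mathrm{CTL}}$ we have $\MLARall_\cT = \Lambda_{\GF} \subseteq \Lambda_{\GF}(w_0)$; \Cref{thm:S4.2.1} then gives $\Lambda_{\GF}(w_0) \subseteq \mathsf{S4.2.1}$ and hence the claim.

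For the construction I would reuse the two components $\cT^\button$ and $\cT^\switch$ of the transition system built for \Cref{thm:upper_fin}, supplying the pure buttons $\button_i$ and the switching mechanism underlying the $\switchB_j$, and add one further, structurally independent component $\cT^\Button$ — a fresh copy of the path-isolation gadget carrying a distinct label — whose associated CTL-sentence $\Button$ expresses that a designated path has been isolated. As all components share only the initial state, an abstraction of $\cT$ abstracts each component independently. I take $w_0$ to be the coarsest abstraction of $\cT$, at which no path is isolated and every group is fully merged, so that all $\button_i$ and $\Button$ are unpushed and all switches are off. Because $\Button$ lives on its own component it is a genuine pure button: it can be pushed from every refinement by isolating the designated path — without touching the other components, hence keeping every other control statement intact — and once the path is isolated it stays isolated under further refinement. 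Consequently, \emph{independence until $\Button$} for the combined family reduces to two ingredients: ordinary independence of the $\button_i$ and $\switchB_j$ along refinements that leave $\Button$ unpushed, and the fact that $\Button$ can always be pushed separately.

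The reason this yields $\mathsf{S4.2.1}$ rather than merely $\mathsf{S4.2}$ is already encapsulated in the labeling used in \Cref{thm:S4.2.1}, where the top element $\star$ of the inverted lollipop is labeled by $\phi_{\star}=\Button$: all refinements at which $\Button$ is pushed collapse to the single world $\star$. This collapse is legitimate precisely because $\Button$ is a pure button, so every further refinement of a $\Button$-world again satisfies $\Button$ and can never return to a $\lnot\Button$-world; thus $\star$ sees only itself, exactly as required. It is harmless that the $\button_i$ and $\switchB_j$ may still vary among the $\Button$-worlds, since the labeling assigns all of them $\phi_\star$ and the hypotheses of \Cref{thm:S4.2.1} constrain the control statements only on $\lnot\Button$-worlds. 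Feeding the resulting $L$-labeling into \Cref{lem:labeling} then falsifies on $\cA_\cT$ every formula falsifiable on some inverted lollipop, giving $\Lambda_\GF(w_0)\subseteq\mathsf{S4.2.1}$.

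The main obstacle, and the place where real work is needed beyond \Cref{thm:upper_fin}, is that $\cA_\cT$ now contains \emph{infinite} abstractions, whereas the pushability and independence arguments for $\cT^\button$ and $\cT^\switch$ in the finite case exploited finiteness (finitely many classes forced a fresh isolatable path or group to exist). I would therefore have to re-establish, from an arbitrary possibly infinite abstraction on which $\lnot\Button$ holds, that (i) a fresh length-$i$ path can still be isolated inside $\cT^\button$, pushing $\button_i$ while — by confining the refinement to this component — preserving $\lnot\Button$ and all other controls (that $\button_i$ is pushable at all from every world, and hence a pure button, is moreover underwritten by the greatest element $\cT$ toward which one may refine); and (ii), more delicately, that each switch $\switchB_j$ can still be driven both on and off. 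Turning a switch on requires a group in which no length-$j$ path is yet isolated, and an adversarial infinite abstraction could in principle have isolated one in every group; overcoming this — either by building enough redundancy into the ``groups'' of $\cT^\switch$ so that a fresh usable group always survives, or by routing the toggle through the connector states that let a path see all others in its group — is the crux of the argument. The verification should carry over verbatim to the equivalence-class-preserving variant $\Lambda_{\cA'_\cT,\abstracts^\equiv,\mathrm{CTL}}$.
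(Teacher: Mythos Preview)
Your reduction to \Cref{thm:S4.2.1} and your reading of the inverted-lollipop labeling are correct, but the core construction has a genuine gap: placing $\Button$ on a \emph{structurally independent} component $\cT^\Button$ is exactly the wrong move. If $\Button$ lives on its own component, then $\lnot\Button$ constrains nothing about the switch component $\cT^\switch$. Hence among the worlds accessible from $w_0$ there are $\lnot\Button$-worlds in which the $\cT^\switch$-component has been refined all the way to its image in $\cT$; at such a world no further refinement inside $\cT^\switch$ is possible, so no $\switchB_j$ can be toggled either way. This already violates the defining clause $\lbox(\lnot B \to (\ldia(s\land\lnot B)\land\ldia(\lnot s\land\lnot B)))$ of a $\Button$-restricted switch. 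Your suggested remedies---more redundancy in the groups, or rerouting via connector states---cannot help here, because the adversarial world simply takes the switch component to be \emph{fully} refined while leaving $\cT^\Button$ untouched; no amount of redundancy survives that. In effect, an independent $\Button$ would force the $\switchB_j$ to be genuine switches on the $\lnot\Button$-part, which is impossible in $\cA_\cT$ (this is precisely what $\mathsf{(.1)}$ rules out).

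The paper resolves this by doing the opposite of what you propose: $\Button$ is \emph{built into} the switch gadget. Concretely, each switch $\switchB_j$ uses infinitely many triples of equally labeled states that are collapsed in the chosen world $\cS$; the switch is on at $\cR$ iff some triple is split into exactly two parts, and it can be turned off by fully splitting every partially split triple. The pure button $\Button$ is then defined to say that for some $j$ all but finitely many triples have been at least partially split. Thus $\lnot\Button$ \emph{guarantees} that infinitely many triples remain fully collapsed, which is exactly the resource needed to toggle every $\switchB_j$ without pushing $\Button$. (Separately, the paper does not reuse the $\cT^\button$ of \Cref{thm:upper_fin}; it uses a simpler gadget with a pair $a_i^1,a_i^2$ carrying a dedicated label $a_i$, collapsed in $\cS$, so that pushability of $\button_i$ no longer depends on any finiteness argument.) The moral is that the restricting button must encode ``the switch mechanism has been exhausted'', not live off to the side.
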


\begin{proof}[Proof sketch]
    We provide transition systems $\cT$ and $\cS \in \cA_\cT$ with arbitrarily many independent unpushed pure buttons and $\Button$-restricted switches in $G_{\cA_\cT, \abstracts, \mathrm{CTL}}$ at $\cS$. For each pure button, we have a pair of states in $\cT$ that is collapsed in $\cS$. By splitting it apart, the respective pure button gets pushed. Similarly, we use  infinitely many collapsed triples of states for each $\Button$-restricted switch. If one of the triples is refined to two states, the switch is on; otherwise, it is off.
    So, a switch can be turned on by 
    splitting a  fully collapsed triple into two parts and off by refining all triples split into two to three states.  
    As long as infinitely many triples are still collapsed to a single state, the switch can hence be turned on and off at will.
    The restricting  pure button $\Button$ using an infinite chain of states from which the triples are reachable now expresses that  all but finitely many triples are at least partially refined.
    As long as $\Button$ is not true, the switch is intact.  The full proof is  in Appendix \ref{app:MLARall}.
    \qed
\end{proof}

To also prove  $\MLAR \subseteq \mathsf{S4.2.1}$, we are not able to provide a single transition system $\cS$ with $\MLAR(\cS)\subseteq \mathsf{S4.2.1}$. Instead we rely on 
Cor. \ref{cor:S4.2.1} and provide a family $\{\cS_{n,m}\}_{n,m \in \mathbb{N}}$ of transition systems with $n+1$ unpushed pure buttons (including some  $\Button$) and $m$ $\Button$-restricted switches. 
The  proof is given in   \Cref{app:upper}.
\begin{restatable}{theorem}{thmupper}
\label{thm:upper}
    $\MLAR \subseteq \mathsf{S4.2.1}$.
\end{restatable}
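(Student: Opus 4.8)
The plan is to deduce the bound from \Cref{cor:S4.2.1} applied to the general frame $\GF = \GF_{\mathfrak{A},\abstracts,\mathrm{CTL}}$ over the class $\mathfrak{A}$ of all transition systems, for which $\MLAR = \Lambda_\GF$. By that corollary it suffices to show that for every $n,m \in \mathbb{N}$ there is a world $\cS_{n,m} \in \mathfrak{A}$ carrying a family that is \emph{independent until} $\Button$ and consists of $n$ unpushed pure buttons $\button_0,\dots,\button_{n-1}$, one further unpushed pure button $\Button$, and $m$ $\Button$-restricted switches $\switchB_0,\dots,\switchB_{m-1}$. Note that this per-$(n,m)$ hypothesis is weaker than exhibiting one transition system that works for all parameters simultaneously, and the weakening is essential: unlike the refinement relation restricted to $\cA_\cT$, the relation $\abstracts$ on $\mathfrak{A}$ is not directed and has no greatest refinement that could serve uniformly as the top $\star$ of an inverted lollipop, so we cannot hope for a single $\cS$ with $\MLAR(\cS) \subseteq \mathsf{S4.2.1}$ and instead localize the top to each $\cS_{n,m}$ via $\Button$.

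First I would construct $\cS_{n,m}$ by assembling, in finite number matching $n$ and $m$, the gadgets already used in the proof of \Cref{thm:upper_all}: $n$ disjoint button-gadgets, each a collapsed pair of states whose splitting pushes the corresponding $\button_i$; $m$ switch-gadget families, each providing an infinite supply of fully-collapsed triples reachable along an infinite chain, where splitting some triple into exactly two parts turns $\switchB_j$ on and refining all split triples to three parts turns it off; and a single $\Button$-gadget whose defining CTL sentence tracks that all but finitely many triples across all switch families have been at least partially refined. All gadgets share the initial state but are otherwise disjoint, so that refining one leaves the others untouched, and all buttons are unpushed and all switches off at $\cS_{n,m}$ by construction.

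Next I would verify that these CTL sentences are indeed pure buttons and $\Button$-restricted switches and that the family is independent until $\Button$ --- now, crucially, over \emph{all} refinements in $\mathfrak{A}$ rather than only those bounded below by a fixed $\cT$. The preservation clauses $\lbox(b \imp \lbox b)$ hold because refinement only ever splits states and never re-merges them, so the structural witness of a pushed button (an isolated path of the prescribed length, respectively the chain state certifying $\Button$) survives every further refinement. The reachability clauses $\lbox \ldia b$ for the buttons and the switching clause $\lbox(\lnot B \imp (\ldia(s \land \lnot B) \land \ldia(\lnot s \land \lnot B)))$ for the switches hold because, as long as $\Button$ is unpushed, infinitely many fully-collapsed triples remain available to be freshly split (on) or re-refined (off). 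Independence until $\Button$ follows from the disjointness of the gadgets. Having produced such a $\cS_{n,m}$ for every $n,m$, \Cref{cor:S4.2.1} yields $\MLAR = \Lambda_\GF \subseteq \mathsf{S4.2.1}$.

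The main obstacle is the robustness of the defining CTL sentences against the unbounded refinements available over $\mathfrak{A}$. Within $\cA_\cT$ a triple could be split back only to the three states of $\cT$, so ``exactly two parts'' versus ``three parts'' was cleanly controlled; over all transition systems a triple, or an already-split gadget, may be refined into arbitrarily many states, and an unintended refinement elsewhere must not accidentally push a button, falsify a switch, freeze a switch, or destroy independence. The bulk of the work --- which I would defer to the appendix --- is therefore to choose $\button_i$, $\Button$, and $\switchB_j$ so that each detects precisely its intended structural feature and remains monotone (for the buttons) or fully controllable (for the switches) no matter how finely the remaining states are refined.
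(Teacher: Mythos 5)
Your scaffolding matches the paper exactly: invoke \Cref{cor:S4.2.1} for the general frame over $\mathfrak{A}$, and for each $n,m$ exhibit a world $\cS_{n,m}$ carrying $n$ unpushed pure buttons, an extra pure button $\Button$, and $m$ $\Button$-restricted switches, independent until $\Button$. But the construction you propose to plug into this scaffolding does not work, and the failure is not the deferrable ``robustness check'' of your last paragraph --- it is the heart of the matter. You want to reuse the gadgets of \Cref{thm:upper_all}, whose control statements detect that a collapsed pair (or triple) \emph{has been split}. Over $\mathfrak{A}$ this property is not upward closed under $\abstracts$, so no CTL formula expressing it can be a pure button: a refinement is only required to witness each abstract transition \emph{somewhere} in the refining system (condition 2 of the definition of abstraction), not from a reachable state, so a further refinement can demote the split copy to an unreachable duplicate and route the initial state through a copy in which the pair still looks collapsed. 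Since CTL is evaluated at initial states, the button formula becomes false again and $\lbox(b\imp\lbox b)$ fails. Concretely: if $\cT_1$ refines $\cS_{n,m}$ by splitting gadget $i$ into a reachable chain, a refinement $\cT_2$ of $\cT_1$ can keep that chain only as an unreachable copy while the reachable part uses other gadgets; your claim that ``the structural witness of a pushed button survives every further refinement'' confuses survival as a subgraph with survival of reachability. The same phenomenon, together with the fact that the self-loops of collapsed states can be unwound by refinements in $\mathfrak{A}$ into chains of arbitrary (even infinite) length, destroys the two-parts/three-parts dichotomy your switches rely on. This is precisely why the \Cref{thm:upper_all} gadgets are confined to $\cA_\cT$: there every world is a coarsening of the fixed $\cT$, so no unreachable duplicates or unwindings exist.

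The paper's proof (\Cref{app:upper}) is built on the opposite mechanism, and this is the idea your proposal is missing. Each $\cS_{n,m}$ is chosen to have no cycle except the self-loop at the final state $f$; since every path of a refinement projects to a path of this essentially acyclic skeleton, the only CTL-visible effect a refinement can have is to make labelled states unreachable. Accordingly, \emph{all} control statements are unreachability statements, which are monotone under refinement: $\button_i$ is pushed by making $a_i$ or $b_i$ unreachable in a chain of diamonds (compare the decisions $\forall\lglobally\neg a_i$ in \Cref{thm:upper_fpf}, which follow the same principle); the switch $\switchB_j=\exists\lfinally(d\land\lnot\exists\lnext e_j)$ is turned on by removing some $e_j^k$ and off by removing the offending $d^k$, which are finitely many exactly while $\Button$ --- asserting that cofinitely many $e_j^k$ are gone --- is unpushed. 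So what you defer to the appendix is not a verification but a different construction; within your split-detection approach there is nothing monotone to verify. (Your side claim that no single $\cS$ could possibly work because $\abstracts$ is not directed on $\mathfrak{A}$ is also unsupported: \Cref{thm:S4.2.1} only needs the local top created by pushing $\Button$, and the paper merely states it was unable to find such an $\cS$, not that none exists.)
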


\noindent\textbf{Upper bound $\mathsf{S4FPF}$.}
\label{sec:FPF}
To further improve the upper bound for $\MLAR$, we define the logic $\mathsf{S4FPF}$ semantically as the logic of  finite partial function posets.

\begin{definition}
For any $n \in \mathbb{N}$, let $F$ be the set of partial functions $f \colon \{0, 1, \ldots, n-1\} \to\{0, 1\}$ and $\preccurlyeq$ the partial order on $F$ with $g \preccurlyeq h$ iff $g = h{\restriction_{\operatorname{dom}(g)}}$ where $\operatorname{dom}(g)$ is the domain of $g$ for all $f,g$. We call  $ (F,\preccurlyeq)$ a finite partial function (FPF) poset on $n$ elements.
The modal theory $\mathsf{S4FPF}$ is  the set of all formulas valid on the class of all FPF posets.
\end{definition}
A FPF poset with $n=2$ is illustrated in \Cref{fig:FPF}. There, a partial function $f\colon \{0,1\} \to \{0,1\}$ is  depicted as pair $(f(0),f(1))$ with $?$ indicating that the value is not defined.
Before we proceed, we relate the logic $\mathsf{S4FPF}$  to well-known logics. 
Grzegorczyk’s logic 
$
\mathsf{Grz}= \mathsf{K} + \lbox(\lbox(p\to \lbox p) \to p)\to p
$
is valid on all reflexive, transitive, weakly conversely well-founded frames (also called Noetherian partial orders), i.e., frames in which all infinite paths take only one self-loop from some point on  (see, e.g., \cite[Section 3.8]{chagrov1997modal}). As FPF posets have this property, $\mathsf{S4FPF}$ is an extension of $\mathsf{Grz}\supseteq \mathsf{S4.1}$. As $\mathsf{Grz}\not\subseteq \mathsf{S4.2.1}$, we conclude $\mathsf{S4FPF}\not\subseteq \mathsf{S4.2.1}$ (see, e.g., \cite{DBLP:journals/aml/BezhanishviliBL21} for the relation of $\mathsf{Grz}$ to other extensions of $\mathsf{S4}$).
Further, as FPF posets are not directed, the axiom $\mathsf{(.2)} = \ldia \lbox p \to \lbox\ldia p$ is not included in $\mathsf{S4FPF}$. So:

\begin{proposition}
\label{prop:s4fpf}
We have $\mathsf{Grz} \subseteq \mathsf{S4FPF}$,  $\mathsf{S4FPF}\not \subseteq \mathsf{S4.2.1}$, and $\mathsf{S4.2.1}\not \subseteq \mathsf{S4FPF}$.
\end{proposition}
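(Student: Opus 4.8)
The plan is to establish the three assertions separately, in each case reducing to a concrete frame on which a single witnessing formula is checked. For the inclusion $\mathsf{Grz} \subseteq \mathsf{S4FPF}$ I would invoke the frame characterization already recalled above: $\mathsf{Grz}$ is valid on every reflexive, transitive, weakly conversely well-founded frame. Each FPF poset $(F,\preccurlyeq)$ is a finite partial order, hence reflexive, transitive and, being finite, trivially weakly conversely well-founded, so $\mathsf{Grz}$ is valid on it. Since this holds for every FPF poset, every theorem of $\mathsf{Grz}$ belongs to $\mathsf{S4FPF}$, which is the inclusion.

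For $\mathsf{S4FPF} \not\subseteq \mathsf{S4.2.1}$ I would exhibit a formula lying in $\mathsf{S4FPF}\setminus\mathsf{S4.2.1}$, the natural candidate being the Grzegorczyk axiom $\gamma = \lbox(\lbox(p \imp \lbox p) \imp p) \imp p$ itself. By the first part $\gamma \in \mathsf{Grz} \subseteq \mathsf{S4FPF}$. To see $\gamma \notin \mathsf{S4.2.1}$ I would use that $\mathsf{S4.2.1}$ is the logic of finite inverted lollipops \cite{inamdar2016modal} and display a lollipop on which $\gamma$ fails. Take the inverted lollipop whose underlying pre-Boolean algebra is a single two-element cluster $\{a,b\}$ (the cluster $2^1$ permitted by the construction) together with a top element $\star$, and set $p$ true exactly at $a$ and at $\star$. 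One checks that $\lbox(p \imp \lbox p)$ is false at both $a$ and $b$ while $p$ holds at $\star$, so $\lbox(p \imp \lbox p) \imp p$ is true at every world; hence $\lbox(\lbox(p \imp \lbox p) \imp p)$ holds at $b$ while $p$ fails at $b$, falsifying $\gamma$ at $b$. Thus $\gamma \notin \mathsf{S4.2.1}$, and the non-inclusion follows.

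For $\mathsf{S4.2.1} \not\subseteq \mathsf{S4FPF}$ I would use the axiom $\mathsf{(.2)} = \ldia \lbox p \imp \lbox \ldia p$, which belongs to $\mathsf{S4.2.1}$ by definition, and falsify it on one FPF poset. The FPF poset on a single element consists of the empty function $r$ together with its two incomparable extensions $u, v$ (values $0$ and $1$); as a frame this is a root $r$ with two incomparable maximal successors, the standard non-directed frame. Setting $p$ true only at $u$ makes $\ldia \lbox p$ true at $r$ (witnessed by $u$, where $\lbox p$ holds) but $\lbox \ldia p$ false at $r$ (since $\ldia p$ fails at the maximal world $v$), so $\mathsf{(.2)}$ fails at $r$. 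Hence $\mathsf{(.2)} \notin \mathsf{S4FPF}$, giving the last non-inclusion.

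The only genuinely delicate point is the second part: one must pick a valuation on a lollipop carrying a proper cluster that makes the Grzegorczyk antecedent globally true while the conclusion fails at a cluster world, and confirm that the chosen frame really is an admissible inverted lollipop. The first part is immediate from the cited soundness of $\mathsf{Grz}$ on partial orders, and the third needs only the one-element FPF poset, which is exactly the canonical counterexample to directedness and hence to $\mathsf{(.2)}$; these verifications are routine.
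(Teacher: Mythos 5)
Your proposal is correct, and it follows the same three-part skeleton as the paper: $\mathsf{Grz}\subseteq\mathsf{S4FPF}$ via soundness of $\mathsf{Grz}$ on reflexive, transitive, weakly conversely well-founded frames; a $\mathsf{Grz}$-theorem as the separating formula for $\mathsf{S4FPF}\not\subseteq\mathsf{S4.2.1}$; and $\mathsf{(.2)}$ as the separating formula for $\mathsf{S4.2.1}\not\subseteq\mathsf{S4FPF}$. The difference lies in how the two non-inclusions are discharged. The paper cites the known fact $\mathsf{Grz}\not\subseteq\mathsf{S4.2.1}$ (pointing to the literature on extensions of $\mathsf{S4}$) and, for the third claim, appeals to frame correspondence: FPF posets are not directed, hence $\mathsf{(.2)}$ is not valid on them. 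You instead verify both non-memberships by explicit countermodels: a single two-element cluster with a top point is an admissible inverted lollipop (the case of zero atoms and cluster size $2$ in the paper's decomposition), and your valuation $V(p)=\{a,\star\}$ does falsify the Grzegorczyk axiom at $b$ --- the antecedent $\lbox(p\to\lbox p)$ fails at both cluster worlds, so the boxed implication is vacuously true there and true at $\star$, while $p$ fails at $b$; likewise the one-element FPF poset with $p$ true at exactly one of the two incomparable maximal points falsifies $\mathsf{(.2)}$ at the root, which is precisely the countermodel underlying the paper's directedness argument. Your version buys self-containedness: it uses only the soundness halves of the two semantic characterizations, both of which are stated in the paper, rather than an external reference for $\mathsf{Grz}\not\subseteq\mathsf{S4.2.1}$. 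The paper's version is shorter but less explicit; both are complete proofs.
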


To prove an upper bound of $\mathsf{S4FPF}$, we introduce \emph{decisions} as new control statements. A decision consists of two mutually exclusive pure weak buttons. 
So, there is a dependence: by pushing one of them,  we break the other one.
Formally, let $\GF$ be the general  frame with respect to some $\GFW$, $\GFR$ and $\mathcal{L}$ as defined in Def. \ref{def:G} and let $\GFw \in \GFW$.
A pair of unpushed pure weak buttons $(\weakButton, \weakButtonAlt)$ is a \emph{decision} in $\GF$ at $\GFw$ if $\weakButton \lor \weakButtonAlt$ is an unpushed pure button in $\GF$ at $\GFw$ and for any $V$ with $V(l) = \{\GFv \in \GFW: \GFv \sat \weakButton \}$ and $V(r) = \{\GFv \in \GFW: \GFv \sat \weakButtonAlt \}$,
$(\GF,V),\GFw \Sat \lbox (\lnot l \lor \lnot r)$ and $(\GF,V),\GFw \Sat \lbox ((\ldia l \land \ldia r) \lor l \lor r)$.
A decision $(\weakButton, \weakButtonAlt)$ is \emph{pushed} or \emph{decided} at $\GFu \in \GFW$ with $\GFw \GFR \GFu$ if $(\GF,V),\GFu \Sat l \lor r$ and \emph{unpushed} or \emph{undecided} otherwise. An illustration  can be found in \Cref{sub@fig:decision}.

While decisions are  internally dependent, we  again require independence among different decisions.
Let $n \in \mathbb{N}$. Let $\buttonSet = \{(\weakButton_i, \weakButtonAlt_i) \mid 0 \le i \le n-1\}$ be a set of unpushed decisions in $\GF$ at $\GFw$. We call $\buttonSet$ independent if for every $I_0 \subseteq I_1 \subseteq \{0, 1, \ldots, n-1\}$ and $J_0 \subseteq J_1 \subseteq \{0, 1, \ldots, n-1\} \setminus I_1$,
and any $V$ with $V(l_i) = \{\GFv \in \GFW \mid \GFv \sat \weakButton_i\}$ and $V(r_i) = \{\GFv \in \GFW \mid \GFv \sat \weakButtonAlt_i\}$,
\begin{align*}
(\GF,V),\GFw \Sat \lbox \big[ &\big(  
  \textstyle  \bigwedge_{i \in I_0} l_i \land 
    \bigwedge_{i \notin I_0} \lnot l_i \land 
    \bigwedge_{i \in J_0} r_i \land 
    \bigwedge_{i \notin J_0} \lnot r_i
\big) \\ \imp 
\ldia &\big(  
 \textstyle   \bigwedge_{i \in I_1} l_i \land 
    \bigwedge_{i \notin I_1} \lnot  l_i \land 
    \bigwedge_{i \in J_1} r_i \land 
    \bigwedge_{i \notin J_1} \lnot r_i
\big) \big]    .
\end{align*}

\begin{restatable}{theorem}{thmFPF}
    \label{thm:S4FPF}
    Let $\GF$ be the general Kripke frame w.r.t. some $\GFW$, $\GFR$ and $\mathcal{L}$ as in Def. \ref{def:G} and let $\GFw \in \GFW$. If for any $n \in \mathbb{N}$, there is a set of unpushed decisions $\buttonSet = \{(\weakButton_i, \weakButtonAlt_i) \mid 0 \le i \le n-1\}$ in $\GF$ at $\GFw$ s.t. $\buttonSet$ is independent, then $\Lambda_\GF(\GFw) \subseteq \mathsf{S4FPF}$.
\end{restatable}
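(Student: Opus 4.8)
The plan is to mirror the strategy used for \Cref{thm:S4.2,thm:S4.2.1}: exploit the fact that $\mathsf{S4FPF}$ is \emph{by definition} the logic of FPF posets, and combine this with the $F$-labeling machinery of \autoref{lem:labeling}. To establish $\Lambda_\GF(\GFw) \subseteq \mathsf{S4FPF}$ it suffices to show that any $\phi \notin \mathsf{S4FPF}$ also fails to lie in $\Lambda_\GF(\GFw)$. If $\phi \notin \mathsf{S4FPF}$, then $\phi$ is refuted at some world of some FPF poset. Since in an FPF poset the set of worlds reachable from a world $v$ consists exactly of the extensions of $v$, which form (up to isomorphism) an FPF poset on the coordinates outside $\operatorname{dom}(v)$ with root $v$, the generated-submodel property lets me assume without loss of generality that $\phi$ is refuted at the root $w_0$ (the empty partial function) of some FPF poset $F = (W,\preccurlyeq)$ on $n$ elements, witnessed by a valuation $V$ with $(F,V), w_0 \nSat \phi$.

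Next I would invoke the hypothesis to fix $n$ independent unpushed decisions $(\weakButton_i,\weakButtonAlt_i)$, $0 \le i \le n-1$, in $\GF$ at $\GFw$, and build an $F$-labeling from them. The conceptual core is the three-way correspondence between the possible states of a decision and the possible values of a coordinate: at any reachable structure each decision is either undecided ($\lnot\weakButton_i \land \lnot\weakButtonAlt_i$), decided left ($\weakButton_i$), or decided right ($\weakButtonAlt_i$), mirroring that a partial function leaves coordinate $i$ undefined, sets it to $0$, or sets it to $1$. Accordingly, to each $f \in W$ I assign
\[
\phi_f = \textstyle\bigwedge_{i : f(i)=0}\weakButton_i \ \land\ \bigwedge_{i : f(i)=1}\weakButtonAlt_i \ \land\ \bigwedge_{i \notin \operatorname{dom}(f)}(\lnot\weakButton_i \land \lnot\weakButtonAlt_i).
\]

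Then I would verify the three $F$-labeling conditions. Condition~(1) holds because a decision makes $\weakButton_i$ and $\weakButtonAlt_i$ mutually exclusive everywhere reachable ($\lbox(\lnot l \lor \lnot r)$), so each coordinate falls into exactly one of the three cases and every reachable structure satisfies exactly one $\phi_f$. Condition~(3) is immediate, since all decisions are unpushed at $\GFw$, hence $\GFw \sat \phi_{w_0}$ for the empty function $w_0$. Condition~(2) is the substantive step and splits into two directions. For the ``only if'' direction, if $\GFu$ is reachable from a structure satisfying $\phi_f$ and $\GFu \sat \phi_g$, then because each $\weakButton_i,\weakButtonAlt_i$ is a pure weak button it stays true once true and, by mutual exclusivity, can never flip to its partner; thus every coordinate already fixed by $f$ retains its value in $g$, so $g$ extends $f$, i.e.\ $f \preccurlyeq g$. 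For the ``if'' direction, given $f \preccurlyeq g$ the configurations of $f$ and $g$ are related precisely by enlarging the left-pushed set and the right-pushed set within disjoint index sets, which is exactly the reachability pattern that independence of the decisions guarantees. Invoking \autoref{lem:labeling} with $(F,V), w_0 \nSat \phi$ then yields $\phi \notin \Lambda_\GF(\GFw)$, completing the inclusion.

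The main obstacle I anticipate is making the reduction to the root fully precise and, relatedly, pinning down condition~(2) exactly. The subtlety is that an FPF poset is \emph{not} directed---distinct total extensions are incomparable---so, unlike in the $\mathsf{S4.2}$ and $\mathsf{S4.2.1}$ cases, I cannot lean on a greatest or confluent world and must instead check that the reachability pattern between the $\phi_f$-classes reproduces the extension order \emph{exactly}, in both directions: persistence and exclusivity of the decisions rule out reaching incomparable $g$, while independence supplies every genuine extension. The care lies in aligning the index bookkeeping of the independence formula ($I_0\subseteq I_1$, $J_0\subseteq J_1 \subseteq \{0,\ldots,n-1\}\setminus I_1$) with the operation ``extend a partial function,'' ensuring the left/right index sets grow monotonically and stay disjoint.
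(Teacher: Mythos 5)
Your proof is correct and follows essentially the same route as the paper's: encode each partial function $f$ by a conjunction recording which decisions are pushed left or right, verify the three $F$-labeling conditions using persistence of the pure weak buttons, their mutual exclusivity, and independence, and conclude via \Cref{lem:labeling}. Your write-up is in fact slightly more careful than the paper's at two points--you add the conjuncts $\lnot\weakButton_i \land \lnot\weakButtonAlt_i$ for coordinates outside $\operatorname{dom}(f)$, without which the ``exactly one $\phi_f$ holds'' condition would fail and which makes $\phi_f$ match the antecedent/consequent shape of the independence schema exactly, and you justify the reduction to a refutation at the root via the generated-submodel structure of FPF posets, a step the paper asserts without comment.
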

\noindent
The proof  works analogously to the proof of Thm. \ref{thm:S4.2.1} and is given in App. \ref{app:FPF}.

\begin{theorem}
\label{thm:upper_fpf}
    $\MLAR \subseteq \mathsf{S4FPF}$.
\end{theorem}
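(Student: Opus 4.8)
The plan is to instantiate Theorem~\ref{thm:S4FPF} for the general frame $\GF = \GF_{\mathfrak{A},\abstracts,\mathrm{CTL}}$ underlying $\MLAR$. Since $\MLAR = \Lambda_{\GF}$ and $\Lambda_{\GF}\subseteq \Lambda_{\GF}(\GFw)$ for every world $\GFw$, it suffices to produce, for each $n\in\mathbb{N}$, a transition system (world) at which there is a set of $n$ independent unpushed decisions; the labeling argument of Theorem~\ref{thm:S4FPF} then falsifies any $\phi\notin\mathsf{S4FPF}$ (which fails on some FPF poset with finitely many, say $n$, elements). As in Cor.~\ref{cor:S4.2.1}, the witnessing world may depend on $n$, so one works with a family $\{\cS_n\}_n$ and a corollary version of Theorem~\ref{thm:S4FPF}; alternatively, a single transition system $\cS$ carrying infinitely many pairwise disjoint decision gadgets gives $\ge n$ independent decisions at $\cS$ for every $n$ at once, yielding $\Lambda_{\GF}(\cS)\subseteq\mathsf{S4FPF}$ directly.

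The key idea is to realize decisions through the \emph{non-confluence} of abstraction refinement, which is exactly the structural feature separating $\MLAR$ from the directed logics $\MLARfin_\cT$ and $\MLARall_\cT$. For the $i$-th decision I would build a self-contained gadget, reachable from the initial state and using its own atomic propositions (in particular labels $b_i$ and $c_i$): a coarse abstraction in which a merged ``diamond'' of states produces both a spurious family of paths visiting $b_i$ and a spurious family visiting $c_i$. As in \Cref{fig:example}, splitting the merged state prunes traces, and here it can be split in two incompatible ways. I take the two pure weak buttons to be the universal CTL formulas $\weakButton_i = \forall\globally\lnot c_i$ and $\weakButtonAlt_i = \forall\globally\lnot b_i$, i.e.\ ``all paths in gadget $i$ avoid $c_i$ (resp.\ $b_i$)''. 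The gadget is designed so that every path through it must visit $b_i$ or $c_i$, and so that committing to the $b_i$-free side removes exactly the traces witnessing the $c_i$-free side, and vice versa.

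Since universal CTL formulas are monotone under refinement (refinement only removes traces), each of $\weakButton_i,\weakButtonAlt_i$ stays true once true, so $\lbox(l\imp\lbox l)$ and $\lbox(r\imp\lbox r)$ hold; and at the undecided world both commitments are still achievable by a legitimate split, giving $\ldia l$ and $\ldia r$, so each is a pure weak button. The structural guarantee that every path visits $b_i$ or $c_i$ yields mutual exclusivity $\lbox(\lnot l\lor\lnot r)$. Because the merged diamond can always be refined to isolate at least one side and the resulting universal property is permanent, $\weakButton_i\lor\weakButtonAlt_i$ is an unpushed pure button, and while undecided both splits remain available, giving $\lbox((\ldia l\land\ldia r)\lor l\lor r)$. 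Hence $(\weakButton_i,\weakButtonAlt_i)$ is a decision. Placing $n$ such gadgets in parallel with pairwise disjoint labels makes the decisions independent: refining gadget $i$ changes only the traces inside gadget $i$, so from any configuration one reaches any target with $I_0\subseteq I_1$ and $J_0\subseteq J_1\subseteq\{0,\dots,n-1\}\setminus I_1$ by committing the newly-left gadgets to $b_i$-free, the newly-right ones to $c_i$-free, and leaving the rest merged.

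The main obstacle is the explicit transition-system design. I must lay out the merged diamond so that (i) both commitments are reachable from the coarse abstraction by legitimate equivalence-class splits, i.e.\ genuine non-confluence with CTL-definable, incompatible limits; (ii) every path is forced to visit $b_i$ or $c_i$, so exclusivity and permanence hold simultaneously and no decided side can be ``re-opened''; and (iii) the gadget can never get stuck unable to commit to either side, so that $\weakButton_i\lor\weakButtonAlt_i$ is a genuine pure button rather than only a pure weak button. Verifying that $\weakButton_i$ and $\weakButtonAlt_i$ truly capture these splits in (finite) abstractions, and that no surviving spurious trace breaks monotonicity or exclusivity, is the delicate part, mirroring the trace-pruning analysis carried out for the pure buttons in the proof of \Cref{thm:upper_fin}.
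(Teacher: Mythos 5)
Your overall strategy is exactly the paper's: apply \Cref{thm:S4FPF} by exhibiting, at a single transition system carrying infinitely many gadgets, arbitrarily large independent families of decisions, where each decision is a pair of universal CTL formulas $\weakButton_i = \forall\globally\neg(\text{left label})$, $\weakButtonAlt_i = \forall\globally\neg(\text{right label})$ whose permanence follows from trace-set shrinkage under refinement. The paper's proof (Fig.~\ref{fig:decisions}) uses precisely these formulas. However, your construction sketch has a genuine flaw at its one load-bearing point: you propose ``placing $n$ such gadgets \emph{in parallel} with pairwise disjoint labels,'' but parallel placement is incompatible with your own requirement (ii) that every path must visit $b_i$ or $c_i$. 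If gadget $i$ hangs off the initial state (or off any branching point) as one of several alternatives, then a refinement can make the \emph{entire} gadget unreachable: duplicate its entry predecessor into a reachable copy lacking the entry edge and an unreachable copy carrying it (surjectivity and the transition-witnessing condition are still met by the unreachable copy). At that refinement both $\weakButton_i$ and $\weakButtonAlt_i$ hold, so $\lbox(\lnot l \lor \lnot r)$ fails and the pair is not a decision at all. Since requirement (ii) is what delivers both mutual exclusivity and the pure-button property of $\weakButton_i \lor \weakButtonAlt_i$, and since you explicitly defer the gadget layout as ``the main obstacle,'' the proof is missing its central piece.

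The paper resolves this by arranging the diamonds \emph{sequentially} rather than in parallel: $s \to a_0|b_0 \to c_0 \to a_1|b_1 \to c_1 \to \cdots$. Because terminal states are forbidden, every refinement has an infinite path from its initial state, that path projects to a path of $\cS$, and every such path traverses level $i$ for \emph{every} $i$, visiting exactly one of $a_i$, $b_i$. This yields mutual exclusivity and pushability simultaneously, while independence survives the sequencing because the labels $a_i, b_i$ are pairwise distinct and pruning one level (making all $a_i$-labeled or all $b_i$-labeled states unreachable via the duplication trick) leaves the reachability of every other level untouched. A secondary remark: your framing via ``splitting a merged state'' is not how the buttons get pushed here --- in the paper's $\cS$ the states $a_i$, $b_i$, $c_i$ carry distinct labels and are never merged; in the class of \emph{all} transition systems the relevant mechanism is pruning reachable behavior by duplicating predecessors, which is simpler than the equivalence-class splitting needed in the $\cF_\cT$ and $\cA_\cT$ settings.
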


\begin{proof}
We will show that there is a transition system $\cS$ such that $\MLAR(\cS) \subseteq \mathsf{S4FPF}$. By \autoref{thm:S4FPF}, it is sufficient provide a independent set of unpushed decisions $\buttonSet = \{(\weakButton_i, \weakButtonAlt_i) \mid 0 \le i \le n-1\}$ in $G_{\mathfrak{A}, \abstracts, \mathrm{CTL}}$ at $\cS$ for any $n \in \mathbb{N}$.

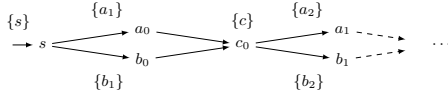
\begin{figure*}[t]
    \begin{center}
    \resizebox{.5\textwidth}{!}{%
        \begin{tikzpicture}[scale=1,auto,node distance=4mm,>=latex]
            \tikzstyle{rect}=[draw=white,rectangle]
            \node[initial left, label=125:$\{s\}$]         (s) at ( 0, 0)  {$s$};
            \node[label=155:{$\{a_1\}$}]                   (a1) at (2, .3)  {$a_0$};
            \node[label=215:{$\{b_1\}$}]                   (b1) at (2, -.3) {$b_0$};
            \node[label=90 :{$\{c\}$}]                   (c1) at (4, 0)      {$c_0$};
            \node[label=155:{$\{a_2\}$}]                   (a2) at (6, .3)  {$a_1$};
            \node[label=215:{$\{b_2\}$}]                   (b2) at (6, -.3) {$b_1$};
            \node[]                                               (c2) at (8,0)      {$\cdots$};
            \draw
            (s)     edge[right]                                node{} (a1)
            (s)     edge[right]                                node{} (b1)
            (a1)    edge[right]                                node{} (c1)
            (b1)    edge[right]                                node{} (c1)
            (c1)    edge[right]                                node{} (a2)
            (c1)    edge[right]                                node{} (b2)
            (a2)    edge[dashed, right, shorten >=12pt]        node{} (c2)
            (b2)    edge[dashed, right, shorten >=12pt]        node{} (c2)
            ;  
        \end{tikzpicture}
    }
    \end{center}
    \vspace{-16pt}
    \caption{The transition system $\cS$ in the proof of \Cref{thm:upper_fpf}.}
        \vspace{-12pt}
    \label{fig:decisions}
\end{figure*}

We define $\cS = (S, \rightarrow, \{s\}, \AP, L)$ depicted also in Figure \ref{fig:decisions} where
\begin{itemize}
    \item
    The state space is $S = \{s\} \cup \{a_i \mid i \in \mathbb{N}\} \cup \{b_i \mid i \in \mathbb{N}\} \cup \{c_i \mid i \in \mathbb{N}\}$.
    \item 
    $s \rightarrow a_1$,
    $s \rightarrow b_1$,
    $a_i \rightarrow c_i$,    $b_i \rightarrow c_i$,    $c_i \rightarrow a_{i+1}$,    
    $c_i \rightarrow b_{i+1} $ for all~$i$.
    \item
    The set of atomic propositions is $\AP = \{s, c\} \cup \{a_i \mid i \in \mathbb{N}\} \cup \{b_i \mid i \in \mathbb{N}$\}.
    \item
    The labeling over $\AP$ is given by $L(s) = \{s\}$, $L(a_i) = \{a_i\}$ for all $i$, $L(b_i) = \{b_i\}$ for all $i$ and $L(c_i) = \{c\}$ for all $i$.
\end{itemize}
\noindent
\textbf{Decisions.} For any $i$, we define the CTL-sentences
$\weakButton_i = \forall \globally \neg a_i $ and $\weakButtonAlt_i = \forall \globally \neg b_i $ expressing that $a_i$ or $b_i$ is not reachable anymore.
They are pure weak buttons  at $\cS_{n,m}$ since it is possible to simply remove $a_i$ or $b_i$, respectively, by going to a refinement where the respective state is not reachable from the initial state. Further, if $\weakButton_i$ gets pushed, $\weakButtonAlt_i$ breaks and vice-versa, meaning that $(\weakButton_i, \weakButtonAlt_i)$ is a decision. Independence is clear as we use different labels.
\qed
\end{proof}

\noindent
In conclusion, we showed that $\mathsf{S4.1}\subseteq \MLAR \subseteq \mathsf{S4.2.1}\cap \mathsf{S4FPF}$.
As shown in Prop. \ref{prop:s4fpf}, $\mathsf{S4.2.1}\cap \mathsf{S4FPF}$ is a stronger bound than $\mathsf{S4FPF}$ or $\mathsf{S4.2.1}$ on their own.

\section{Conclusion}

We defined the modal logics of abstraction refinement in three settings: $\MLARfin_\cT$ and $\MLARall_\cT$ are defined on the frame of all finite and all abstractions of a transition system $\cT$, respectively, while  $\MLAR$ is defined on the class of all transition systems.
In the first two cases, we proved matching upper and lower bounds of $\mathsf{S4.2}$ and $\mathsf{S4.2.1}$, respectively.
These results also unveil the computational complexity of the respective MLARs as
checking validity and satisfiability for both $\mathsf{S4.2}$ and $\mathsf{S4.2.1}$ are PSPACE-complete \cite{shapirovsky2004pspace}.
Determining the precise logic  $\MLAR$ for which we know $\mathsf{S4.1}\subseteq\MLAR\subseteq\mathsf{S4.2.1}\cap \mathsf{S4FPF}$ remains as future work.
The fact that $\mathsf{S4FPF}$ is defined semantically raises the question whether this logic can 
be finitely axiomatized.

The MLARs of course depend on the used notion of abstraction and the use of CTL to describe properties.
Some observations are immediate: For example, for any logic at least as expressive as CTL such as CTL$^\ast$, the upper bounds shown here hold as well. For the proof of the upper bounds, we only need that the respective control statements are expressible in the logic. The lower bounds, on the other hand, only depend on the structure of the refinement relation and hence apply no matter which logic is used instead of CTL. 
Regarding the notion of abstraction, we considered the well-established notion of \emph{existential} abstraction \cite{DBLP:reference/mc/DamsG18} here. 
The investigation of different abstraction and refinement relations such as predicate abstraction might be an interesting direction for future research.

\noindent\textbf{Discussion of the results.}
Seminal abstraction refinement paradigms such as CEGAR are limited to  branching-time properties expressed in syntactic fragments of logics like the universally quantified fragment of CTL and CTL$^\ast$ (subsuming  implicitly universally quantified LTL-properties).
One of the goals of investigating the MLAR is to determine to which extent such abstraction refinement techniques can be extended to more general branching-time properties.

To some degree, the results presented in this paper are negative: 
There are transition systems for which $\MLARfin_\cT$ and $\MLARall_\cT$ coincide with the lower bounds obtained from the structure of the refinement relation.
So, the evolution of CTL-expressible properties is governed by the same general principles as arbitrary properties.
The only principles that can be exploited in general are the validity of $ \mathsf{(.2)}=\ldia \lbox p \imp \lbox \ldia p$  and  $ \mathsf{(.1)}=\lbox \ldia p \imp \ldia \lbox p$ 
in the case of $\MLARall_\cT$. 
Nevertheless, some potentially useful, although simple, observations are possible:
From $\mathsf{(.2)}$, the formula $\ldia \lbox p \land \ldia \lbox q \to \ldia \lbox (p\land q)$ follows. So, given an abstraction $\cS$ of a system  $\cT$,
 to show that a conjunction $\Phi\land \Psi$ of properties holds in $\cT$, it is sufficient to find separate refinements $\cS_1$ and $\cS_2$ of $\cS$ such that $\Box \Phi$ holds in $\cS_1$ and $\Box \Psi$ holds in $\cS_2$.
Formula  $\mathsf{(.1)}$ means that it is sufficient to show  a property can always be made true by refinement in order to show that it holds in $\cT$. 

The results, however, do not mean that, for any transition system $\cT$ and any abstraction $\cS$, the logics $\MLARfin_\cT(\cS)$ and $\MLARall_\cT(\cS)$ coincide with the lower bounds.
This opens up a path for future investigations on conditions on systems $\cT$ and abstractions $\cS$ under which more modal principles are valid, which in turn could be exploited in 
 abstraction refinement-based verification methods.
 
 Further, the key concept of control statements is worth exploring further. 
On the one hand, identifying which kind of CTL-formulas can act as which kind of control statements could be useful to tailor abstraction refinement  to certain types of specifications.
In particular, investigating when a CTL-formula (e.g., in a certain syntactic fragment) acts as a pure button is an interesting direction for future research
as pure buttons have the desirable property that it is sufficient to find a refinement in which they are true in order to conclude that they are true in the underlying system model.
On the other hand, the new generic results on how to use restricted switches and decisions
to prove upper bounds of $\mathsf{S4.2.1}$ and $ \mathsf{S4FPF}$  have the potential to be applied in a wide variety of settings.

%
%
%
 \bibliographystyle{plainurl}
 \bibliography{references/lit}

\clearpage

\clearpage

\begin{appendix}

\section{Proofs of the lower bounds}
\label{app:lower}

\lowerfin*

\begin{proof}
For any transition system $\cT=(T,\rightarrow,I , \AP, L)$, the relation $\abstracts$ on the class $\cF_{\cT}$ of finite abstractions  of $\cT$ is directed.
To see this consider finite abstractions $\cS_1=(S_1,\rightarrow_1,I_1 , \AP, L_1)$ and $\cS_2=(S_2,\rightarrow_2,I_2 , \AP, L_2)$ with abstraction functions $f_1\colon T \to S_1$ and $f_2\colon T\to S_2$.
We can define a common refinement $\cS$ on a subset of the finite  space $S_1\times S_2$ as follows: We define $f\colon T \to S_1\times S_2$ by $f(t) = (f_1(t),f_2(t))$ for all $t\in T$
and let $S = f(T)$ be the state space of the common refinement. As relation $\rightarrow'$ on $S$ we take the smallest relation such that $t\rightarrow t'$ implies
$f(t) \rightarrow' f(t')$ for all $t,t'\in T$. The  labeling of $f(t)$ is $L(t)$ for all $t\in T$. Finally a state  $s\in S$ is initial, if there is a $t\in I$ with $f(t) = s$. By definition, this results in an abstraction of $\cT$ witnessed by $f$.

To see that $\cS_1$ is an abstraction of $\cS$, consider the abstraction function $\pi_1\colon S \to S_1$ that simply projects states $(s_1,s_2)\in S$ onto the first component.
The condition on the  labeling is met as both $\cS$ and $\cS_1$ inherit the  labeling from $\cT$. If there is a transition $s\rightarrow_1 s'$ in $\cS_1$, then there are $t,t'\in T$ with $t\rightarrow t'$ and $f_1(t)=s$ and $f_1(t')=s'$.
But then, we have $f(t) \rightarrow' f(t')$ in $\cS$ and $\pi_1(f(t)) = s$ and  $\pi_1(f(t')) = s'$. So, the condition on the relation is met for the abstraction function $\pi_1$. The argument for initial states works analogously. Likewise, we can show that also $\cS_2$ is an abstraction of $\cS$ analogously. So, $\abstracts$ is directed on $\cF_\cT$.
\qed \end{proof}

\lowerall*

\begin{proof}
For the frame $(\cA_\cT, \abstracts)$ directedness is immediate as $\cT\in \cA_\cT$ is a common refinement of all abstractions in $\cA_\cT$. Furthermore, axiom $\mathsf{(.1)} = \lbox\ldia p \to \ldia\lbox p$ is valid on any transitive, reflexive frame $F$ with a greatest element $m$: For any world $w$ and any valuation $V(p)$ for $p$, $(F,V),w\Vdash \lbox\ldia p$ holds if and only if
$m\in V(p)$. But then,  also $(F,V),w\Vdash \ldia\lbox p$ as $(F,V),m\Vdash \lbox p$. So, the fact that $\cT\in \cA_\cT$ is a greatest element of the (general) frame, implies that $\mathsf{S4.2.1}\subseteq \MLARall_\cT$.
\qed \end{proof}

\proplower*

\begin{proof}
Consider an arbitrary transition system $\cT=(T,\rightarrow,I , \AP, L)$. Consider the refinement of $\cT$, which we obtain as follows:
For each  initial state $t\in I$, we pick an infinite path $\pi_t$ starting in $t$. 
Now, we consider a refinement $\cS$ that is the disjoint union of a copy of the original transition system $\cT$ where no state is initial anymore and transition systems that consist only of a copy of the path $\pi_t$ with the first state being initial for all $t\in I$. The abstraction function $f$ maps states from this disjoint union back to the states in $\cT$ of which they are a copy.
As states are mapped to their copies and $\cS$ contains a copy of $\cT$, the conditions on  labeling and the relation are met. Furthermore, each initial state $t\in T$ is the image under $f$ of the initial state of the copy of $\pi_t$.

Now, all paths in $\cS$ starting in an initial state are simply the disjoint copies $\pi_t$ for $t\in I$. A further refinement can further split up states, but it cannot introduce new traces or merge the paths. So, any CTL property that $\cS$ fulfills is also satisfied on all its refinements. 

Now, we show that $\mathsf{(.1)}= \lbox\ldia p \to \ldia\lbox p$ holds at $\cT$ in the general frame $G=G_{\mathfrak{A}, \abstracts, \mathrm{CTL}}$ where $\mathfrak{A}$ is the class of all transition systems.
Let   $V$ be a valuation on this general frame and suppose $(G,V),\cT \Vdash \lbox\ldia p $. So, $(G,V),\cS \Vdash \ldia p $. But as all refinements of $\cS$ satisfy the same CTL-properties, 
we have $(G,V),\cS \Vdash \lbox p $. So, $(G,V),\cT \Vdash \ldia \lbox p $ and hence $\mathsf{(.1)}$ is valid on $G$.
\qed \end{proof}

\section{Proof of Theorem \ref{thm:upper_fin}}
\label{app:upperfin}

\upperfin*

\begin{proof}
We will define a transition system $\cT$ and 
  show that for all $\cS \in \cF_\cT$,  we have $\MLARfin_{\cT}(\cS) \subseteq \mathsf{S4.2}$.
  Relying on \autoref{thm:S4.2}, we provide an infinite independent set of pure buttons and switches in $G_{\cF_\cT, \abstracts, \mathrm{CTL}}$ at any $\cS$.
The transition system   $\cT$ will be split into two  parts that are connected via a single shared initial state $s$ -- $\cT^\beta$ for the pure buttons and the $\cT^\sigma$ for the switches. Using different labels in $\cT^\beta$ and $\cT^\sigma$, we ensure that an abstraction of $\cT$ can be viewed as independently abstracting $\cT^\beta$ and $\cT^\sigma$.
  This also ensures that the pure buttons do not interfere with the switches and vice-versa.

\vspace{6pt}
\noindent
\textbf{Pure buttons.} 
We define $\cT^\button = (S, \rightarrow, \{s\}, \{s, f, a\}, L)$ sketched in Fig. \ref{fig:buttons_S4.2_app} where 
\begin{itemize}
\item
The state space is $S = \{s,f\} \cup \{(i,h,k)\in \mathbb{N}^3 \mid i\geq 2, i\geq h \geq 1\}$. 
\item
The relation $\rightarrow$ is given by 
\begin{itemize}
\item
$s\rightarrow (i,1,k)$ for all $i\geq 2$ and  all $k$,
\item
$(i,h,k)\rightarrow (i,h+1,k)$ for all $i\geq 2$, all $h<i$, and all $k$,
\item
$(i,i,k) \rightarrow f$ for all $i\geq 2$ and all $k$, and 
$f\rightarrow f$.
\end{itemize}
\item
The labeling over atomic propositions $\{s,f,a\}$ is given by $L(s)=\{s\}$, $L(f)=\{f\}$, and $L((i,h,k)) = \{a\}$ for all $i,h,k$.
\end{itemize}
\vspace{-6pt}

\noindent
Given a finite abstraction $\cS$ of $\cT^\beta$ with abstraction function $f$, we denote the set of states  mapped to the same state as  $(i,h,k)$ by $[(i,h,k)]$.
Overloading the notation, we  identify the state $f((i,h,k))$ with this equivalence class $[(i,h,k)]$.

The idea behind the $i$th pure button is roughly to say that some path along $i$ states labelled with $a$ from $s$ to $f$ has been ``isolated'' and not merged with paths of other length.
For any $i \ge 2$, we define the CTL-sentence
\[\button_i = \exists \lnext (\forall \lnext^{i-1} (a \land \forall \lnext f)).\]
In a finite abstraction $\cS$ of $\cT^\beta$, $\button_i$ holds iff there exists $k \geq 1$ such that $a^if^\omega$ is the only trace of paths starting in $[(i,1,k)]$.
 Once the only trace of paths starting in $[(i,1,k)]$ is $a^if^\omega$, this cannot be changed in a finer abstraction of $\cT^\beta$.

Further, whenever $\button_i$ is unpushed at an abstract transition system $\mathcal{S}_0$ then there is an accessible abstraction $\mathcal{S}_1$ of $\cT^\beta$ where $\button_i$ is pushed. In a finite abstraction there is always a path $[(i,1,k)],\ldots,[(i,i,k)]$ for some $k \geq 1$ that only visits infinite equivalence classes. Otherwise we would have infinitely many finite equivalence classes which leads to a contradiction. Thus, it is possible to safely (without pushing other pure buttons) split off the path $(i,1,k),\ldots,(i,i,k)$ by going to a refinement $\mathcal{S}_1$ that is even equivalence class preserving. 
Note that at any finite transition system $\cS$ only finitely many $\button_i$ can be pushed. So, for any $n \in \mathbb{N}$, there is a $I \subset \mathbb{N}$ with $|I| = n$ such that for any $i \in I$, $\button_i$ is unpushed.

		  \begin{figure*}[t]
		    \begin{subfigure}[b]{0.45\textwidth}
\centering
    \resizebox{1\textwidth}{!}{%
      \begin{tikzpicture}[scale=1,auto,node distance=4mm,>=latex]
        \tikzstyle{rect}=[draw=white,rectangle]

        \node[minimum size=6mm] (a210) {$(2,1,0)$};
         \node[minimum size=6mm,above= of a210] (a220) {$(2,2,0)$};
         
          \node[minimum size=6mm,right=3mm of a210] (a211) {$(2,1,1)$};
         \node[minimum size=6mm,above= of a211] (a221) {$(2,2,1)$};

         \node[minimum size=6mm,right= 0mm of a211] (a212) {$\dots$};

                                \draw[color=black ,  thick] (a210) ->  (a220);
			 \draw[color=black ,  thick] (a211) ->  (a221);

			    \node[minimum size=6mm,right= 3mm of a212] (a310) {$(3,1,0)$};
         \node[minimum size=6mm,above= of a310] (a320) {$(3,2,0)$};
         \node[minimum size=6mm,above= of a320] (a330) {$(3,3,0)$};
         
          \node[minimum size=6mm,right=3mm of a310] (a311) {$(3,1,1)$};
         \node[minimum size=6mm,above= of a311] (a321) {$(3,2,1)$};
         \node[minimum size=6mm,above= of a321] (a331) {$(3,3,1)$};

         \node[minimum size=6mm,right=0mm of a311] (a312) {$\dots$};
          \node[minimum size=6mm,right=0mm of a312] (a313) {$\dots$};
         

           \node[minimum size=6mm,below= of a310] (s) {$s$};
           
             \node[minimum size=6mm,above= of a330] (f) {$f$};
          
             \draw[color=black ,  thick] (a220) ->  (f);
              \draw[color=black ,  thick] (a221) ->  (f);
              
                  \draw[color=black ,  thick] (a330) ->  (f);
              \draw[color=black ,  thick] (a331) ->  (f);

              \draw[color=black ,  thick,->] (f) edge [loop above]   (f);

            \draw[color=black ,  thick] (s) ->  (a310);
              \draw[color=black ,  thick] (s) ->  (a311);

            \draw[color=black ,  thick] (s) ->  (a210);
              \draw[color=black ,  thick] (s) ->  (a211);

                                \draw[color=black ,  thick] (a310) ->  (a320);
                                  \draw[color=black ,  thick] (a320) ->  (a330);
			 \draw[color=black ,  thick] (a311) ->  (a321);
                                  \draw[color=black ,  thick] (a321) ->  (a331);

      \end{tikzpicture}
    }
\caption{Transition system  $\cT^\beta$.}
  \label{fig:buttons_S4.2_app}
  \end{subfigure}
      \begin{subfigure}[b]{0.45\textwidth}
\centering
    \resizebox{1\textwidth}{!}{%
      \begin{tikzpicture}[scale=1,auto,node distance=4mm,>=latex]
        \tikzstyle{rect}=[draw=white,rectangle]

        \node[minimum size=6mm] (a310k) {$(3,1,0,k)$};
         \node[minimum size=6mm,above= of a310k] (a320k) {$(3,2,0,k)$};
                  \node[minimum size=6mm,above= of a320k] (a330k) {$(3,3,0,k)$};

                   \node[minimum size=6mm,right=3mm of a310k] (a3x0k) {$(3,\ast,0,k)$};

                          \node[minimum size=6mm,right=3mm of a3x0k] (a311k) {$(3,1,1,k)$};
         \node[minimum size=6mm,above= of a311k] (a321k) {$(3,2,1,k)$};
                  \node[minimum size=6mm,above= of a321k] (a331k) {$(3,3,1,k)$};
                  
                    \node[minimum size=6mm,right=3mm of a311k] (a3x1k) {$(3,\ast,1,k)$};

         \node[minimum size=6mm,right= 6mm of a3x1k] (a4) {$\dots$};

                    \node[minimum size=6mm,below= 12mm of a311k] (s) {$s$};
            \node[minimum size=6mm,above=  of a331k] (f) {$f$};
            
                          \draw[color=black ,  thick,->] (f) edge [loop above]   (f);

             \draw[color=black ,  thick] (s) edge [bend left=15]  (a310k);
              \draw[color=black ,  thick] (s) edge [bend right=15]  (a4);
             \draw[color=black ,  thick] (s) ->  (a311k);
             \draw[color=black ,  thick] (a310k) ->  (a320k);
             \draw[color=black ,  thick] (a320k) ->  (a330k);
             \draw[color=black ,  thick] (a311k) ->  (a321k);
             \draw[color=black ,  thick] (a321k) ->  (a331k);
                          \draw[color=black ,  thick] (a331k) ->  (f);
             \draw[color=black ,  thick] (a330k) ->  (f);
                          \draw[color=black ,  thick] (a310k) ->  (a3x0k);
                                                    \draw[color=black ,  thick] (a311k) ->  (a3x1k);
                                                  \draw[color=black ,  thick] (a3x0k) ->  (a311k);
                                      \draw[color=black ,  thick] (a3x1k) ->  (a4);
                                      \draw[color=black ,  thick] (a3x1k) edge [bend left]  (a310k);
                                       \draw[color=black ,  thick] (a3x0k) edge [bend right]  (a4);

      \end{tikzpicture}
    }
\caption{Excerpt of   $\cT^\sigma$.}
  \label{fig:switches_S4.2_app}
  \end{subfigure}
  \caption{The components of the transition system $\cT$ in the proof of \Cref{thm:upper_fin}.}
\label{fig:S4.2_app}
\end{figure*}

\vspace{6pt}
\noindent
\textbf{Switches.}
We define $\cT^\switch = (S, \rightarrow, \{s\}, \{s, f, b, x\}, L)$ where 
\begin{itemize}
\item
The state space is $S = \{s,f\}\cup \{(i,h,\ell, k) \mid i\geq 3,  h \in \{\ast,1,\dots, i\}, \ell\in \mathbb{N}, k\in \mathbb{N}\} $.
\item The relation $\rightarrow$ is given by 
\begin{itemize}
\item
$s\rightarrow (i,1,\ell,k)$ for all $i\geq 3$ and all $\ell,k$,
\item
$(i,h,\ell,k) \rightarrow (i,h,\ell,k)$ for all $i\geq 3$, $1\leq h <i$ and all $\ell,k$,
\item
$(i,1,\ell,k) \rightarrow (i,\ast,\ell,k)$ for all $i\geq 3$ and all $\ell,k$,
\item
$(i,\ast,\ell,k) \to (i,1,\ell',k)$ for all $i\geq 3$, all $\ell'\not= \ell$, and all $k$,
\item
$(i,i,\ell,k)\to f$ for all $i\geq 3$ and all $\ell, k$ as well as $f\to f$.
\end{itemize}
\item
The labeling over $\{s,f,x,b\}$ is given by $L(s)=\{s\}$, $L(f)=\{f\}$, $ L((i,h,\ell,k))= \{b\}$ for all $i\geq 3$, $1\leq h \leq i$, and all $\ell,k$, and
$ L((i,\ast,\ell,k))= \{x\}$ for all $i\geq 3$ and all $\ell,k$.
\end{itemize}
An excerpt of $\cT^\sigma$ for fixed length of the paths $i=3$ and fixed copy $k$ is sketched in Figure \ref{fig:switches_S4.2_app}.
From the state $(3,1,0,k)$, one can move via $(3,\ast,0,k)$ to all  states of the form $(3,1,\ell',k)$ with $\ell'\not= 0$. Further, a path of $b$-labelled states of length $3$ starts from 
$(3,1,0,k)$ towards $f$. The transition system $\cT^\sigma$ contains infinitely many copies of this structure indicated by the value $k\in \mathbb{N}$. Further, $\cT^\sigma$ contains this pattern
for all other length of paths $i>3$, which is denoted in the first component of the vectors.

The idea behind the $j$th switch is to say that one can find a copy of this pattern for path length $j$ where one $b$-path to $f$ of length $j$ has been ``isolated'' in the refinement
while all the other $b$-paths reachable via the $x$-labelled states with $\ast$ in the second component are not yet ``isolated''.
The switch can be turned on by isolating such a path. It can be turned off by isolating a second such $b$-path in the same copy. The infinitely many copies ensure that
there are always non-isolated fresh copies in any finite abstraction.

Formally,  for any $j \ge 3$, we define the  CTL-sentence
\[ \switch_j = \exists \lnext (\phi_j \land \forall \lnext (x \imp \forall \lnext \lnot \phi_j)) \quad \text{ where } \quad \phi_j =\forall \lnext (b \imp \forall \lnext^{j-2} (b \land \forall \lnext f)).\]
The switch $\switch_j$ holds at an abstraction $\cS$ iff there exists $k \in \mathbb{N}$ and there exists exactly one $ \ell \in \mathbb{N}$ such that $b^jf^\omega$ is the only trace from $[(j,1,\ell,k)]$ that does have label $b$ after one step.

Switching $\switch_j$ on and off can be done similarly to how we push our pure buttons. First, consider a transition system $\mathcal{S}_0$ where $\switch_j$ is off. Note that in a finite abstraction $\cS$ 
of $\cT^\sigma$ for almost all $k\in \mathbb{N}$ there is no $\ell \in \mathbb{N}$ such that $b^jf^\omega$ is the only trace from $[(j,1,\ell,k)]$ that has a $b$ in the second position. 
Fix one of these $k$. Further, as before, there is always a path $[(j,1,\ell,k)], \dots , [(j,j,\ell,k)]$ for some $\ell \in \mathbb{N}$ that only visits infinite equivalence classes. So, going from $\mathcal{S}_0$, we can safely (without switching other switches) split off the path $(j,1,\ell,k), \dots , (j,j,\ell,k)$  to obtain a transition system $\mathcal{S}_1$ by an equivalence class preserving refinement. Then, 
the switch $\switch_j$ is on at $\mathcal{S}_1$.

Secondly, consider a transition system $\mathcal{S}_0$ where $\switch_j$ is on. Note that there are possibly multiple, but finitely many states $[(j,1,\ell,k)]$ from which $b^jf^\omega$ is the only trace with $b$ in the second position. Choose one such state for some fixed $k,\ell \in \mathbb{N}$ as a candidate. Once again, there is always a path $[(j,1,\ell',k)], \dots , [(j,j,\ell',k)]$ for some $\ell' \neq \ell$ that only visits infinite equivalence classes. Thus, we can safely (without switching other switches) split off the path $(j,1,\ell',k), \dots , (j,j,\ell',k)$ and repeat this process for all other candidates to obtain $\mathcal{S}_1$ by an (equivalence class preserving) refinement. Then, $\switch_j$ is on at $\mathcal{S}_1$.

All in all, we have shown that for any $\cS$ in $\cF_\cT$ for the constructed transition system $\cT$, there are independent infinite families of unpushed pure buttons and switches.
By \Cref{thm:S4.2}, this implies that $\MLARfin_{\cT}(\cS) \subseteq \mathsf{S4.2}$ and hence finishes the proof of \Cref{thm:upper_fin}.
\qed \end{proof}

\section{Proof of Theorem \ref{thm:upperMLARall}}
\label{app:MLARall}

\upperboundMLARall*

\begin{proof}
In order to prove this theorem, we construct transition systems $\cT$ and $\cS \in \cA_\cT$ and show that $\MLARall_{\cT}(\cS) \subseteq \mathsf{S4.2.1}$. By \autoref{thm:S4.2.1}, it suffices to provide arbitrarily many independent unpushed pure buttons and $\Button$-restricted switches in $G_{\cA_\cT, \abstracts, \mathrm{CTL}}$ at $\cS$. Following the same idea as before, to guarantee independence, we construct $\cT$ and $\cS$ both split into different parts $\cT^\beta$ and $\cT^\switchB$ as well as $\cS^\beta$ and $\cS^\switchB$  for the unpushed pure buttons and for the $\Button$-restricted switches, respectively. The only state they share is a starting state $s$.

\begin{figure*}[t]
    \begin{subfigure}[b]{0.4\textwidth}
    \centering
    \resizebox{1\textwidth}{!}{%
    \begin{tikzpicture}[scale=1,auto,node distance=4mm,>=latex]
        \tikzstyle{rect}=[draw=white,rectangle]
        \node[initial above, label=125:$\{s\}$]         (s) at ( 0, 0) {$s$};
        \node[label=90:{$\{a_0\}$}]               (a11) at (-3, -1.5) {$a_0^1$};
        \node[label=180:{$\{a_1\}$}]               (a21) at (0, -1.5) {$a_1^1$};
        \node[]                                              (a31) at (3, -1.5) {$\cdots$};

        \node[label=270:{$\{a_0\}$}]               (a12) at (-3, -2.5) {$a_0^2$};
        \node[label=180:{$\{a_1\}$}]               (a22) at (0, -2.5) {$a_1^2$};
        \node[]                                              (a32) at (3, -2.5) {$\cdots$};
        \node[label=215:{$\{f\}$}]                      (f) at (0, -4) {$f$};
        \draw
        (s)     edge[below]                                node{} (a11)
        (s)     edge[below]                                node{} (a21)
        (s)     edge[dashed, below, shorten >=12pt]        node{} (a31)
        (a11)    edge[below]                               node{} (a12)
        (a21)    edge[below]                               node{} (a22)
        (a12)    edge[below]                               node{} (f)
        (a22)    edge[below]                               node{} (f)
        (a32)    edge[dashed, below, shorten <=12pt]       node{} (f)
        (f)     edge[in=290,out=340,loop]                  node{} (f)
        ;  
    \end{tikzpicture}
    }
    \caption{Transition system  $\cT^\beta$.}
    \label{fig:buttons_S4.2.1}
    \end{subfigure}
    \begin{subfigure}[b]{0.5\textwidth}
    \centering
    \resizebox{1\textwidth}{!}{%
    \begin{tikzpicture}[scale=1,auto,node distance=4mm,>=latex]
        \tikzstyle{rect}=[draw=white,rectangle]
        \node[initial above, label=125:$\{s\}$]         (s) at ( 0, 1) {$s$};
        \node[label=125:{$\{t\}$}]                    (b10) at (-4, -0.5) {$(j,\ast,0)$};
        \node[label=125:{$\{t\}$}]                    (b20) at (0, -0.5) {$(j,\ast,1)$};
        \node[]                                              (b30) at (4, -0.5) {$\cdots$};
        \node[label=180:{$\{b, b_j, b_j^0\}$}]        (b11) at (-4, -1.5) {$(j,1,0)$};
        \node[label=180:{$\{b, b_j, b_j^1\}$}]        (b21) at (0, -1.5) {$(j,1,1)$};
        \node[]                                              (b31) at (4, -1.5) {$\cdots$};

        \node[label=180:{$\{b, b_j, b_j^0\}$}]        (b12) at (-4, -2.5) {$(j,2,0)$};
        \node[label=180:{$\{b, b_j, b_j^1\}$}]        (b22) at (0, -2.5) {$(j,2,1)$};
        \node[]                                              (b32) at (4, -2.5) {$\cdots$};
        
        \node[label=180:{$\{b, b_j, b_j^0\}$}]        (b13) at (-4, -3.5) {$(j,3,0)$};
        \node[label=180:{$\{b, b_j, b_j^1\}$}]        (b23) at (0, -3.5) {$(j,3,1)$};
        \node[]                                              (b33) at (4, -3.5) {$\cdots$};
        \node[label=215:{$\{f\}$}]                      (f) at (0, -5) {$f$};
        \draw
        (s)     edge[below]                                node{} (b10)
        (s)     edge[below]                                node{} (b20)
        (s)     edge[dashed, below, shorten >=12pt]        node{} (b30)
        (b10)    edge[right]                               node{} (b20)
        (b20)    edge[dashed, right, shorten >=12pt]       node{} (b30)
        (b10)    edge[below]                               node{} (b11)
        (b20)    edge[below]                               node{} (b21)
        (b11)    edge[below]                               node{} (b12)
        (b21)    edge[below]                               node{} (b22)
        (b12)    edge[below]                               node{} (b13)
        (b22)    edge[below]                               node{} (b23)
        (b13)    edge[below]                               node{} (f)
        (b23)    edge[below]                               node{} (f)
        (b33)    edge[dashed, below, shorten <=12pt]       node{} (f)
        (f)     edge[in=290,out=340,loop]                  node{} (f)
        ;  
    \end{tikzpicture}
    }
    \caption{Excerpt of the transition system  $\cT^\switchB$.}
    \label{fig:switches_S4.2.1}
    \end{subfigure}
    \caption{The components of the transition system $\cT$ in the proof of \Cref{thm:upper_all}.}
    \label{fig:S4.2.1}
\end{figure*}

\vspace{6pt}
\noindent
\textbf{Unpushed pure buttons.}
We define $\cT^\button = (S, \rightarrow, \{s\}, \AP, L)$ sketched also in Figure \ref{fig:buttons_S4.2.1}:
\begin{itemize}
    \item
    The state space is $S = \{s, f\} \cup \{a_i^h \mid i \in \mathbb{N}, h \in \{1, 2\}\}$.
    \item The relation $\rightarrow$ is given by 
    $s \rightarrow a_i^1$,
    $a_i^1 \rightarrow a_i^2$, and
    $a_i^2 \rightarrow f$ for all $i$ as well as $f \to f$.
    \item
    The set of atomic propositions is $\AP = \{s, f\} \cup \{a_i \mid i \in \mathbb{N}\}$.
    \item
    The labeling over $\AP$ is given by $L(s) = \{s\}$, $L(f) = \{f\}$ and $L(a_i^h) = \{a_i\}$ for all $i,h$.
\end{itemize}
The abstraction $\cS^\button$ is given by collapsing the states $a_i^1$ and $a_i^2$ for all $i$. 
So, in  $\cS^\button$, we have   $[s] \rightarrow [a_i^1]=[a_i^2]$, $[a_i^1] \rightarrow [a_i^1]$, and  $[a_i^1] \rightarrow [f]$ for all $i$ as well as  $[f] \to [f]$.
   
 For any $i$, we define the CTL-sentence
\[\button_i = \exists \lnext (a_i \land \exists \lnext (a_i \land \forall \lnext f)).\]
This button holds at a transition system $\cS^\beta \abstracts \cR \abstracts \cT^\beta$ iff the states $a_i^1$ and $a_i^2$ are not collapsed to a single state in $\cR$.

Further, $\beta_i$ is a pure button: if it is false in some transition system $\cR_0$, then there exists a state $[a_i^1]$ with $a_i^1, a_i^2 \in [a_i^1]$ and by going to a refinement $\cR_1$ where the state is split into two, it becomes true and will remain true. Thus, it is also pure. Clearly, it is unpushed in $\cS$ and it does not interfere with any other pure buttons. 

\vspace{6pt}
\noindent
$\Button$\textbf{-restricted switches.}
We define $\cT^\switchB = (S, \rightarrow, \{s\}, \AP, L)$ where
\begin{itemize}
    \item
    The state space is $S = \{s, f\} \cup \{(j,h,k) \mid j \in \mathbb{N}, h \in \{\ast, 1, 2, 3\}, k \in \mathbb{N}\}$.
    \item The relation $\rightarrow$ is given by 
    \begin{itemize}
    \item
    $s \rightarrow (j,\ast,k)$ and  $(j,\ast,k) \rightarrow (j,\ast,k+1)$  for all $j,k$,
    \item
    $(j,\ast,k) \rightarrow (j,1,k)$,  $(j,1,k) \rightarrow (j,2,k)$, $(j,2,k) \rightarrow (j,3,k)$, and $(j,3,k) \rightarrow f$ for all $j,k$,
    \item $f \to f$.       
    \end{itemize}
    \item
    The set of atomic propositions is $\AP = \{s, f, t, b\} \cup \{b_j \mid j \ge 1\} \cup \{b_j^k \mid j \in \mathbb{N}, k \in \mathbb{N}\}$.
    \item
    The labeling over $\AP$ is given by $L(s) = \{s\}$, $L(f) = \{f\}$, $L((j,\ast,k)) = \{t\}$ for all $j,k$ and $L((j,h,k)) = \{b, b_j, b_j^k\}$ for all $h \in \{1, 2, 3\}$ and all $j,k$.
\end{itemize}
An excerpt of $\cT^\switchB$ for a fixed copy $k$ is sketched in Figure \ref{fig:switches_S4.2.1}.
The abstraction $\cS^\switchB$ is given by keeping states $s$, $f$, and $(j,\ast,k) $ for all $j,k$ as they are. The states $(j,1,k)$, $(j,2,k)$, and $(j,3,k)$, however, are mapped to a single state
$[(j,1,k)]$.
We define the CTL-sentence 
\[\Button = \exists \lnext (t \land \exists \lglobally (t \land \lnot \exists \lnext^2 f)).\]
This formula holds at $\cS^\switchB \abstracts \cR \abstracts \cT^\switchB$ iff there exists a $j$ such that for  all but finitely many $k$, the states $(j,1,k)$ and $(j,3,k)$
are not collapsed to a single state anymore. So, this is a pure button that is unpushed at $\cS^\switchB$.
 For any $j$, we now define the CTL-sentence
\[\switchB_j = \exists \lnext  \exists \lnext(
    b_j \land
    \lnot \exists \lnext f \land
    \lnot \exists \lnext (b_j \land \forall \lnext (b_j \land \forall \lnext f))).
\]
This holds at  $\cS^\switchB \abstracts \cR \abstracts \cT^\switchB$
iff there exists a $k$ such that $[(j,1,k)] = [(j,2,k)]$ or $[(j,2,k)] = [(j,3,k)]$, but not $[(j,1,k)] = [(j,3,k)]$. 

We claim that it is possible to switch $\switchB_j$ on and off as long as $\Button$ is false and without pushing it. Take a transition system $\cR_0$ where $\Button$ is false. If $\switchB_j$ is on at $\cR_0$, then we take for all $k$ the states $[(j,2,k)]$ with $(j,1,k) \in [(j,2,k)]$ xor $(j,3,k) \in [(j,2,k)]$ and split them into two to obtain a refinement $\cR_1$ where $\switchB_j$ is off. As we do not split up equivalence classes of three states, this will not push $\Button$. Now, consider the case that $\switchB_j$ is on at $\cR_0$. Since $\Button$ is false, there exists a $k$ such that $(j,1,k) \equiv (j,2,k) \equiv (j,3,k)$. Now, by splitting $(j,3,k)$ off, we reach a refinement $\cR_1$ where $\switchB_j$ is on. Since we only changed one state, this does not push $\Button$. Lastly, independence is clear as we use different labels for each $j$.
The whole argument is completely analogous for equivalence-class preserving and arbitrary refinement as the chosen transition systems encode the equivalence classes using atomic propositions.
\qed \end{proof}

\section{Proof of Theorem \ref{thm:upper}}
\label{app:upper}

\thmupper*

\begin{figure*}[t]
    \begin{subfigure}[b]{0.9\textwidth}
    \centering
    \resizebox{.9\textwidth}{!}{%
        \begin{tikzpicture}[scale=1,auto,node distance=4mm,>=latex]
            \tikzstyle{rect}=[draw=white,rectangle]
            \node[state, initial left, label=125:$\{s\}$]         (s) at ( 0, 0)  {$s$};
            \node[state, label=125:{$\{a_1\}$}]                   (a1) at (2.5, 1.5)  {$a_0$};
            \node[state, label=215:{$\{b_1\}$}]                   (b1) at (2.5, -1.5) {$b_0$};
            \node[state, label=90 :{$\{c\}$}]                   (c1) at (5, 0)      {$c_0$};
            \node[state, label=125:{$\{a_2\}$}]                   (a2) at (7.5, 1.5)  {$a_1$};
            \node[state, label=215:{$\{b_2\}$}]                   (b2) at (7.5, -1.5) {$b_1$};
            \node[]                                               (c2) at (10,0)      {$\cdots$};
            \node[state, label=125:{$\{a_n\}$}]                   (an) at (12.5, 1.5)  {$a_n$};
            \node[state, label=215:{$\{b_n\}$}]                   (bn) at (12.5, -1.5) {$b_n$};
            \node[state, label=90 :{$\{c\}$}]                   (cn) at (15,0)       {$c_n$};
            \draw
            (s)     edge[right]                                node{} (a1)
            (s)     edge[right]                                node{} (b1)
            (a1)    edge[right]                                node{} (c1)
            (b1)    edge[right]                                node{} (c1)
            (c1)    edge[right]                                node{} (a2)
            (c1)    edge[right]                                node{} (b2)
            (a2)    edge[dashed, right, shorten >=12pt]        node{} (c2)
            (b2)    edge[dashed, right, shorten >=12pt]        node{} (c2)
            (c2)    edge[dashed, right, shorten <=12pt]        node{} (an)
            (c2)    edge[dashed, right, shorten <=12pt]        node{} (bn)
            (an)    edge[right]                                node{} (cn)
            (bn)    edge[right]                                node{} (cn)
            ;  
        \end{tikzpicture}
    }
    \caption{The transition system $\cS_{n,m}^\button$.}
    \label{fig:buttonsMLAR}
    \end{subfigure}
    \begin{subfigure}[b]{0.6\textwidth}
    \centering
    \resizebox{.9\textwidth}{!}{%
        \begin{tikzpicture}[scale=1,auto,node distance=4mm,>=latex]
            \tikzstyle{rect}=[draw=white,rectangle]
            \node[state, label=90 :{$\{c\}$}]                   (cn) at (15,0)       {$c_n$};
            \node[state, label=125:{$\{d\}$}]                     (d1) at (17.5, 3)    {$d^0$};
            \node[state, label=125:{$\{d\}$}]                     (d2) at (17.5, 0)    {$d^1$};
            \node[]                                               (d3) at (17.5, -3)   {$\vdots$};
            \node[state, label=125:{$\{e_1\}$}]                   (e1) at (20, 5)    {$e_0^0$};
            \node[state, label=125:{$\{e_2\}$}]                   (e2) at (20, 3)      {$e_1^0$};
            \node[]                                               (e3) at (20, 1)    {$\vdots$};
            \node[]                                               (e4) at (20, 2)    {};
            \node[]                                               (e5) at (20, 0)      {};
            \node[]                                               (e6) at (20, -2)    {};
            \node[]                                               (ex) at (22,  3)    {};
            \node[]                                               (ey) at (22, -3)    {};
            \node[state, label=180:{$\{f\}$}]                      (f) at (24, 0) {$f$};
            \draw
            (cn)    edge[right]                                node{} (d1)
            (cn)    edge[right]                                node{} (d2)
            (cn)    edge[dashed, right, shorten >=12pt]        node{} (d3)
            (d1)    edge[below]                                node{} (d2)
            (d1)    edge[dashed, below, bend right, shorten >=12pt] node{} (d3)
            (d2)    edge[dashed, below, shorten >=12pt]        node{} (d3)
            (d1)    edge[right]                                node{} (e1)
            (d1)    edge[right]                                node{} (e2)
            (d1)    edge[dashed, right, shorten >=12pt]        node{} (e3)
            (d2)    edge[dashed, right, shorten >=48pt]        node{} (e4)
            (d2)    edge[dashed, right, shorten >=36pt]        node{} (e5)
            (d2)    edge[dashed, right, shorten >=48pt]        node{} (e6)
            (ex)    edge[dashed, right, shorten <=30pt]        node{} (f)
            (ey)    edge[dashed, right, shorten <=30pt]        node{} (f)
            (f)     edge[in=335,out=25,loop]                  node{} (f)
            ;  
        \end{tikzpicture}
    }
    \caption{The transition system $\cS_{n,m}^\switchB$.}
    \label{fig:switchesMLAR}
    \end{subfigure}
    \caption{The components of the transition system used in the proof of Theorem \ref{thm:upper}.}
\end{figure*}
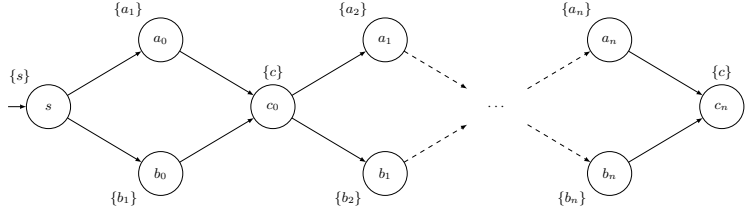
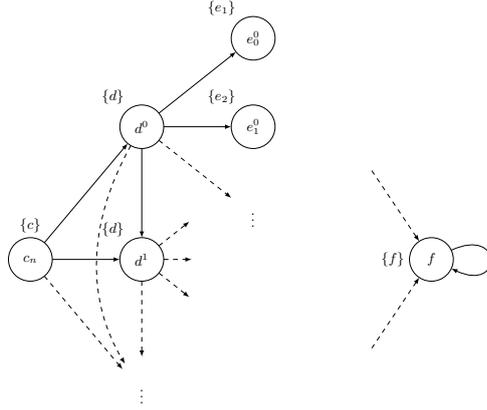

\begin{proof}
Using \autoref{cor:S4.2.1}, we construct a class of transition systems $\{\cS_{n,m}\}_{n,m \in \mathbb{N}}$ such that there is a set of unpushed pure buttons $\buttonSet = \{\button_i \mid 0 \le i \le n-1\} \cup \{\Button\}$ and a set of $\Button$-restricted switches $\switchSet = \{\switchB_j \mid 0 \le j \le m-1\}$ in $G_{\mathfrak{A}, \abstracts, \mathrm{CTL}}$ at $\cS_{n,m}$ where $\buttonSet \cup \switchSet$ is independent until $\Button$. To prevent inference, we again split the transition systems $\cS_{n,m} = \cS_{n,m}^\button \cup \cS_{n,m}^\switchB$ into different parts for the different types of control statements where the respective parts have a linking state $c_n$ in common. The general idea is to limit possible refinements so, we construct $\cS_{n,m}$ to not have cycles other than the self-loop at the finale state $f$. So, any refinement captured by CTL only makes states unreachable. 

\vspace{6pt}
\noindent
\textbf{Unpushed pure buttons.}
We define $\cS_{n,m}^\button = (S, \rightarrow, \{s\}, \AP, L)$ depicted also in Figure \ref{fig:buttonsMLAR} where
\begin{itemize}
    \item
    The state space is $S = \{s\} \cup \{a_i \mid i \le n-1\} \cup \{b_i \mid i \le n-1\} \cup \{c_i \mid i \le n-1\}$.
    \item The relation $\rightarrow$ is given by 
    \begin{itemize}
    \item
    $s \rightarrow a_1$ for all $i \le n-1$,
    \item
    $s \rightarrow b_1$ for all $i \le n-1$,
    \item
    $a_i \rightarrow c_i$ for all $i \le n-1$,
    \item
    $b_i \rightarrow c_i$ for all $i \le n-1$,
    \item
    $c_i \rightarrow a_{i+1}$ for all $i \le n-2$,
    \item
    $c_i \rightarrow b_{i+1}$ for all $i \le n-2$.
    \end{itemize}
    \item
    The set of atomic propositions is $\AP = \{s, c\} \cup \{a_i \mid i \le n-1\} \cup \{b_i \mid i \le n-1$\}.
    \item
    The  labeling over $\AP$ is given by $L(s) = \{s\}$, $L(a_i) = \{a_i\}$ for all $i \le n-1$, $L(b_i) = \{b_i\}$ for all $i \le n-1$ and $L(c_i) = \{c\}$ for all $i \le n-1$.
\end{itemize}

For any $i \le n-1$, we define the CTL-sentence
\[\button_i = \exists \lfinally (\forall \lnext a_i \lor \forall \lnext b_i),\]
which holds at a refinement $\cT$ of $\cS_{n,m}$ iff $a_i$ or $b_i$ is not reachable.
It is easy to see that $\button_i$ is a pure button and unpushed at $\cS_{n,m}$ since it is always possible to simply remove either $a_i$ or $b_i$ to push $\button_i$. Independence is clear as we use different labels.

\vspace{6pt}
\noindent
$\Button$\textbf{-restricted switches.}
We define $\cS_{n,m}^\switchB = (S, \rightarrow, \emptyset, \AP, L)$ depicted also in Figure \ref{fig:switchesMLAR} where
\begin{itemize}
    \item
    The state space is $S = \{c_n, f\} \cup \{d^k \mid k \in \mathbb{N}\} \cup \{e_j^k \mid j \le m-1, k \in \mathbb{N}\}$.
    \item The relation $\rightarrow$ is given by 
    \begin{itemize}
    \item
    $c_n \rightarrow d^k$ for all $k$,
    \item
    $d^k \rightarrow d^{k'}$ for all $k \le k'$,
    \item
    $d^k \rightarrow e_j^k$ for all $j \le m-1$ and all $k$,
    \item
    $e_j^k \rightarrow f$ for all $j \le m-1$ and all $k$, and $f \to f$.
    \end{itemize}
    \item
    The set of atomic propositions is $\AP = \{c, d, f\} \cup \{e_j : j \ge 1\}$.
    \item
    The  labeling over $\AP$ is given by $L(c_n) = \{c\}$, $L(d^k) = \{d\}$ for all $k$, $L(e_j^k) = \{e_j\}$ for all $j \le m-1$ and all $k$.
\end{itemize}

We define the CTL-sentence
\[B = \bigwedge_{j \le m-1} \exists \lfinally \exists \lglobally (d \land \lnot \exists \lnext e_j),\]
which holds in a refinement $\cT$ of $\cS_{n,m}$ iff there exists a $j \le m-1$ and there exists infinitely many $k$ such that $e_j^k$ is not reachable. So, $\Button$ is a unpushed pure button.

For any $j \le m-1$, we define the CTL-sentence
\[\switchB_j = \exists \lfinally (d \land \lnot \exists \lnext e_j),\]
which holds in a refinement $\cT$ of $\cS_{n,m}$ iff there exists a $k$ such that $e_j^k$ is not reachable.

This is a $\Button$-restricted switch: Consider a transition system $\cT_0$ where $\Button$ is false. Start with the case that $\switchB_j$ is off at $\cT_0$. Since $\Button$ is false at $\cT_0$, there exists a $k$ such that $e_j^k$ is reachable. Remove it and we have refinement $\cT_1$ where $\switchB_j$ is on. Since we only removed one state, this does not push $\Button$. Now, if $\switchB_j$ is on at $\cT_0$, things are a bit more tricky. Define a set $X \subseteq \mathbb{N}$ where for all $k$, $k \in X$ iff $e_j^k$ is not reachable. Then remove $d^k$ for all $k \in X$ and we get a refinement $\cT_1$ where $\switchB_j$ is off. Since $\Button$ is false at $\cT_0$, $X$ is only of finite size, so, removing all of these states will not push $\Button$. However, doing so, we might unintentionally switch other switches off, as well. Thus, we have to turn them on, again, using the procedure from above. So, in the end it is still possible to switch them independently.
\qed \end{proof}

\section{Proof of Theorem \ref{thm:S4FPF}}
\label{app:FPF}

\thmFPF*

\begin{proof}
Since $\mathsf{S4FPF}$ is characterized by the class of all finite partial function posets, if a formula $\phi \notin \mathsf{S4FPF}$, then there is a finite partial function poset $P = (F,\preccurlyeq)$ on $n$ elements, a valuation $V$ on $P$ and a initial state $f_0 \in F$ with $f_0(i) = \bot$ for all $0 \le i \le n-1$ such that $(F,V),w_0 \nSat \phi$.

We define formulas $\phi_{f} \in \mathcal{L}$ for every finite partial function $f \in F$
\[\phi_{f} = 
( \bigwedge\limits_{\substack{0 \le i \le n-1 \\ f(i) = 0}} \weakButton_i ) \land 
( \bigwedge\limits_{\substack{0 \le i \le n-1 \\ f(i) = 1}} \weakButtonAlt_i ).\]

It remains to show that $f \mapsto \phi_{f}$ is a $P$-labeling for $\GFw$ in $\GF$. For this, we check whether we meet the three labeling requirements:
\begin{enumerate}
    \item Clearly, every $\GFv \in \GFW$ satisfies exactly one of these formulas.
    \item Since the decisions are independent at $\GFw$, for every finite partial functions $g,h \in F$ and $\GFv \in \GFW$, if $\GFv \sat \phi_{g}$ then there is an $\GFu \in \GFW$ with $\GFv \GFR \GFu$ and $\GFu \sat \phi_{h}$ iff $g \preccurlyeq h$.
    \item Then, $\GFw \sat \phi_{f_0}$ since all decisions are unpushed at $\GFw$.
\end{enumerate}
By \autoref{lem:labeling}, since $(P,V),f_0 \nSat \phi$, we have $\phi \notin \Lambda_\GF(\GFw)$. So, for any formula $\phi \notin \mathsf{S4FPF}$, we have $\phi \notin \Lambda_\GF(\GFw)$ and thus $\Lambda_\GF(\GFw) \subseteq \mathsf{S4FPF}$.
\qed \end{proof}

The following formulation can be useful when it is not possible to find single states $c\in \cC$ with arbitrarily large independent sets of decisions.

\begin{corollary}
    Let $\GF$ be the general Kripke frame w.r.t. some $\GFW$, $\GFR$ and $\mathcal{L}$ as in \autoref{def:G}. If for any $n \in \mathbb{N}$, there is a state $\GFw \in \GFW$ and a set of unpushed decisions $\buttonSet = \{(\weakButton_i, \weakButtonAlt_i) \mid 0 \le i \le n-1\}$ in $\GF$ at $\GFw$ s.t. $\buttonSet$ is independent, then $\Lambda_\GF \subseteq \mathsf{S4FPF}$.
\end{corollary}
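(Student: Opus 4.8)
The plan is to mirror the proof of Theorem~\ref{thm:S4FPF} almost verbatim, exploiting the fact that the number of decisions one needs is dictated by the formula to be refuted rather than fixed in advance. The only genuine change from the theorem is where the required independent family of decisions lives: here it may sit at a world $\GFw$ that depends on $n$, instead of at a single world serving all $n$ simultaneously. This is exactly the relaxation already carried out for $\mathsf{S4.2.1}$ in Corollary~\ref{cor:S4.2.1}.

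First I would start from an arbitrary modal formula $\phi \notin \mathsf{S4FPF}$. Since $\mathsf{S4FPF}$ is defined as the logic of all finite partial function posets, there is some fixed $n \in \mathbb{N}$, a finite partial function poset $P = (F,\preccurlyeq)$ on $n$ elements, a valuation $V$ on $P$, and the everywhere-undefined bottom function $f_0$ with $(P,V), f_0 \nSat \phi$. The point worth stressing is that this $n$ is determined by $\phi$ alone. Next I would invoke the hypothesis for precisely this $n$, obtaining a world $\GFw \in \GFW$ carrying an independent set of unpushed decisions $\buttonSet = \{(\weakButton_i, \weakButtonAlt_i) \mid 0 \le i \le n-1\}$. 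Using the assignment $\phi_f = (\bigwedge_{i:\, f(i)=0} \weakButton_i) \land (\bigwedge_{i:\, f(i)=1} \weakButtonAlt_i)$ exactly as in the proof of Theorem~\ref{thm:S4FPF}, the three $P$-labeling conditions hold: mutual exclusivity of each decision makes every structure satisfy exactly one $\phi_f$; independence of the decisions reproduces the reachability pattern matching $\preccurlyeq$; and all decisions being unpushed at $\GFw$ yields $\GFw \sat \phi_{f_0}$. Lemma~\ref{lem:labeling} then gives $\phi \notin \Lambda_\GF(\GFw)$.

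The final, and only, new step is the passage from $\Lambda_\GF(\GFw)$ to $\Lambda_\GF$. Since validity on the whole general frame means validity at every world, we have $\Lambda_\GF \subseteq \Lambda_\GF(\GFw)$ for each $\GFw \in \GFW$; hence $\phi \notin \Lambda_\GF(\GFw)$ already entails $\phi \notin \Lambda_\GF$. As $\phi$ was an arbitrary non-theorem of $\mathsf{S4FPF}$, this establishes $\Lambda_\GF \subseteq \mathsf{S4FPF}$. I do not anticipate any real obstacle: the substance is entirely contained in the already-proved Theorem~\ref{thm:S4FPF} and Lemma~\ref{lem:labeling}, and the one subtlety to record carefully is that $n$ depends only on $\phi$, so allowing the witnessing world $\GFw$ to vary with $n$ costs nothing.
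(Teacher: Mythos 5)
Your proposal is correct and follows essentially the same route as the paper's own proof: fix $\phi \notin \mathsf{S4FPF}$, extract the poset size $n$ from the refuting finite partial function poset, apply the hypothesis at that $n$ to get a witnessing world $\GFw$, reuse the $P$-labeling from Theorem~\ref{thm:S4FPF} together with Lemma~\ref{lem:labeling} to get $\phi \notin \Lambda_\GF(\GFw)$, and conclude $\phi \notin \Lambda_\GF$ since frame validity requires validity at every world. The subtlety you flag---that $n$ is determined by $\phi$ alone, so the world may vary with $n$---is exactly the point the paper's argument rests on.
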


\begin{proof}
    By the same argument as in the previous proof, for any $n \in \mathbb{N}$, there exists a state $\GFw \in \GFW$ such that $f \mapsto \phi_{f}$ is a $P$-labeling for $\GFw$ in $\GF$. By \autoref{lem:labeling}, since $(P,V),f_0 \nSat \phi$, we have $\phi \notin \Lambda_\GF(\GFw)$ and thus, $\phi \notin \Lambda_\GF$. So, for any formula $\phi \notin \mathsf{S4FPF}$, we have $\phi \notin \Lambda_\GF$ and thus, $\Lambda_\GF \subseteq \mathsf{S4FPF}$.
\qed \end{proof}

\end{appendix}

\end{document}